\renewcommand{\thefootnote}{\fnsymbol{footnote}}
\renewcommand{\thetable}{\Roman{table}} % "Each table should be numbered in sequence using roman numerals. Tables should also have a title above and, if required, an explanatory footnote below."
\renewcommand{\thefigure}{\Roman{figure}} % The figure number (i.e., Figure I) should be centered and should use roman numerals. Titles of figures should use capitals and lowercase letters and be centered below the figure number.
\newcommand{\tabitem}{~~\hspace*{5pt}\llap{\textbullet}~~} % for bullets in Table I table notes since itemize creates a lot of additional white space 
\newcounter{daggerfootnote}
\newcommand*{\daggerfootnote}[1]{%
    \setcounter{daggerfootnote}{\value{footnote}}%
    \renewcommand*{\thefootnote}{\fnsymbol{footnote}}%
    \footnote[2]{#1}%
    \setcounter{footnote}{\value{daggerfootnote}}%
    \renewcommand*{\thefootnote}{\arabic{footnote}}%
    }
\providecommand{\tabularnewline}{\\}
\theoremstyle{plain}
\newtheorem{assumption}{\protect\assumptionname}
\theoremstyle{plain}
\newtheorem{prop}{\protect\propositionname}
\pgfplotsset{compat=1.16}
\providecommand{\assumptionname}{Assumption}
\providecommand{\propositionname}{Proposition}
\begin{document}
\pagestyle{plain}

\thispagestyle{empty}

\setcounter{footnote}{0}

\begin{titlepage}
\noindent \begin{center}
~\\
%{\LARGE{}Visual Inference and Graphical Representation in }\\
%{\LARGE{}Regression Discontinuity Designs}\\
{\Large{}VISUAL INFERENCE AND GRAPHICAL REPRESENTATION IN }\\
{\Large{}REGRESSION DISCONTINUITY DESIGNS\footnote[1]{We are grateful for the insightful and constructive comments from two co-editors, Larry Katz and Andrei Shleifer, and four anonymous reviewers. We have also benefited from discussions with Alberto Abadie, Sahara Byrne, Colin Camerer, Matias Cattaneo, Damon Clark, Geoff Fisher, Paul Goldsmith-Pinkham, Nathan Grawe, Jessica Hullman, David Lee, Lars Lefgren, Thomas Lemieux, Pauline Leung, Jia Li, Adam Loy, Alex Mas, Doug Miller, Ted O'Donoghue, Bitsy Perlman, Steve Pischke, Jonathan Roth, Jesse Rothstein, Rocio Titiunik, Cindy Xiong, Stephanie Wang, Andrea Weber, and Xiaoyang Ye, as well as participants of various seminars and conferences. Lexin Cai, Matt Comey, Michael Daly, Rebecca Jackson, Motasem Kalaji, Xingyue Li, Fiona Qiu, and Tatiana Velasco provided research assistance, and we thank Brad Turner, Mary Ross, and Patti Tracey for providing logistical support. We are indebted to our friends for testing the experiment and to colleagues for participating in the study. We gratefully acknowledge financial support from the Cornell Institute of Social Sciences and the Princeton Industrial Relations Section. The authors have no relevant or material financial interests that relate to the research of this paper. This study is registered in the Open Science Framework and the AEA RCT Registry with ID AEARCTR-0004331. Any opinions and conclusions expressed herein are those of the authors and do not reflect the views of the U.S.\ Census Bureau.}}\\
\par\end{center}

\begin{singlespace}
\begin{center}
{\large{}Christina Korting, University of Delaware}\\
{\large{}Carl Lieberman, U.S. Census Bureau}\\
{\large{}Jordan Matsudaira, Columbia University}\\
{\large{}Zhuan Pei\daggerfootnote{Corresponding author. Associate Professor, Department of Economics and Jeb E. Brooks School of Public Policy, Cornell University, Ithaca, NY 14853, USA; zhuan.pei@cornell.edu}, Cornell University and IZA}\\
{\large{}Yi Shen, University of Waterloo}\\
\end{center}
\end{singlespace}

\begin{center}
{\large{}January 2023}{\large\par}
\par\end{center}

\begin{abstract}
\begin{singlespace}

Despite the widespread use of graphs in empirical research, little is known about readers' ability to process the statistical information they are meant to convey (``visual inference''). We study visual inference within the context of regression discontinuity (RD) designs by measuring how accurately readers identify discontinuities in graphs produced from data generating processes calibrated on 11 published papers from leading economics journals. First, we assess the effects of different graphical representation methods on visual inference using randomized experiments. We find that bin widths and fit lines have the largest impacts on whether participants correctly perceive the presence or absence of a discontinuity. Our experimental results allow us to make evidence-based recommendations to practitioners, and we suggest using small bins with no fit lines as a starting point to construct RD graphs. Second, we compare visual inference on graphs constructed using our preferred method with widely used econometric inference procedures. We find that visual inference achieves similar or lower type I error (false positive) rates and complements econometric inference.
\bigskip{}

Key Words: Graphical Methods; Visual Inference; Regression Discontinuity Design; Expert Prediction; Scientific Communication\bigskip{}

JEL Code: A11, C10, C40

\end{singlespace}
\end{abstract}
%~

\end{titlepage}
\renewcommand{\thefootnote}{\arabic{footnote}}
\newpage{}

\setcounter{footnote}{0}

\setlength{\abovedisplayskip}{3pt}

\setlength{\belowdisplayskip}{3pt}

\setlength{\abovedisplayshortskip}{3pt}

\setlength{\belowdisplayshortskip}{3pt}

\section{Introduction\label{sec:Introduction}}
\begin{center}
``Few would deny that the most powerful statistical tool is graph
paper.''\\
--- Geoffrey S. \citet{watson1964smooth}
\par\end{center}

Graphical analysis is increasingly prevalent in empirical research, a phenomenon \citet{currie2020technology} call the ``graphical revolution.'' Effective use of graphs conveys a large set of statistical information at once and improves research transparency (\citealp{andrews2020transparency}). However, there are different ways to construct a graph with the same data, and the particular construction an analyst chooses has the potential to mislead readers (\citealp{schwartzstein2021using}). To understand the best use of graphical evidence, it is important to study readers' ability to process information from graphs---which we term ``visual statistical inference'' or \emph{visual inference} per \citet{Majumderetal2013}---as well as the sensitivity of visual inference to choices in graph construction. To date, little is known about visual inference for commonly presented graphs in empirical research designs.

We begin to fill this knowledge gap and study visual inference in the regression discontinuity design (RDD or RD design). The popularity of RDD in the modern causal inference toolkit, which began in economics with \citet{AngristandLavy1999}, makes it an important setting in which to study visual inference. Standard practices in applying RDD today perhaps best embody the spirit of Watson's quote above, with graphs playing a central role in the presentation of findings. The key RD graph plots the bivariate relationship between outcome variable $Y$ and running variable $X$ and is meant to display a discontinuity (or lack thereof) in the underlying conditional expectation function (CEF) as $X$ crosses a policy threshold. Influential practitioner guides by \citet{ImbensandLemieux2008},  \citet{LeeandLemieux2010}, and \citet{cattaneo2019practical} recommend creating this graph by dividing $X$ into bins, computing the average of $Y$ within each bin, and generating a scatter plot of these $Y$-averages against the midpoints of the bins.

We assess the performance of visual inference by studying whether people presented with this graph can accurately extract the embedded statistical information, where our main criterion is the correct identification of the existence or absence of a discontinuity at the policy threshold. Our project has two major components. In the first, we build on work pioneered by \citet{eells1926relative} and refined by \citet{ClevelandandMcGill1984b} by conducting a series of randomized experiments to examine how different graphical parameters affect visual inference in RD. We present participants recruited through the Cornell University Johnson College's Business Simulation Lab with RD graphs produced from data generating processes (DGPs) based on microdata from 11 published papers that we randomly select from a list of 110 empirical studies from top economics journals. For each graph, we ask participants to identify the existence or absence of a discontinuity. We randomize respondents into different treatment arms and show participants within each arm graphs produced with particular graphical parameters such as small bin widths and evenly spaced bins.

There is limited research on how to choose these parameters in practice. For example, \citet{Calonicoetal2015} propose two popular data-driven bin width selectors: one that minimizes the integrated mean squared error (IMSE) of the bin averages, resulting in fewer, larger bins, and another that mimics the variability of the underlying data (mimicking variance or MV), which leads to more, smaller bins. While both proposals set a graphical parameter to satisfy an econometric criterion, practitioners are left with little basis to choose between them. Moreover, a host of other choices over graphical parameters remains with minimal guidance from the literature, such as including smoothed regression lines in the binned scatter plot, adding a vertical line to indicate the policy threshold, and choosing the axis scales.

By comparing the rates at which respondents correctly classify discontinuities across treatment arms, we can assess the advantages and disadvantages of different graphical parameters. We find that certain graphical parameters such as bin width and smoothed regression lines create important tradeoffs between type I errors (identifying a discontinuity when there is none, i.e., a false positive) and type II errors (identifying the absence of a discontinuity when there is one, i.e., a false negative). Relative to MV (small) bins, using IMSE (large) bins tends to increase type I error rates but decrease type II error rates. Similarly, imposing fit lines may also increase type I error rates, echoing the concerns of \citet{ct2021nber,cattaneo2021regression}.

To translate our findings to recommendations on graphical practices in RDDs, we empirically implement a decision theoretic framework that incorporates classification accuracy as a metric to compare graphical methods. The method that uses MV bins with no fit lines consistently performs well relative to IMSE bins and/or imposing fit lines. Bin spacing (equally spaced versus quantile-spaced), axis scaling, and the presence of a vertical line indicating the policy threshold do not appear to matter, implying that researchers can adhere to reasonable personal preferences. Our recommendation is robust to alternative decision theoretic framework formulations. 

Because only non-experts participate in our randomized experiments on the effects of graphical methods, one may be concerned that our results are less relevant to academic audiences. To assess whether our findings generalize across experience levels, we also recruit experts from a pool of seminar attendees and affiliates of the National Bureau of Economic Research (NBER) and the Institute of Labor Economics (IZA) to participate in our study. Although our expert sample is not large enough to conduct the same randomized experiments, we can compare the non-expert and expert results by using the subset of non-experts who saw the same graphs as the experts. We find that the two groups perform comparably. 

In the spirit of \citet{DellaVignaandPope2018}, we also test whether experts are able to predict the graphical parameters that result in the highest rate of visual inference success by non-experts. We find that experts only partly anticipate the aforementioned effects of bin widths and fit lines.

As a second major component of the project, we compare the performances of visual inference and econometric inference. For visual inference, we use results from the sample of experts who viewed graphs constructed with the best-performing technique from our experiments. For econometric inference, we apply three influential methods by \citet{ImbensKalyanaraman2012}, \citet{Calonicoetal2014}, and \citet{ArmstrongandKolesar2017} (henceforth IK, CCT, and AK, respectively) and conduct hypothesis testing at the 5\% (asymptotic) level. We find that visual inference achieves a type I error rate that, at just below 8\%, is lower than the IK and CCT procedures (the CCT type I error rate is not significantly higher), but the two econometric procedures enjoy considerably lower type II error rates. Visual inference performs very similarly to the AK procedure, a remarkable result given the minimax optimality property of AK.    

We also find that visual and econometric inferences appear to be complementary. First, we examine the joint distribution of visual and econometric tests: while they commit similar type II errors, there does not appear to be a strong association in their type I errors. Second, we assess the performance of a combined visual and econometric inference. One simple way of combining the two inferences mirrors an approach in which a researcher believes a discontinuity exists if and only if a formal test rejects the null hypothesis of no effect \emph{and} she sees a discontinuity in the RD graph. We find that the combined IK and visual inference performs similarly to the AK procedure, which may help explain the enduring credibility of the RD design despite formal inference issues in earlier RD papers.

Finally, we ask experts to estimate the discontinuity magnitude when they classify a discontinuity, and we compare the accuracy of their estimates to that of econometric methods. On this front, econometric methods tend to do better. For example, the simple local linear IK estimator yields lower mean squared errors than experts across all 11 DGPs, shedding light on the limits of visual inference.

This paper connects a diverse set of literatures and makes the following contributions. First, we begin to fill an important gap in our understanding of graphical evidence by evaluating visual inference and graphical representation practices in a widely used quasi-experimental research design. Our endeavor draws from three strands of the statistics literature that study the choice of graphical parameters (e.g., \citealp{Calonicoetal2015}, \citealp{li2020essential}), their effects on visual inference (e.g., \citealp{ClevelandandMcGill1984b}), and the evaluation of visual inference through comparison with econometric inference (e.g., \citealp{Majumderetal2013}). Our paradigm can be applied to other important areas, as discussed in Section \ref{sec:Conclusion}.

Second, to guide our study design and to help interpret our empirical results, we propose a general conceptual framework, which may extend to future studies of visual inference in other contexts. In particular, we can interpret the average type I or II error rate we use as an estimate of the probability that a randomly sampled reader commits such an error when viewing a graph generated from a randomly chosen DGP. We show that these error probabilities are key inputs in a standard decision theoretic framework, which helps inform best graphical practices. We also discuss alternative decision theoretic formulations and make connections to the recent literature on scientific communication (e.g., \citealt{andrews2020model}).

Third, we add to the literature on expert judgments (e.g., \citealp{CamererandJohnson1997}) and expert forecasts of research results (e.g., \citealp{sanders2015just}). Our finding that experts only partly anticipate our experimental results underscores the value of empirically evaluating visual inference and providing evidence-based guidance on graphical methods.

We introduce the conceptual framework in Section \ref{sec:conceptual-framework}, describe the design of our experiments and studies in Section \ref{sec:overall-description}, present results in Section \ref{sec:Results}, and conclude in Section \ref{sec:Conclusion}. For readers in a hurry, the key takeaway results are in Figures \ref{fig:RDD-Power} and \ref{fig:RDD-Power-Experts-vs-All} with corresponding discussions in Sections \ref{subsec:non-expert-results} and \ref{sec:Visual-vs-Econometric}.

\section{Conceptual Framework\label{sec:conceptual-framework}}

In this section, we propose a conceptual framework for evaluating visual inference to guide our study design and aid in the interpretation of our empirical results. More specifically, we show how to aggregate visual inference performances across subjects, who may reach different conclusions even when viewing the same graph, and how to meaningfully interpret the parameter to which our aggregate measure corresponds. This performance measure helps to inform best graphical practices. 

Although RD graphs may serve other purposes, we view their most important function as accurately conveying discontinuity existence and magnitude at the policy threshold. According to \citet{LeeandLemieux2010}, other purposes of RD graphs include i) helping to assess regression specifications and ii) allowing for the inspection of discontinuities away from the policy cutoff. But ultimately, these other functions are also motivated by inference on the discontinuity at the policy threshold: i) can be viewed as reconciling visual and econometric inferences thereof and ii) informs the reader, under implicit global homogeneity assumptions, whether to believe the existence of a discontinuity at the policy threshold. 

We focus on binary classifications of a graph and treat type I and type II errors as the main performance measures for visual inference. The conceptual framework easily generalizes to assessing visual estimates of the discontinuity magnitude, which we elicit from experts. A person commits a type I error in RD visual inference if she classifies a continuous graph as having a discontinuity and a type II error if she classifies a discontinuous graph as continuous.

To define our aggregate measures of visual inference performance, we introduce the following notations. First, the vector $\gamma$ denotes a combination of graphical parameters (see \citealp{wilkinson2013grammar} for an extensive list). We study five parameters in this paper (bin width, bin spacing, axis scaling, polynomial fit lines, and a vertical line at the policy threshold), and each of the five entries of $\gamma$ represents the value of a particular parameter.\footnote{In our experiments, we randomly assign each participant to view only graphs generated with a certain fixed value of $\gamma$. Ideally, one could run a large experiment with a full factorial design to test all combinations of the graphical values outlined above, but resource constraints force us to test a subset of the graphical parameter space via a sequence of studies as described in Section \ref{sec:primary-experiment-general}.}

The combination $(g,d)$ denotes the probability model underlying an RD dataset. $g$ encompasses four elements: i) the distribution of the running variable $X$; ii) the conditional expectation function $E[\tilde{Y}|X=x]$ which is continuous at the policy threshold $x=0$; iii) the distribution of the error term $u$ where $\tilde{Y}\equiv E[\tilde{Y}|X=x]+u$; and iv) the sample size $N$. Intuitively, $g$ specifies everything in the probability model except for the discontinuity, including the shape of the conditional expectation function. The discontinuity then results from shifting the right arm of the smooth function $E[\tilde{Y}|X=x]$ by some discontinuity level $d$, that is, $Y=\tilde{Y}+d\cdot1_{[X\geqslant0]}$ (which implies that $E[Y|X=x]=E[\tilde{Y}|X=x]+d\cdot1_{[X\geqslant0]}$; we provide a graphical illustration in Section \ref{sec:creation-datasets-graphs}). We note that the variable $Y$ can represent the outcome, baseline covariates, or treatment take-up, so this framework applies to all graphs typically included in RD studies, including those from a fuzzy design. Typically, the $(g,d)$ combination is jointly referred to as the ``data generating process,'' but we separate the discontinuity level $d$ and call $g$ the DGP for ease of exposition below. 

We think of each (bivariate) RD dataset as a realization from the probability model $(g,d)$, which we denote by $W$, or $W(g,d)$ if we want to emphasize the underlying probability model. Implementing a graphical procedure with parameters $\gamma$ on dataset $W$ results in an RD graph $(\gamma,W)$, which we denote by $T$ or $T(\gamma,g,d)$. Alternatively, we can think of $T$ as a realization from $(\gamma,g,d)$ and refer to $(\gamma,g,d)$ as the graph generating process (GGP).

When presented with the same RD graph, readers may draw different visual inferences. For example, some readers may be more skilled than others at classifying a discontinuity because they have received more training in statistics, have more experience with RD graphs, or otherwise have superior ability. We use $\phi$ to capture these human characteristics that affect graph perception.

The probability that a reader with characteristics $\phi$ reports that a discontinuity exists in RD graph $T(\gamma,g,d)$ is denoted by $\tilde{p}(T(\gamma,g,d),\phi)$. From casual observation, we know that the same reader may be influenced by idiosyncratic elements not encapsulated in $\phi$ and classify the same graph differently on different days. The probability formulation $\tilde{p}$ allows these factors to affect visual inference.

We now define the type I and type II error probabilities we use to gauge reader performance. First, averaging $\tilde{p}$ over both data realizations $W$ and reader characteristics $\phi$ leads to the quantity 
\[
p(\gamma,g,d)\equiv E_{W,\phi}[\tilde{p}(T(\gamma,g,d),\phi)].
\]
This is the probability that a randomly chosen reader reports a discontinuity in a graph randomly generated from the GGP $(\gamma,g,d)$. A high value of $p$ indicates a high classification error probability when the true discontinuity $d$ is zero (type I error), but a low classification error probability when $d$ is nonzero (type II error). Formally, the DGP-specific or $g$-specific type I and type II error probabilities for graphical parameters $\gamma$ are defined as:
\begin{align*}
g\text{-specific }\text{type I error probability: } & p(\gamma,g,0)\\
g\text{-specific }\text{type II error probability: } & 1-p(\gamma,g,d)\text{ for }d\neq0.
\end{align*}
Conceptually, we can further average $p(\gamma,g,d)$ over the space of DGPs, $\mathcal{G}$ (we discuss $\mathcal{G}$ after Assumption \ref{assu:Assum_iid} below) to arrive at the \emph{overall} discontinuity classification probability for $\gamma$:
\[
\bar{p}(\gamma,d)\equiv E_{g\in\mathcal{G}}[p(\gamma,g,d)].
\]
Correspondingly, the overall type I and type II error probabilities are defined as
\begin{align*}
\text{overall type I error probability: } & \bar{p}(\gamma,0)\\
\text{overall type II error probability: } & 1-\bar{p}(\gamma,d)\text{ for }d\neq0.
\end{align*}
Consistent with the definitions in \citet[p.~382]{casella2021statistical}, we call $p(\gamma,g,d)$ and $\bar{p}(\gamma,d)$ \emph{power functions} as functions of $d$.

In this paper, we design experiments to estimate the type I and type II error probabilities as defined above. For each GGP $(\gamma,g,d)$, we generate $M$ different realized graphs and present each to a random participant. That is, participant $i$ is shown one RD graph denoted by $T_i(\gamma,g,d)$, where $i$ takes on values in the set $\{1,...,M\}$, and is asked to assess the presence of a discontinuity. Let the binary variable $R_{i}(T_i(\gamma,g,d))$ denote participant $i$'s discontinuity classification, which equals one if the participant reports a discontinuity at the policy threshold. Under random sampling, the following assumption holds:
\begin{assumption}
\label{assu:Assum_iid}For a given GGP $(\gamma,g,d)$, the $R_{i}(T_{i}(\gamma,g,d))$'s
are i.i.d.\ with $E[R_{i}(T_{i}(\gamma,g,d))]=p(\gamma,g,d).$
\end{assumption}
A natural estimator for $p(\gamma,g,d)$ is the sample average of discontinuity classifications:
\[
\hat{p}(\gamma,g,d)=\frac{1}{M}\sum_{i}R_{i}(T_{i}(\gamma,g,d)).
\]
Proposition \ref{prop:g} in Online Appendix \ref{sec:Proofs} states the distribution of $\hat{p}(\gamma,g,d)$ and shows the estimator to be unbiased and consistent as $M\to\infty$ for $p(\gamma,g,d)$ under Assumption \ref{assu:Assum_iid}.

To estimate the overall probability $\bar{p}(\gamma,d)$, we need to sample from the DGP space $\mathcal{G}$, which we formally define in Online Appendix \ref{subsec:DGP-space}. While the infinite dimensionality of $\mathcal{G}$ makes it difficult to characterize the distribution of DGPs, we think of the data used in empirical RD research as realizations when sampling from $\mathcal{G}$ according to this distribution. To that end, we can specify $J$ DGPs that approximate data from existing research and present graphs generated with discontinuity \textit{$d$} for each DGP $g_{j}$ ($j=1,...,J$) to a distinct group of $M$ participants for a total of $M\cdot J$ participants and visual discontinuity classifications.
\begin{assumption}
\label{assu:Assum_g}The DGP $g_{j}$'s are randomly sampled from
$\mathcal{G}$.
\end{assumption}
A natural estimator for $\bar{p}(\gamma,d)$ is
\[
\hat{\bar{p}}(\gamma,d)\equiv\frac{1}{J}\sum_{j}\hat{p}(\gamma,g_{j},d)=\frac{1}{M\cdot J}\sum_{i,j}R_{i}(T_{i}(\gamma,g_{j},d)),
\]
the average of discontinuity classifications across the $M\cdot J$ classifications. Proposition \ref{prop:overall} in Online Appendix \ref{sec:Proofs} states the distribution of $\hat{\bar{p}}(\gamma,d)$ and shows the estimator to be unbiased and consistent (as $J\to\infty$) for $\bar{p}(\gamma,d)$ under Assumptions \ref{assu:Assum_iid} and \ref{assu:Assum_g} (given that $J=11$ in our experiments, consistency here is a conceptual statement implying that were we to incorporate DGPs from more RD studies, our estimators would be closer in probability to the population parameters of interest). We henceforth refer to $\hat{p}(\gamma,g,0)$ as the DGP-specific or $g$-specific type I error rate, and $\hat{\bar{p}}(\gamma,0)$ as the average type I error rate (or simply the type I error rate). For a particular $d\neq0$, we refer to $1-\hat{p}(\gamma,g,d)$ as the DGP-specific or $g$-specific type II error rate, and $1-\hat{\bar{p}}(\gamma,d)$ as the average type II error rate (or simply the type II error rate) at discontinuity $d$.

We can also define the type I and type II error probabilities of an econometric inference procedure based on a discontinuity estimator, $\hat{\theta}$, but with two adjustments. First, $\gamma$ is no longer an argument in these expressions because we directly implement $\hat{\theta}$ on microdata $W$. Second, we need to specify the level of the testing procedure, which we set to 5\%, the prevailing standard in empirical studies. Because the definitions of these probabilities and their estimators are similar to the quantities defined above, we omit them here.

In subsequent sections, we empirically trace out $\hat{\bar{p}}(\gamma,d)$ as functions of $d$, which concisely summarize the type I and type II error probabilities of a graphical method. For brevity, we also use the term ``power functions,'' as opposed to ``estimated power functions,'' to refer to their empirical estimates. We study visual inference by comparing its power functions $\hat{\bar{p}}(\gamma,d)$ across $\gamma$ and against the corresponding power functions of various econometric inference procedures. We discuss the calculation of the standard errors on the differences between the visual and econometric power functions when we present our empirical results in Section \ref{sec:Results}, as its details depend on the design of our experiments.

The estimated power functions help shed light on best graphical practices. While the ``optimal'' graphical method depends in part on the optimality criterion we specify, the power functions provide key inputs for some common criteria that are well suited to this context. A simple criterion, as suggested by a referee, is how close the type I error rate is to 5 percent, the conventional threshold in econometric analysis. A second criterion quantifies the costs of type I and type II errors and uses the power function to consider explicitly the tradeoff between the two types of errors. We also consider a third criterion, which is adapted from \citet{andrews2020model}; instead of using the power function, this criterion relies on readers' confidence in their discontinuity classification, which we can proxy using participants' bonus scheme choice as discussed in Section \ref{sec:primary-experiment-general} and Online Appendix \ref{sec:primary-experiment-bonus-scheme}. A full discussion of the second and third criteria require formal decision-theoretic frameworks, and we leave it to Online Appendix \ref{subsec:Decision-theory}. 

In summary, we have defined the type I and type II error rates for visual inference. The type I error rate is the fraction of continuous graphs participants incorrectly classify as having a discontinuity, and the type II error rate is the fraction of discontinuous graphs incorrectly classified as being continuous. Our framework allows us to interpret these rates as unbiased and consistent estimates of the probabilities of type I and type II errors a randomly chosen person commits when classifying a graph generated from a representative DGP. These probabilities also help inform best graphical practices.

\section{Description of Experiments and Studies\label{sec:overall-description}}

\subsection{Graphical Parameters Tested\label{subsec:Graphical-Parameters-Tested}}

We test the effects of bin width, bin spacing, parametric fit lines, vertical lines at the policy threshold, and $y$-axis scaling. We discuss each of these treatments in detail below and provide graphical illustrations in Figure \ref{fig:rdd_treatments}.

The most studied graphical parameter in RD is the width of each bin in the binned scatter plot. The first class of bin width selection algorithms comes from \citet{LeeandLemieux2010}: start with some number of bins, double that number, test whether the additional bins fit the data significantly better, and repeat until the test fails to reject the null hypothesis. \citet{Calonicoetal2015} propose two bin width selection algorithms based on different econometric criteria. The first, which is more in line with the convention of the nonparametric regression literature, is the bin selector that minimizes the IMSE of the bin-average estimator of the CEF, where the resulting number of bins increases with the sample size $N$ at the rate $N^{1/3}$. For their second bin selector, the MV selector, \citet{Calonicoetal2015} state that they ``choose the number of bins so that the binned sample means have an asymptotic (integrated) variability approximately equal to the amount of variability of the raw data.'' The resulting number of bins increases with the sample size more quickly, at the rate $N/\log(N)^{2}$. The IMSE-optimal bin width therefore selects fewer bins, and hence has larger bin widths, than the MV algorithm (we describe these algorithms further in Online Appendix \ref{sec:Bin-algorithms}). In addition to these two algorithms, \citet{Calonicoetal2015} provide an interpretation for any given number of bins as the output of a weighted IMSE-optimal algorithm. Applying the \citet{LeeandLemieux2010} algorithms to our datasets leads to bin numbers that tend to be between the IMSE and MV selectors and closer to those of the IMSE. Thus, we restrict our analysis to the visual inference properties of the IMSE and MV bin selectors.

Although the prevailing approach is to adopt evenly spaced bins, this method has drawbacks in that the resulting bins may contain vastly different numbers of observations, or even none at all.\footnote{In a literature review we conduct for current practices, 98\% of the more than 100 studies we compile use evenly spaced bins. Our review includes RDD studies as well as studies that apply the regression kink design---RK design or RKD.} This can happen when the distribution of the running variable is far from uniform. As a remedy, \citet{Calonicoetal2015} also propose quantile-spaced bins where each bin contains (approximately) the same number of observations. Both spacings support IMSE and MV bin selectors, and we test each of these combinations.

Following suggestions by \citet{ImbensandLemieux2008} and \citet{LeeandLemieux2010}, RD graphs frequently feature parametric fit lines and a vertical line at the treatment threshold. Both papers suggest fit lines improve ``visual clarity'' by approximating the conditional expectation functions, and the default in the popular \texttt{rdplot} command by \citet{Calonicoetal2015} uses piecewise global quartic regressions on each side of the policy threshold. Of the 11 RDD papers on which we calibrate our DGPs, 10 include fit lines. For the six of these papers that generate fit lines using polynomial regressions, we use the same polynomial order as in the source graph, and in the remaining cases, our team unanimously decides on the fit that best matches the original fit line or the data. We could also use a formal data-driven approach to select the polynomial order, but different criteria from \citet{LeeandLemieux2010} lead to conflicting recommendations (for example, for our first DGP, the Akaike information criterion selects a fifth-order polynomial, while the Bayesian information criteria and $F$-test they describe choose a zeroth-order polynomial). 

The presence of a vertical line is designed to visually separate observations above and below the cutoff. We test all combinations of the fit line and vertical line treatments except for including the fit lines and excluding the vertical line, which is used infrequently in practice (our literature review shows that fewer than 10\% of papers use this combination).

The motivation for rescaling graph axes comes from \citet{Clevelandetal1982}, who note that correlations on scatter plots seem stronger when scales are increased. We use two axis scaling options in our experiments. First is the default output returned by Stata 14. Second, we double that range by recording the range of the $y$-variable from the default graph and increasing the bounds by 50\% of the original range in each direction, resulting in a graph where the data are condensed along the vertical axis. We do not manipulate the scale of the $x$-axis because our survey of the literature suggests that this is not a common adjustment. A related decision researchers encounter is the range of the running variable to use in producing graphs: should they use the entire dataset or only a subsample close to the policy threshold? We do not test this margin of adjustment in our experiments due to the difficulty in generalizing the findings from such an exercise. Suppose we find that selecting 50\% of the observations closest to the threshold improves visual inference, should researchers ``chop'' the sample they are already planning to use? And after doing so, will it be beneficial to chop again? One could argue for testing the effect of using the full sample versus the subsample falling within the IK or CCT bandwidth, but these bandwidths themselves depend on the full sample---the first step in bandwidth calculation is a (semi-)global regression---and it is not even clear how we should define the ``full'' sample: some of the replication data used for our DGP calibration are already subsets of a larger sample.

There are other graphical parameters we do not test in our experiments. One is plotting confidence bands around the binned averages or fit lines. However, confidence intervals are too complex to explain to the non-experts in our short tutorial without potentially affecting the way participants think about the classification task itself, and therefore we do not experimentally test their effects on visual inference. Because the moderate size of our expert pool makes it unsuited to randomized experiments, we refrain from testing other graphical parameters.

\subsection{Creation of Simulated Datasets and Graphs\label{sec:creation-datasets-graphs}}

We specify data generating processes based on
the actual data used in published research. We randomly sample 11 from a total of 110 empirical RD papers published in the \emph{American Economic Review}, \emph{American Economic Journals}, \emph{Econometrica}, \emph{Journal of Business and Economic Statistics, Journal of Political Economy, Quarterly Journal of Economics, Review of Economic Studies,} and \emph{Review of Economics and Statistics} between 1999 and 2017 that have replication data available to create our DGPs. We refer to these DGPs as DGP1-DGP11.

The calibration of each DGP $g$ entails the specification of its four components: the distribution of the running variable $X$, the continuous conditional expectation function $E[\tilde{Y}|X=x]$, the distribution of the error term $u$, and the sample size $N$.\footnote{Out of the 11 studies, two plot residualized outcomes instead of the original $Y$ to adjust for covariates, conceptually consistent with the covariate-adjusted RD estimation by \citet{Calonicoetal2019regression} and the ideas of \citet{AngristandRokkanen2015wanna}.} We use the empirical distribution of the running variable from each of the 11 papers but normalize it to lie in $[-1,1]$ by dividing $X$ of each data point by the maximum $|X|$ on its side of the cutoff, with zero representing the location of that policy cutoff. We also remove the most extreme observations where $|X|>0.99$, following \citet{ImbensKalyanaraman2012} and \citet{Calonicoetal2014}. For two papers which feature semi-discrete running variables, we add small amounts of normal noise to the running variable to match the regularity conditions from \citet{Calonicoetal2015}. To create a continuous CEF, we fit global piecewise quintics (still following \citealp{ImbensKalyanaraman2012} and \citealp{Calonicoetal2014}) and vertically shift the right arm. We specify the distribution of the error term $u$ as i.i.d.\ normal with mean zero and standard deviation $\sigma$, which we set as the root mean squared error (RMSE) of the piecewise quintic regression. We use the same number of observations as the original paper minus any observations removed while trimming the data. Plots of the resulting CEFs before we vertically shift their right arm to make them continuous are in Figure \ref{fig:RDD-CEF}, and Figure \ref{fig:dgp_construction} illustrates the construction process. We describe the DGP creation process in full detail in Online Appendix \ref{sec:creation-datasets-graphs-appendix}.

Because the outcomes from the 11 papers are measured in different units, we need to standardize the discontinuity levels and choose to specify discontinuity levels $d$ as multiples of $\sigma$. Alternatively, we could specify $d$ as multiples of the overall standard deviation of the outcome variable net of the discontinuity, a measure that also captures the variation due to the conditional expectation function besides the error term. Using this alternative measure turns out not to make a difference: the variance of the error term, $\sigma^2$, dominates the variance of $E[Y|X]$ in all of our DGPs, with the ratio of the two ranging from 8 to 690.

As a multiple of $\sigma$, $d$ takes on 11 values: $0,\pm0.1944\sigma,\pm0.324\sigma,\pm0.54\sigma,\pm0.9\sigma,\pm1.5\sigma$. We choose the upper bound $|d|=1.5\sigma$ based on our own visual judgment: it represents the point at which we expect every reasonable person to say a graph from any of our 11 DGPs features a discontinuity. The nonzero magnitudes of $d$ are equally spaced on the log scale. We use this scale rather than a linear scale to generate more graphs with smaller discontinuities, which are harder to detect, to better capture the shape of the power functions.

Our discontinuity magnitudes are similarly distributed to those observed in the main outcome graphs in our literature review. The average absolute value of the discontinuity $t$-statistics in our datasets from piecewise quintic regressions is 5.0 with a standard deviation of 5.3, compared to the observed mean of 3.9 with a standard deviation of 6.4. If we instead compare the distributions of the absolute value of the discontinuity divided by the control magnitude (the left intercept of the CEF), the means are similar, 1.3 in our datasets and 1.9 in the field, while our standard deviations are somewhat smaller at 2.2 compared to 7.7.

As argued in Section \ref{sec:conceptual-framework}, we want our DGPs to be representative. Although we select the papers randomly, we also need to evaluate how well our DGPs approximate the actual data from the respective studies. To do this, we adapt the lineup protocol from \citet{buja2009statistical} and \citet{Majumderetal2013}, which uses visual inference to conduct hypothesis testing. In our case, we test the null hypothesis that the original datasets come from the calibrated DGPs. Specifically, we present one graph of the original data randomly placed among 19 graphs from datasets drawn from the corresponding DGP. The goal is to identify the true dataset by choosing the graph that least resembles the others. If the viewer does not select the original graph, then we cannot reject the null hypothesis. Under the null hypothesis, the probability of identifying the graph produced from the original data (or the type I error probability) among the 19 simulated datasets is 5\% ($1/20$) for a single reader. For our lineup protocol, each graph is a binned scatter plot using the MV bin selector. We present two examples in Figure \ref{fig:Lineup-Protocol-Example-1}.

Based on visual testing among the authors, we cannot identify the graph from the true data for eight out of our 11 DGPs, which supports the idea that our DGPs approximate the original datasets well. For three DGPs, however, there is an obvious difference, as exemplified by DGP3 in the right panel of Figure \ref{fig:Lineup-Protocol-Example-1}. All three ``fail'' seemingly because of the misspecification of the variance structure of the error term $u$. Recall that we specify $u$ as being i.i.d.\ across observations and homoskedastic. But in the right panel of Figure \ref{fig:Lineup-Protocol-Example-1}, for example, the running variable is time, and there is positive serial correlation in the outcome. As a consequence, the outcome variability in the binned scatter plot is understated when $u$ is assumed to be i.i.d. Nevertheless, we adhere to the i.i.d.\ specification because it is standard in Monte Carlo exercises to evaluate RD estimators and inference procedures.

Another caveat of our DGP specification is that using global quintic regressions can lead to overfitting, the same issue that has brought forth the warning by \citet{gelman2019high} against using high-order global polynomial regressions to estimate RD treatment effects (see \citealp{Peietal2018local} for related discussions on the order of local polynomial regressions). We acknowledge this potential drawback of using quintics as some of our graphs indeed feature high variation in the tails. That said, the lineup protocol we adapt offers a novel and transparent method to evaluate our DGP specifications, and we find our inability to distinguish the real data from those drawn from one of our DGPs in a majority of cases reassuring. To further assuage the concerns regarding our DGP specification, we carry out a supplemental phase of experiments to gauge the sensitivity of visual inference to alternative DGP specifications. In Online Appendix \ref{sec:dgp-robustness}, we demonstrate the remarkable robustness of our results to using local linear estimates as an alternative to model the CEF (and allowing for heteroskedasticity), which is much less likely to overfit. 

\subsection{Non-Expert Experiments\label{sec:primary-experiment}\label{sec:primary-experiment-general}}

In our randomized experiments, we present non-expert participants with binned scatter plots made from our DGPs and ask them to classify the graphs as having a discontinuity or not. We conduct five phases of computer-based experiments online through the Cornell University Johnson College's Business Simulation Lab. Our subject pool consists of current and former Cornell students, Cornell staff, and non-student local residents with an expressed interest in focus groups or surveys. Although these educated laypeople are not the primary audience for academic research, RD graphs are sufficiently transparent that they are featured in popular media articles in publications such as \textit{The New York Times}, \textit{The Washington Post}, and \textit{The Atlantic} \citep{Dynarski2014,Sides2015,Rosen2012}, suggesting the participants in our sample should be capable of interpreting the graphs.

Before the experiment, participants watch a video tutorial explaining how the graphs are constructed.\footnote{The video tutorial is available at \url{https://storage.googleapis.com/rd-video-turial/rd_video_tutorial.mp4}.} We do not instruct participants on how to make their decisions, e.g., whether only to look at points near the cutoff or mentally to trace out the CEF. The video contains an attention check with a corresponding question later in the experiment to ensure that subjects are attentive to the instructions. After the video, participants complete a series of interactive example tasks and receive feedback on their answers. As part of the instructions, we explicitly tell participants that all, some, or none of the 11 graphs they classify may feature a discontinuity.

In each phase of the experiment, we present participants with a series of RD graphs using data generated as described in Section \ref{sec:creation-datasets-graphs}. Participants see two graphs with zero discontinuities, one each of $\pm0.1944\sigma,\pm0.324\sigma,\pm0.54\sigma,\pm0.9\sigma$, and one of either $1.5\sigma$ or $-1.5\sigma$. Participants see one graph from all 11 DGPs in a randomized order. We have up to 88 participants per treatment arm, and every graph we generate is seen by only one participant. For each graph, we ask participants whether they believe there is a discontinuity at $x=0$. 

Because running an experiment with $2^{5}=32$ treatment arms is infeasible with our resources, we conduct our experiment in phases, testing only a few treatments in each phase. Table \ref{tab:rdd-details-main} details the timeline of the experiments and lists the graphical parameters we test and hold fixed for various experimental phases. In phase 1, we test both bin width and axis scaling options. In phase 2, we test bin widths and bin spacings. In phase 3, we test imposing fit lines and a vertical line at the treatment threshold. Based on the results from these three phases, in which only bin widths and fit lines have major impacts, phase 4 tests all four combinations of those two treatments together.

Participants receive a base pay of \$3 for being in the experiment. To stimulate participant engagement and elicit participants' confidence in their response, participants can choose for each graph they classify a bonus that is either based on a monetary wager which pays 40 cents if their judgment is correct but nothing otherwise or a fixed payment of 20 cents irrespective of their classification. In Online Appendix \ref{sec:primary-experiment-bonus-scheme}, we explore the implication of a participant's bonus choice and discuss how we use it---in addition to the type I and type II error rates---to evaluate graphical methods as mentioned in Section \ref{sec:conceptual-framework}.

We do not give participants real-time feedback on the accuracy of their responses. Instead, we report total earnings and the final tally of correct classifications at the very end of the experiment after a short exit survey soliciting demographic information and comments.

The experiments are programmed in oTree \citep{chen2016otree} and pre-registered at the AEA RCT registry \citep{aearct} and the Center for Open Science's OSF platform \citep{osf}. The study takes participants approximately 15 minutes to complete.

\subsection{Expert Study\label{subsec:expert-study}}

In addition to our non-expert experiment, we conduct a study with researchers in economics and related fields who work on topics that often employ RDDs. We collect data at three technical social science seminars and online by contacting randomly selected members of the NBER in applied microeconomic fields (aging, children, development, education, health, health care, industrial organization, labor, and public) and IZA fellows and affiliates. After removing six responses from participants who completed the survey more than once, did not provide a valid email address for payment, or were not part of our recruited sample, we are left with 143 expert responses.

This expert study allows us to answer two questions. First, how do classification accuracy and the impacts of graphical techniques differ between experts and non-experts? And second, can experts correctly predict which graphing options perform best for our non-expert sample? This second question speaks to experts' ability to predict which visualization choices are best suited for interpretation by a lay audience. Because the success of a graphical technique ultimately lies in the reader's correct perception of graphs using it, it is important to understand whether experts' intuition regarding the relative advantages and drawbacks of alternative representation choices aligns with the evidence we find in practice. In related work on experts' ability to predict non-expert performance, \citet{DellaVignaandPope2018} find that economic experts are better than non-experts at estimating the effect of alternative incentive schemes on performance in a real effort task, but perform similarly to non-experts in terms of a simple ranking of incentive schemes.

Our expert study consists of two parts. The first is similar in structure to the non-expert experiment. Participants see a series of RD graphs and are asked to classify them by whether they have a discontinuity. To assess the accuracy of point estimates in addition to binary classifications of discontinuities, we also ask participants for an estimate of the discontinuity magnitude whenever they report a discontinuity. Due to sample size limitations, we do not randomize graphical treatments in the expert study, and all participants see graphs with equally spaced bins, no fit lines, default axis scaling, and a vertical line at the treatment threshold. All expert graphs use small bins, except for one seminar where participants see large bins. Four randomly selected participants receive a base payment of \$450 plus a bonus payment of \$50 per correct discontinuity classification. The bonus payment does not depend on the accuracy of the magnitude estimate.

The second part of the expert study asks about experts' preferences and their beliefs regarding non-expert performance across alternative graphical parameters. We present experts with three discontinuity magnitudes: $0$, $0.54\sigma$, and $1.5\sigma$. At each magnitude, we present four graphs, one for each combination of bin width and fit lines, in a random order using the same underlying data from the DGP where visual inference performs most closely to the average across all 11 DGPs. At each magnitude, we ask the experts to indicate which of the four treatment options they prefer and which they believe perform best and worst in our non-expert sample. We evaluate the experts' predictions about non-expert performances using phase 4 of the non-expert experiment, which tests these four treatment permutations simultaneously.

\section{Results\label{sec:Results}}

\subsection{Non-Expert Experiment Results and Graphical Method Recommendation\label{subsec:non-expert-results}}

For each combination of graphical parameters in all phases of the experiment, we compute power functions based on participants' classifications of whether graphs feature a discontinuity. Using notation from Section \ref{sec:conceptual-framework}, a DGP-specific power function represents the estimates $\hat{p}(\gamma,g,d)$ for graphical parameters $\gamma$ and DGP $g$ across different levels of discontinuity $d$. An overall power function represents the estimates $\hat{\bar{p}}(\gamma,d)$.

The intercept of a power function indicates the type I error rate as defined in Section \ref{sec:conceptual-framework}. At all other discontinuity magnitudes, the power function represents the proportion of graphs with discontinuities that participants classify correctly, which can be interpreted as one minus the type II error rate when the DGP is chosen uniformly randomly from the 11 possibilities. A desirable inference method has a small intercept before quickly rising to achieve high power. Plots of the overall power functions for each phase are shown in Figure \ref{fig:RDD-Power}, and Figure \ref{fig:RDD-Power-Phase-By-DGP} plots the corresponding DGP-specific power functions. The $x$-axis in these graphs is the magnitude of the discontinuity divided by the DGP-specific $\sigma$. This normalization facilitates aggregation and comparisons across DGPs, which then have six identical discontinuity magnitudes: 0, 0.1944, 0.324, 0.54, 0.9, and 1.5. Estimated effects on visual inference are in Tables \ref{tab:TE-RD-I}-\ref{tab:TE-RD-VII}. Because we effectively adopt stratified randomization in the design of our experiments as described in Section \ref{sec:primary-experiment-general}, where each of the 11 strata is determined by the DGPs seen for every discontinuity magnitude, we obtain these estimates by regressing the participants' responses on treatment indicators and stratum fixed effects.

Phase 1 of the experiment tests the four combinations of the bin width treatments (large IMSE-optimal bins and small MV bins) with the $y$-axis scaling treatments (the default in Stata 14 and double that range). Comparing power functions, large bins have a significantly higher type I error rates relative to small bins. While both small bin treatments lead to a type I error rate of approximately 5\%, the large bins have a type I error rate of around 20\% to 25\%. The large bins have a type II error advantage over the small bins, which is related to their higher type I error rate, but the power functions converge as the discontinuity magnitude increases. In contrast to bin widths, axis scaling has little effect on participant perception. Based on this result, we use Stata's default scaling for all subsequent phases.

In phase 2, we again test the two bin width treatments, this time interacted with the two bin spacing treatments: even spacing and quantile spacing. Note that large bins and small bins with even spacing appear in both phases 1 and 2. With this design, we can gauge the stability of visual inference across different samples from the non-expert population, and it is encouraging to see the results for these two treatments being virtually identical across phases. Comparing the power functions for these repeated treatments with their new quantile-spaced versions, we see that evenly-spaced and quantile-spaced bins perform very similarly. We conclude that bin spacing has a small or null effect on visual inference for the DGPs we test.

Phase 3 tests three treatments: the inclusion of a vertical line at the treatment threshold with and without polynomial fit lines and the omission of both the vertical line and fit lines. We find that the vertical line at the cutoff makes little difference in perception. Fit lines, on the other hand, appear to increase type I error rates in this phase, in line with a common concern that they may be overly suggestive of discontinuities.

Jointly, these three phases of experiments suggest that the presence of fit lines and the bin width choice have the largest impact on visual perceptions of discontinuities. We therefore base our analysis of expert preferences and expert predictions about non-expert performance on the interaction of these two treatments and run a final phase of experiments, phase 4, directly comparing the four possible treatment combinations. Interestingly, while the effects of bin width choice are once again robust across phases, the effects of fit lines are more muted in this phase. In particular, the treatment with small bins and fit lines has a type I error rate of only 0.052 in phase 4 but 0.175 in phase 3. This finding suggests that we cannot conclude that fit lines unequivocally result in an increase in type I errors, but that they do add uncertainty to visual inference.

We provide additional results in Online Appendices \ref{sec:Appendix-Additional-RDD-Analysis} and \ref{sec:t-stat}. Online Appendix \ref{sec:Appendix-Additional-RDD-Analysis} includes evidence for the balance of covariates across treatment arms and the measurement of predictive power of demographic and DGP characteristic variables on visual inference performance. Online Appendix \ref{sec:t-stat} examines how large the $t$-statistics need to be for readers to detect a discontinuity visually.

Based on our experimental results, we recommend the graphical method that uses small bins, no fit lines, even spacing, default $y$-axis scaling, and a vertical line at the policy threshold as a sensible default for generating RD graphs. It has an associated type I error rate close to 5 percent in all four phases of the experiment. As we document in Online Appendix \ref{sec:Risk}, it also performs well under the two additional criteria mentioned in Section \ref{sec:conceptual-framework}. Of the five graphical parameters, bin spacing, $y$-axis scaling, and the presence of the vertical line do not appear to matter much, allowing researchers to use reasonable discretion. The other two parameters are much more important, and the use of small bins and no fit lines appears to be key for good visual inference performance.

Finally, we emphasize that our recommendation is not intended as a doctrine that practitioners must abide by. We are limited to the 11 DGPs we test. In addition, as described in Online Appendix \ref{sec:Bin-algorithms}, there is an ad hoc element to the construction of small bins. In fact, we use quantile-spaced large bins in the regression kink design experiments in a previous working paper \citet{kortingetal2020wp}, where the sample sizes are much larger than for our RD DGPs. In this regard, we view our recommended graphical method as a reasonable starting point based on the best evidence we have. An equally important takeaway is the value in documenting the robustness of graphical evidence given our finding of divergent visual inferences under commonly used methods.

\subsection{Expert Study Results}

We show most expert participants (95 out of 143) graphs generated with our preferred method as discussed above. Figure \ref{fig:Experts-Nonexperts-ROC} plots the expert power functions against those of the non-experts who saw the same graphs. When comparing expert and non-expert performances, we use solid (hollow) markers to indicate that the point is (not) statistically significantly different  from the reference curve, and present plots of the corresponding 95\% confidence intervals (created here with the large sample approximation described at the end of Online Appendix \ref{sec:Proofs} and by assuming independence between the experts and non-experts) in Figure \ref{fig:RDD-Experts-Nonexperts-Diff}. The two groups perform similarly, with experts having a slightly higher type I error rate (approximately 8\% to the non-expert 5\%) and a slightly lower type II error rate. The only statistically significant differences are for the experts' marginally lower type II error rates at the 0.1944 and 0.324 discontinuities.

In addition to the aforementioned treatment, we show experts in one seminar pool (48 out of 143) graphs using the large bins and no fit lines treatment. The two groups again perform similarly, and the expert and non-expert power functions are not statistically significantly different anywhere. Both groups have type I error rates well above their corresponding small bin rates. Like non-experts, experts do worse when viewing graphs constructed with large bins. 

\subsubsection{Expert Preferences and Predicting Non-Expert Performance}

We present experts' preferences and their beliefs about non-expert performance across the four considered treatments in Figure \ref{fig:RDD-Experts-Pred-Pref}. When asked about graphing options for the main graph of a paper that conveys the treatment effect, most experts report preferring small bins, usually with fit lines. These results hold at all three discontinuity magnitudes considered, including zero. Experts' predictions about the most effective treatments for non-experts tend to mirror their preferences. By a large margin, experts believe small bins with fit lines to be the most efficacious treatment for non-experts at all discontinuity magnitudes. Conversely, most experts view large bins without fit lines least favorably in the context of non-expert performance.

Comparing the expert predictions to our experimental data from phase 4, we find substantial discordance for the effects of bin width choice on non-expert classification accuracy. The best- and worst-performing treatments at each discontinuity magnitude have + and - signs, respectively, in Figure \ref{fig:RDD-Experts-Pred-Pref}. The actual power functions are shown in Figure \ref{fig:RDD-Power} (Figure \ref{fig:RDD-Power-Phase-VI-DGP9} shows the power functions based only on DGP9, the DGP used in the example graphs shown to experts in the second part of the expert study). While a majority of experts correctly identifies the bin width treatment with lowest type I error rates (i.e., most experts prefer small bins at the zero discontinuity level, either with or without fit lines), there is also significant expert support for the large bin with fit lines treatment, even when there is no discontinuity, which exhibits the greatest type I error rate in our sample. In addition, experts fail to predict the type I vs type II error tradeoff presented by the bin width choice: most experts expect large bins to perform worst even at large discontinuities, while we find this treatment arm has the lowest type II error rates in those cases. Although in the actual power functions, the effects of bin width are much more pronounced than the effect of fit lines, we find more expert disagreement regarding non-expert performance on bin width than on fit lines. We also find expert predictions to be similar whether their own visual inference performance is above or below the median.

\subsection{Visual versus Econometric Inference\label{sec:Visual-vs-Econometric}}

Here we compare the performances of visual inference from our small bin expert sample with various econometric RD procedures. We present both the overall power functions and the difference between visual and econometric inferences for each econometric procedure. For a fair comparison, we base the estimators' power calculations on their rejection decisions over the same set of datasets underlying the graphs seen by the experts. That is, the estimators ``see'' the same data as the experts, preventing differences driven by variation in sampling from the same DGP.

As a benchmark, our first estimator comes from a correctly specified model: a global piecewise quintic regression with homoskedastic standard errors. The power function for the corresponding 5\% test  compared with human performance is presented in the left panel of Figure \ref{fig:RDD-Power-Experts-vs-All}. We again use solid (hollow) markers to indicate that the difference to the comparison power function is (not) statistically significant, and provide plots of the differences in Figure \ref{fig:RDD-Experts-Vs-All-Diff}. We additionally include in Table \ref{tab:avg_error_rates_experts} type I and II error rates (the latter at both \(\lvert d \rvert = 0.324\sigma\) and averaged across nonzero discontinuities) for visual and econometric inferences.

Next, we implement the IK, CCT, and AK inference procedures, again plotting the corresponding power functions in the left panel of Figure \ref{fig:RDD-Power-Experts-vs-All}. All three procedures build upon local linear regressions but take different approaches to conducting inference. We introduce them here briefly before discussing the performance of each. We provide a more detailed review of the methods in Online Appendix \ref{sec:visual_v_metrics_details}. 

Strictly speaking, \citet{ImbensKalyanaraman2012} do not study inference but propose an MSE-optimal bandwidth selector, the IK bandwidth. The IK inference procedure we refer to is the ``conventional'' (terminology from \citealp{Calonicoetal2014}) inference procedure practitioners typically implement in conjunction with the IK bandwidth in which the asymptotic bias is ignored. As seen in the left panel of Figure \ref{fig:RDD-Power-Experts-vs-All}, the IK procedure achieves even lower type II error rates than the piecewise quintic estimator, and is significantly better than visual inference at detecting discontinuities up to $1.5\sigma$. But with this advantage in type II error rate comes a significant disadvantage in type I error rate. When there is truly no discontinuity, the estimator still rejects the null hypothesis in 22.6\% of datasets.

Unlike IK, the CCT procedure by \citet{Calonicoetal2014} directly estimates the asymptotic bias, centers the confidence interval at the bias-corrected estimate, and adjusts the width of the confidence interval to account for the uncertainty in the bias estimate. CCT also generalize IK and propose a new class of MSE-optimal bandwidth selectors, which are implemented in the Stata package \texttt{rdrobust} along with the inference procedure.\footnote{While the CCT bandwidth remains the default and modal choice of \texttt{rdrobust}, new work by \citet{Calonicoetal2020optimal} proposes inference-optimal RD bandwidth selectors. As seen in Figure \ref{fig:RDD-Experts-VS-CCT-CER}, using these bandwidths reduces the excess type I error rate relative to visual inference.} Note that although CCT's type I error rate is approximately 12.5\%, as seen in the left panel of Figure \ref{fig:RDD-Power-Experts-vs-All}, this could be a small sample problem. As mentioned in Section \ref{subsec:expert-study}, we effectively have 88 datasets modulo the discontinuity level for the expert study. In a separate Monte Carlo simulation with 1,000 draws, the type I error rate is in line with that of the experts at 7.0\%. However, it is possible that the type I error rate of visual inference also decreases over graphs based on these alternative datasets---the realization of the disturbance term can impact both econometric and visual inference---and therefore we keep the comparison based on the datasets the experts saw. The CCT inference procedure achieves lower type II error rates, enjoying a significant 15 to 20 percentage point advantage over expert visual inference at intermediate discontinuity levels.\footnote{One may be concerned that visual inference's lower type I error rate is a consequence of our experimental design where the majority (nine out of 11) of graphs feature a discontinuity. If subjects speculate that only about half the graphs feature a discontinuity, our type-I-error-rate result is biased in favor of visual inference. As mentioned in Section \ref{sec:primary-experiment}, we explicitly tell participants that ``all, some or none of the 11 graphs you see in this survey may feature a discontinuity,'' and we present evidence against this bias in Online Appendix \ref{subsec:Dynamic-Visual-Inference}, where we test dynamic visual inference.}

Finally, the AK inference procedure by \citet{ArmstrongandKolesar2017} adapts \citet{donoho1994statistical} and produces asymptotically valid and minimax (near-)optimal confidence intervals over a class of conditional expectation functions with a bound---loosely speaking---on the second derivative magnitudes just above and just below the cutoff. This bound informs the worst-case bias of a local linear estimator over the class of functions, and AK corrects for this bias in their procedure (unlike CCT, AK's bias correction is nonrandom). Using the default rule-of-thumb method in the \texttt{RDHonest} R package to estimate the tuning parameter, the AK inference has a type I error rate of approximately 6\%, and the power function is very close to that of the experts.\footnote{\citet{ArmstrongKolesar2017} propose analogous confidence intervals that maintain coverage and enjoy minimax optimality over a H\"{o}lder class of functions, which is determined by a global, as opposed to local, bound on the second derivative of the CEF. Though not presented here, we find the corresponding power functions to be similar. Like \citet{ArmstrongandKolesar2017,ArmstrongKolesar2017}, \citet{imbens2019optimized} also adapt the idea of \citet{donoho1994statistical}. They propose an RD estimator through numerical optimization that is minimax mean-squared-error optimal over CEFs with a global second derivative bound. Because the corresponding inference procedure performs similarly to \citet{ArmstrongandKolesar2017} in simulations by \citet{Peietal2018local} in their 2018 working paper version and can be computationally demanding, we do not implement the \citet{imbens2019optimized} procedure here.} Given the minimax optimality of AK, the comparable performance of visual inference is remarkable. However, it is worth emphasizing that although they have approximately the same average type I error rate, AK offers a theoretical guarantee to control the (asymptotic) type I error rate for all DGPs in the Taylor class while visual inference does not. 

The IK and CCT inference procedures at the 5\% level exhibit higher type I and lower type II error rates than visual inference. We can circumvent this tradeoff by adjusting their type I error rate to the level of visual inference: we search for alternative critical $t$-values such that the resulting type I error rate of the econometric inference procedure is equal to that of visual inference and then use those critical values to conduct inference. For IK, this critical $t$-value is 2.46, and it is 2.28 for CCT. We present the results for these type-I-error-rate-adjusted inference procedures in the right panel of Figure \ref{fig:RDD-Power-Experts-vs-All}, with the differences between these procedures and expert visual inference in Figure \ref{fig:RDD-Experts-Vs-All-Size-Adjusted-Diff}. Despite the extent of their differences in type I error rates relative to visual inference, both econometric procedures' type II error rates only increase by around 5-10 percentage points from this adjustment and are still significantly lower than those of visual inference at moderate discontinuities. However, we again caution that this result is specific to the sample of the 11 DGPs we consider; the type-I-error-rate adjusted IK and CCT procedures are not guaranteed to have correct coverage over the broader Taylor class as AK does. 

We provide two sets of additional results in Online Appendix \ref{sec:visual_v_metrics_details}. First, to investigate the mechanisms that underlie the econometric methods' performances, we impose our knowledge of the DGP when implementing them. We find that the driving force of IK's high type I error rate appears to be the noisy estimates that feed into the optimal bandwidth formula. When we use the theoretical optimal bandwidth, the corresponding type I error rate is lower and statistically indistinguishable from visual inference's. 

Second, we venture beyond binary classifications of a discontinuity and compare the accuracy of discontinuity point estimates across visual and econometric approaches. We find that a simple method like IK attains a lower RMSE than visual inference for each of the 11 DGPs. We conclude that visual inference is less competitive at estimating discontinuity magnitudes than at identifying their existence.

\subsubsection{The Complementarity of Visual and Econometric Inferences \label{subsubsec:Combine-inferences}}

Thus far, we have studied the ``marginal'' power functions for visual and econometric inferences. In this section, we use the joint distribution of visual and econometric discontinuity tests to explore their complementarity and evaluate the performance of a simple combined visual-econometric inference procedure.

First, we examine the joint distribution of visual and econometric inferences to see whether they tend to agree on the same data. For each discontinuity magnitude, we characterize the joint distribution of visual and econometric classifications in the form of a two-by-two contingency table. We conduct Fisher's exact test for independence and present one-sided $p$-values in Table \ref{tab:fisher} by discontinuity magnitude and for each econometric method. We report one-sided $p$-values because two-sided $p$-values are method-dependent due to the ambiguity in classifying contingency tables as extreme in the opposite direction (see, e.g., \citealp{Agresti1992}).\footnote{In principle, we could also report correlations between inferences, but they may be hard to interpret. Because our classification variables are binary, the maximum of the correlation measure depends on the marginal distributions of the classifications. For example, if the probabilities of rejecting no discontinuity are different between two inference procedures, then their correlation is strictly less than 1. Because the (marginal) classification probabilities vary across discontinuity levels and by methods, it is difficult to compare the correlation measures.}

Two patterns emerge from Table \ref{tab:fisher}. The first row shows no strong support for an association between visual and econometric classifications when the true discontinuity is zero. But when the true discontinuity is nonzero (rows two through six), there appears to be strong evidence in support of association (though not reported in Table \ref{tab:fisher}, all correlations are positive in these cases). In other words, type II errors by experts are predictive of type II errors by various econometric inference methods, but this is not true for type I errors, which highlights the complementarity of visual and econometric inferences.

Second, we illustrate this complementarity more concretely by studying the performance of a particular combined visual-econometric inference procedure. It infers a discontinuity if and only if both the visual and econometric procedures reject the null hypothesis of no discontinuity. As a referee points out, many researchers may already use this procedure informally when reading or writing RD papers.

We plot the resulting power functions in the top panel of Figure \ref{fig:RDD-Visual-Econometric-Inference-Combined-vs-AK} along with the power function of the AK procedure for comparison. Because of the lack of dependence between visual and econometric classifications when $d=0$, the combined inferences achieve lower type I error rates than either of the individual inference types. On the other hand, the same mechanism pushes type II error rates higher, but the positive associations between the classifications when $d\neq0$ help limit their increase. 

In fact, the power function of the combined visual-IK inference procedure is fairly close to that of AK. The bottom panel of Figure \ref{fig:RDD-Visual-Econometric-Inference-Combined-vs-AK} presents the difference between the two. Despite the limitations of the IK procedure, with a type I error rate above 20\% as shown before, the IK-expert hybrid has a type I error rate of 2.6\% while not performing statistically significantly differently from AK at any of the nonzero discontinuity levels. This finding helps to explain the enduring credibility of RDDs despite potential issues with the econometric inference method used prior to CCT and AK. It also suggests that the \emph{de facto} type I error probability may be lower than the nominal level if researchers informally combine different statistical evidence instead of relying on a single econometric inference result, a point that deserves attention in future research.

\section{Conclusion\label{sec:Conclusion}}

This paper studies visual inference and graphical representation in RD designs via crowdsourcing. Through a series of experiments and studies that recruit both non-expert and expert participants, we provide answers to two sets of questions. First, how do graphical representation techniques affect visual inference and which technique should practitioners use? And second, when presented with well constructed graphs, how does visual inference perform compared to common econometric inference procedures in RDDs?

To answer the first set of questions, we experimentally assess how five graphical parameters impact visual inference accuracy. We find that generating graphs with the \citet{Calonicoetal2015} IMSE (large) bin selector leads to higher type I error rates but lower type II error rates relative to their MV (small) bin selector. Imposing fit lines can have a similar effect as using large bins, confirming the worries by \citet{ct2021nber,cattaneo2021regression}. We recommend the graphical method of using small bins and no fit lines as a sensible starting point in practice, which we show performs well under several criteria. Bin spacing, a vertical line at the policy threshold, and $y$-axis scaling have little effect, implying that researchers can adhere to reasonable preferences. 

For the second set of questions, we find that visual inference performs competitively on graphs constructed with the recommended method. It achieves a lower type I error rate than econometric inference at the 5\% level based on the \citet{ImbensKalyanaraman2012} and \citet{Calonicoetal2014} methods (the difference between the visual and CCT type I error rates is not statistically significant), though the two econometric inference procedures offer considerable type-II-error advantages. The performance of visual inference is very similar to that based on the procedure suggested by \citet{ArmstrongandKolesar2017}. Furthermore, visual and econometric inferences appear to be complementary. Through the analysis of the joint distribution of visual and econometric tests we find that, while they commit similar type II errors, there does not appear to be a strong association in their type I errors.

Our study is subject to several important limitations. The first is the restricted set of parameters we are able to test experimentally. As mentioned earlier, we do not impose fit lines with confidence intervals in our graphs, which researchers sometimes do, due to the difficulty in explaining it to non-expert participants. We also do not vary the size and color of the dots. However, these choices may impact inference due to their effect on visual attention and visual complexity as suggested by the literature on the psychological and neurological mechanisms underlying the processing of (visual) information \citep{hegarty2010thinking,kriz2007top,rosenholtz2007measuring,wolfe2004attributes}. We leave these investigations to future work.

Second, our results are based on a specific set of DGPs. For example, while the bin spacing choice---equally spaced or quantile spaced---appears immaterial in our experiments, it could be important when the distribution of the running variable is farther from uniform than in our DGPs. On the other hand, the number of DGPs used in Monte Carlo simulations that lead to methodological recommendations is often far lower than our 11, and those DGPs sometimes bear no resemblance to real-world data. In addition, we test the validity of our simulated datasets by adapting the lineup protocol from \citet{Majumderetal2013} to assess the degree to which our DGPs approximate the original data, and we document the robustness of our experimental results to alternative DGP specifications.

And third, the mechanism of RD visual inference remains elusive. In a previous working paper \citep{kortingetal2020wp}, we reported the results from an eyetracking study, in which we sought to identify eyegaze patterns (e.g., ``visual bandwidths'') that robustly predict visual inference success. Had predictive patterns emerged from the eyetracking study, we would have followed up with additional experiments, in which we instruct a random subset of the participants to focus their visual attention according to our finding. But we were not able to identify predictive ocular patterns and could only conclude that the processing of visual signals, as opposed to where in a graph participants looked, drove visual inference success. A next step toward better understanding the mechanism is to systematically study the types of DGPs for which visual inference performs well and poorly.

These limitations notwithstanding, our study answers the call by \citet{leek2015statistics} to provide empirical evidence on best practices in data analysis, and our approach can find applications in other important areas. We have conducted analogous experiments to study visual inference and graphical representations in RKDs using DGPs based on  \citet{CardLeePei2009}, \citet{Cardetal2015}, and \citet{Cardetal2015ECMA}  (\citealp{GanongJager2018} also discuss RK visual inference, albeit informally). Interested readers can consult our previous working paper \citep{kortingetal2020wp} for our nuanced findings. Another related topic to study follows the recent work by \citet{Cattaneoetal2019}, who, among other contributions, propose econometric tests of linearity and monotonicity based on binned scatter plots, which are motivated by studies such as \citet{Chettyetal2011} and \citet{Chettyetal2014}. One could assess the impact of the graphical parameters on reader perception and compare visual and econometric linearity/monotonicity tests. A third related topic is structural breaks in time series econometrics, in which graphs serve practically the same purpose as those in RDD. Finally, within time series econometrics, studying visual inference for unit root/stationarity analysis may also be promising.\footnote{In recent work, \citet{ShenWirjanto2019} propose a new framework for stationarity tests, which formalizes the intuition that a visual characteristic of stationary time series is the infinite recurrence of ``simple events'' asymptotically.} In an influential textbook, \citet{StockWatson3E} conduct an augmented Dickey-Fuller (ADF) test for the presence of a unit root in U.S. inflation. Upon finding that the test rejects a unit root at the 10\% level but not at the 5\% level, \citet{StockWatson3E} write ``The ADF statistics paint a rather ambiguous picture\ldots Clearly, inflation in {[}the figure{]} exhibits long-run swings, consistent with the stochastic trend model.\textquotedblright{} In this case, \citet{StockWatson3E} apply visual unit root inference when the test statistic is marginal, which raises the question: can we leverage our eyes to begin with?

\newpage{}

\begin{singlespace}

%\section*{References}

\bibliography{RDMainBib_arxiv}
%\begin{btSect}{RDMainBib_arxiv}
%\btPrintCited
%\end{btSect}
\end{singlespace}

\newpage{}

\section*{Tables}

\begin{table}[htbp]
    \caption{Timeline of Experiments and Graphical Parameters Tested}
    \label{tab:rdd-details-main}
    \centering{\footnotesize{}}%
    \begin{tabular}{lcllcc}
        Phase &  & Holding Fixed & Treatments & Date & \makecell[c]{\# Recruited\\ {[} \# Completions (Rate) {]}}
        \\[0.5em]\hline
        &  &  &  &  &   \\[-0.5em]
        \multicolumn{6}{c}{Main Phases}\\[0.5em]
        1 &  & bin spacing: ES & bin width: large vs small & Nov. 13-16, & 4{*}88=352 \\
        &  & fit lines: no & \# & 2018 & [330 (94\%)] \\
        &  & vertical line: yes & axis scaling: normal vs large &  &  \\[0.5em]
        2 &  & axis scaling: default & bin width: large vs small & Feb. 11-12, & 4{*}88=352 \\
        &  & fit lines: no & \# & 2019 &  [325 (92\%)] \\
        &  & vertical line: yes & bin spacing: ES vs QS &  &   \\[0.5em]
        3 &  & bin width: small & fit lines: no; vertical line: yes & Feb. 27, & 3{*}88=264 \\
        &  & bin spacing: ES & fit lines: no; vertical line: no & 2019 &  [248 (94\%)] \\
        &  & axis scaling: default & fit lines: yes; vertical line: yes &  \\[0.5em]
        4 &  & bin spacing: ES & bin width: large vs small & Oct. 28-29, & 4{*}88=352 \\
        &  & axis scaling: default & \# & 2019 &  [340 (97\%)] \\
        &  & vertical line: yes & fit lines: yes vs no &  &   \\[0.5em]\hline
        &  &  &  &  &   \\[-0.5em]
        \multicolumn{6}{c}{Supplemental Phase} \\[0.5em]
        5 &  & bin width: small & global quintic vs  & Mar. 10-11, & 4{*}88=352 \\
        &  & fit lines: no & local linear specification & 2021 &  [339 (96\%)] \\
        &  & bin spacing: ES & \#  &   \\
        &  & axis scaling: default & homoskedastic vs  & \\
        &  & vertical line: yes &  heteroskedastic error & \\[0.5em]\hline
    \end{tabular}{\footnotesize\par}
    \begin{singlespace}
        \noindent %
        \begin{minipage}[t]{6.3in}%
            \begin{singlespace}
            {\scriptsize{}Notes: In our four main experimental phases, we test the effects of: \\
            \tabitem the bin width selector (we choose two bin width algorithms from \citealp{Calonicoetal2015}: the first, called \emph{large} above, minimizes the integrated mean squared error of the bin-average estimators of the conditional expectation function and results in fewer, larger bins; the second, called \emph{small} above, aims to approximate the variability of the underlying data and results in more, smaller bins);\\
            \tabitem bin spacing (evenly spaced, called \textit{ES} above, and quantile spaced, called \textit{QS} above);\\
            \tabitem parametric fit lines;\\
            \tabitem a vertical line at the policy threshold; and\\
            \tabitem $y$-axis scaling (the default output from Stata 14, called \emph{normal} above, and an increased scale created by recording the range of the $y$-variable from the default graph and increasing the bounds by 50\% of the original range in each direction, called \emph{large} above). \\
            In the supplemental phase, we test the sensitivity of visual inference to alternative specifications of the data generating processes.}
        \end{singlespace}
        \end{minipage}
    \end{singlespace}   
\end{table}

\newpage

\begin{table}[htbp]
    \caption{Type I (False Positive) and Type II (False Negative) Error Rates for Expert and Econometric Inferences}
    \label{tab:avg_error_rates_experts}
    \centering{\footnotesize{}}%
    \begin{tabular}{lccc}
        {\footnotesize{}Inference Type} & {\footnotesize{}Type I Error Rate} & {\footnotesize{}Type II Error Rate: $\lvert d \rvert = 0.324\sigma$} & {\footnotesize{}Average Type II Error Rate: $d \neq 0$} \\
        \hline 
        {\footnotesize{}Experts} & {\footnotesize{}0.079} & {\footnotesize{}0.537} & {\footnotesize{}0.336} \\
        {\footnotesize{}Piecewise Quintic} & {\footnotesize{}0.068 [0.785]} & {\footnotesize{}0.395 [0.003]} & {\footnotesize{}0.260 [0.000]} \\
        {\footnotesize{}IK} & {\footnotesize{}0.226 [0.005]} & {\footnotesize{}0.216 [0.000]} & {\footnotesize{}0.145 [0.000]} \\
        {\footnotesize{}CCT} & {\footnotesize{}0.132 [0.252]} & {\footnotesize{}0.342 [0.000]} & {\footnotesize{}0.219 [0.000]} \\ 
        {\footnotesize{}AK} & {\footnotesize{}0.058 [0.585]} & {\footnotesize{}0.500 [0.429]} & {\footnotesize{}0.321 [0.400]} \\
        \hline 
    \end{tabular}{\footnotesize\par}
    \medskip
    \begin{singlespace}
    \noindent %
    \begin{minipage}[t]{6.3in}%
        \begin{singlespace}
            {\scriptsize{}Notes: The second column shows the type I (false positive) error rate. The third column shows the type II (false negative) error rate for a discontinuity of $\vert d\vert =0.324\sigma$. The fourth column shows the average type II error rate when $d \neq 0$, weighting all discontinuity magnitudes equally. We present $p$-values for the difference between experts and each estimator in brackets. They are based on two-way cluster-robust standard errors computed via a stacked regression where we account for the potential correlation between visual and econometric inferences at the dataset level (there are 88 datasets in total) and in visual inferences for the same individual across graphs--see Online Appendix \ref{subsec:two-way-clustering} for details. IK inference is based on a local linear estimator using the IK bandwidth \citep{ImbensKalyanaraman2012}. CCT is the default RDD inference procedure from CCT's }\texttt{\scriptsize{}rdrobust}{\scriptsize{} \citep{Calonicoetal2014}. AK uses the }\texttt{\scriptsize{}RDHonest}{\scriptsize{} procedure with the rule-of-thumb bound on each DGP's second derivative \citep{ArmstrongandKolesar2017}.}
        \end{singlespace}
    \end{minipage}
    \end{singlespace}
\end{table}

\newpage

\begin{table}[htbp]
    \caption{Fisher's Exact Test of Association: Expert Visual vs Econometric Inferences $p$-values (One-Sided)}
    \label{tab:fisher}
    \centering{\footnotesize{}}%
    \begin{tabular}{cccccc}
        & \multicolumn{4}{c}{{\footnotesize{}Estimator for Econometric Inference}} & \tabularnewline
        \hline 
        {\footnotesize{}Discontinuity $|d|$} & {\footnotesize{}PQ} & {\footnotesize{}IK} & {\footnotesize{}CCT} & {\footnotesize{}AK} & \tabularnewline\hline 
        {\footnotesize{}$0$} & {\footnotesize{}0.727} & {\footnotesize{}0.231} & {\footnotesize{}0.616} & {\footnotesize{}0.394} & \tabularnewline
        {\footnotesize{}$0.1944\sigma$} & {\footnotesize{}0.000} & {\footnotesize{}0.000} & {\footnotesize{}0.000} & {\footnotesize{}0.000} & \tabularnewline
        {\footnotesize{}$0.324\sigma$} & {\footnotesize{}0.000} & {\footnotesize{}0.000} & {\footnotesize{}0.000} & {\footnotesize{}0.000} & \tabularnewline
        {\footnotesize{}$0.54\sigma$} & {\footnotesize{}0.000} & {\footnotesize{}0.000} & {\footnotesize{}0.001} & {\footnotesize{}0.000} & \tabularnewline
        {\footnotesize{}$0.9\sigma$} & {\footnotesize{}0.073} & {\footnotesize{}.} & {\footnotesize{}0.042} & {\footnotesize{}0.134} & \tabularnewline
        {\footnotesize{}$1.5\sigma$} & {\footnotesize{}.} & {\footnotesize{}.} & {\footnotesize{}.} & {\footnotesize{}.} & \tabularnewline
        \hline 
    \end{tabular}{\footnotesize\par}
    \medskip
    \begin{singlespace}
    \noindent %
    \begin{minipage}[t]{3.5in}%
        \begin{singlespace}
            {\scriptsize{}Notes: Missing values indicate that a test always rejects the null hypothesis, in which case the $p$-value cannot be computed. PQ uses a correctly specified regression model with global piecewise quintics above and below the treatment threshold and assuming homoskedasticity. IK inference is based on a local linear estimator using the IK bandwidth \citep{ImbensKalyanaraman2012}. CCT is the default RDD inference procedure from CCT's }\texttt{\scriptsize{}rdrobust}{\scriptsize{} \citep{Calonicoetal2014}. AK uses the }\texttt{\scriptsize{}RDHonest}{\scriptsize{} procedure with the rule-of-thumb bound on each DGP's second derivative \citep{ArmstrongandKolesar2017}.}
        \end{singlespace}
    \end{minipage}
\end{singlespace}
\end{table}

\newpage
\section*{Figures}

\begin{figure}[H]
    \centering
    \includegraphics[width=\textwidth/3]{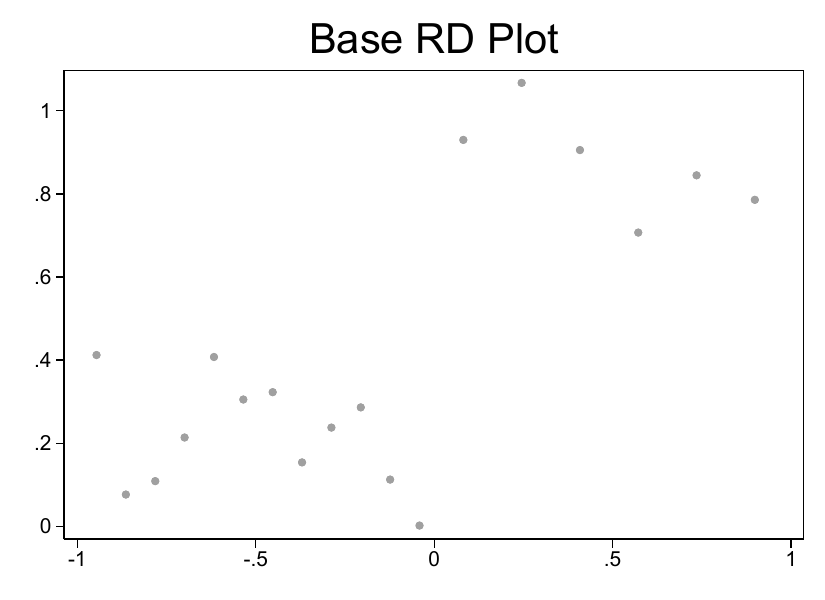}\includegraphics[width=\textwidth/3]{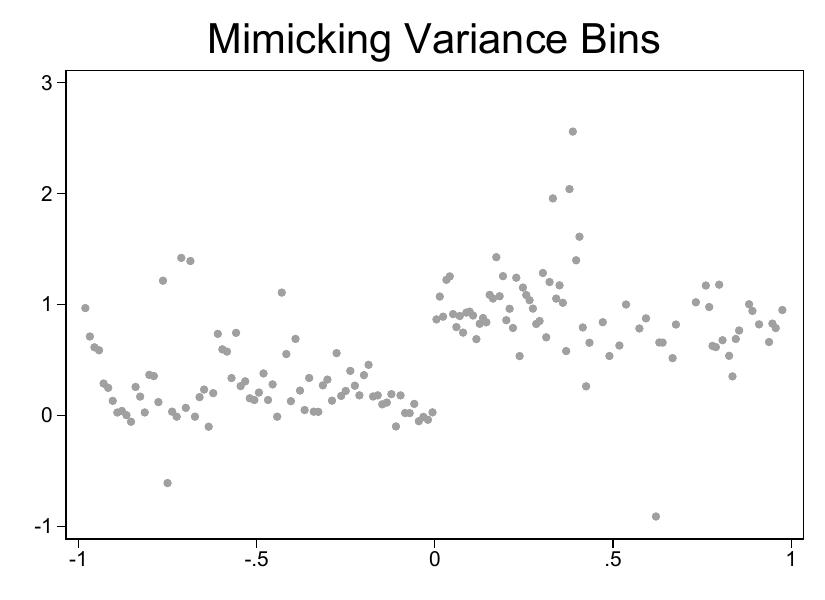}\includegraphics[width=\textwidth/3]{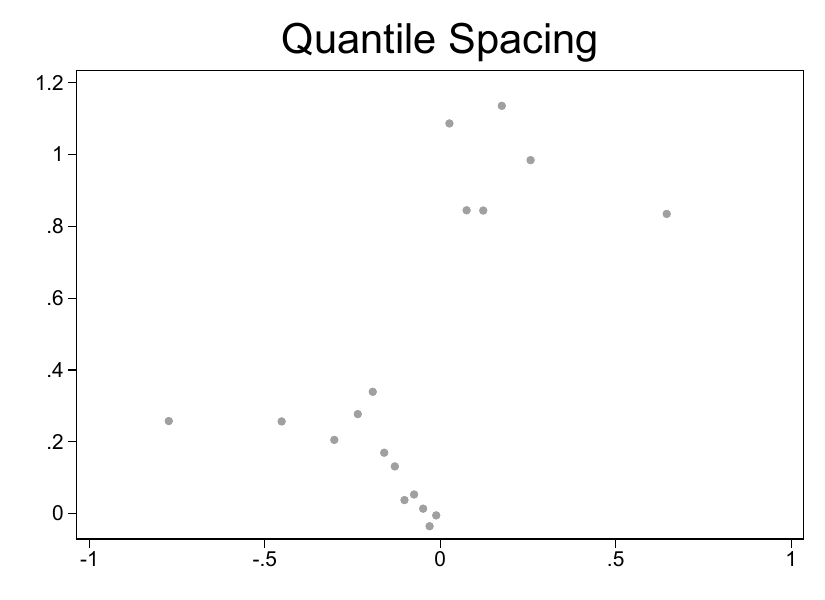}
    \includegraphics[width=\textwidth/3]{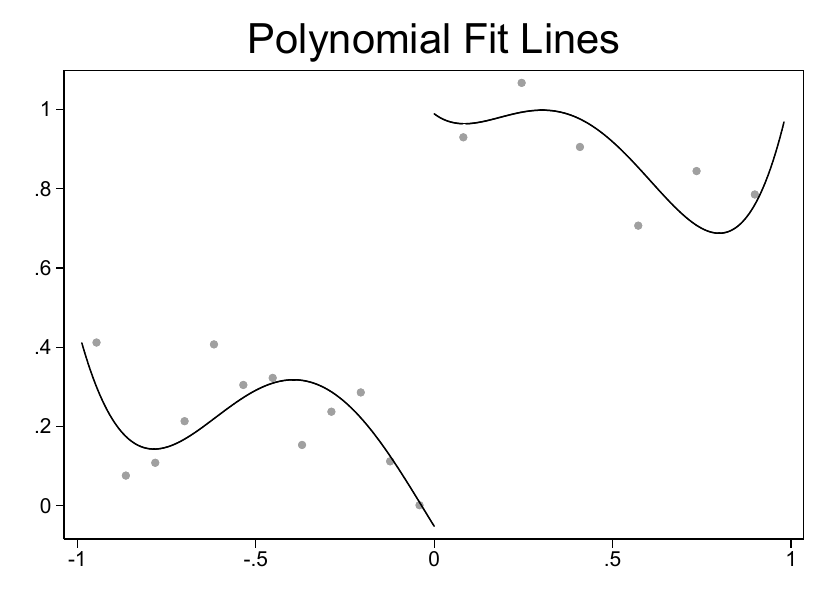}\includegraphics[width=\textwidth/3]{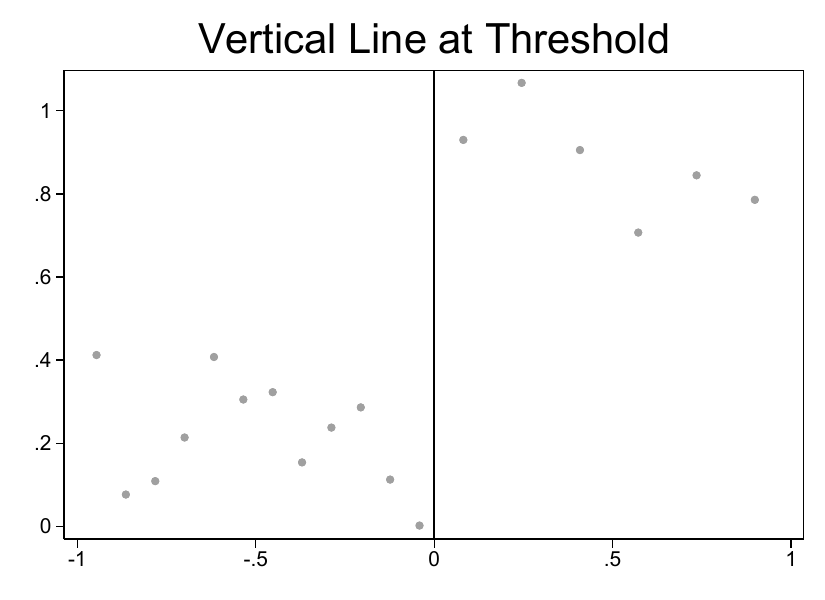}\includegraphics[width=\textwidth/3]{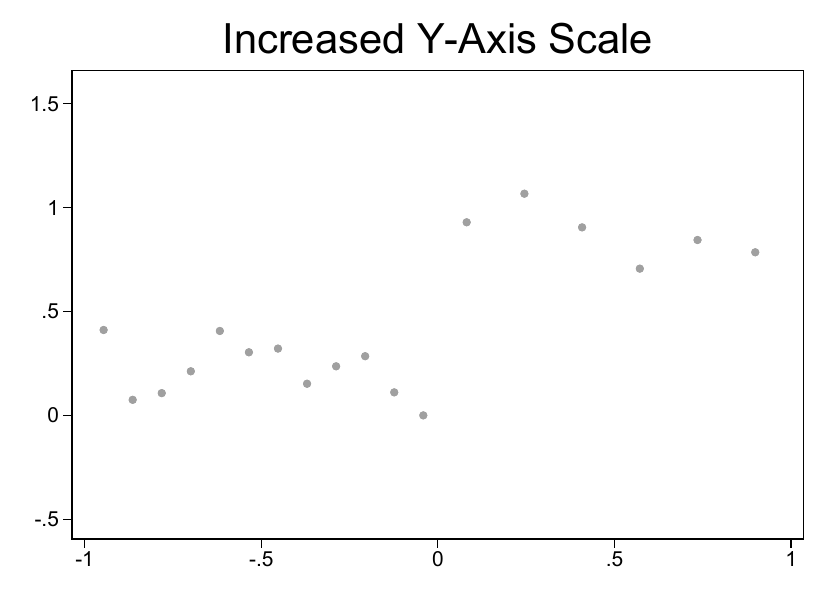}
    \caption{\label{fig:rdd_treatments}Illustration of Graphical Parameters Tested}
    \vspace{3mm}
    \centering{}%
    \begin{minipage}[t]{\textwidth}%
        {\scriptsize{}Notes: Plots are based on the original data from DGP9. The ``Base RD Plot'' uses evenly spaced IMSE-optimal (or ``large'') bins and Stata 14's default axis scaling. The graph labeled ``Mimicking Variance Bins'' uses mimicking-variance (or ``small'') bins and maintains equal bin spacing and default axis scaling. The ``Quantile Spacing'' graph uses quantile-spaced large bins and default axis scaling. The three graphs in the second row present the same binned points as the base plot but with different graphical options: the first imposes parametric polynomial fit lines, the second adds a vertical line indicating the policy threshold at zero, and the third uses a $y$-axis scaling that is twice the Stata 14 default. The bin selectors and spacings---large, small, equally spaced, and quantile spaced---come from the study by \citet{Calonicoetal2015}.}%
    \end{minipage}
\end{figure}

\begin{figure}[H]
    \centering\includegraphics[width=5.5in]{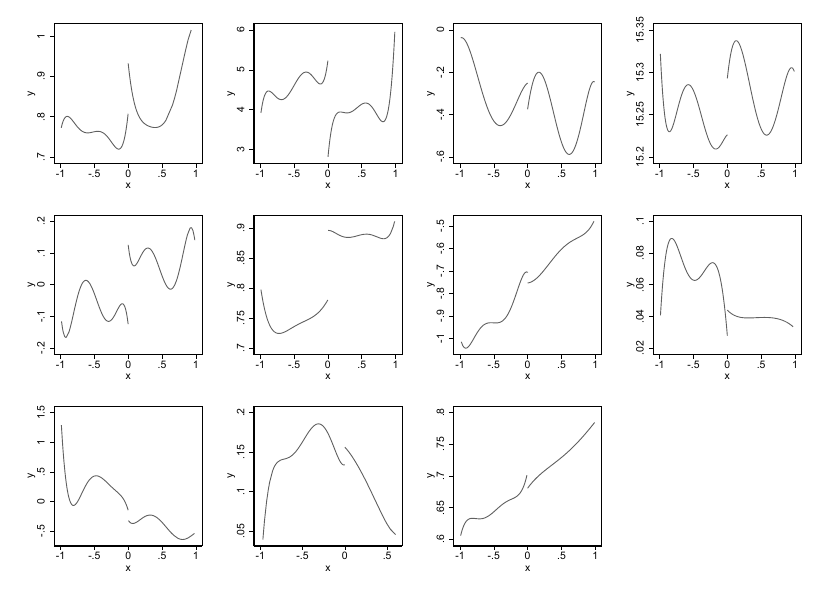}
    \caption{\label{fig:RDD-CEF}Conditional Expectation Functions DGP1-DGP11}
    \vspace{3mm}
    \begin{minipage}[t]{\textwidth}%
        {\scriptsize{}Notes: Each panel represents the plot of the conditional expectation function for DGP1 through DGP11. These functions are obtained by fitting piecewise global quintics to the original microdata after normalizing the support of the running variable and trimming the tails, as described in Section \ref{sec:creation-datasets-graphs}.}%
    \end{minipage}
\end{figure}

\newpage

\begin{figure}[H]
    \centering\includegraphics[width=\textwidth/3]{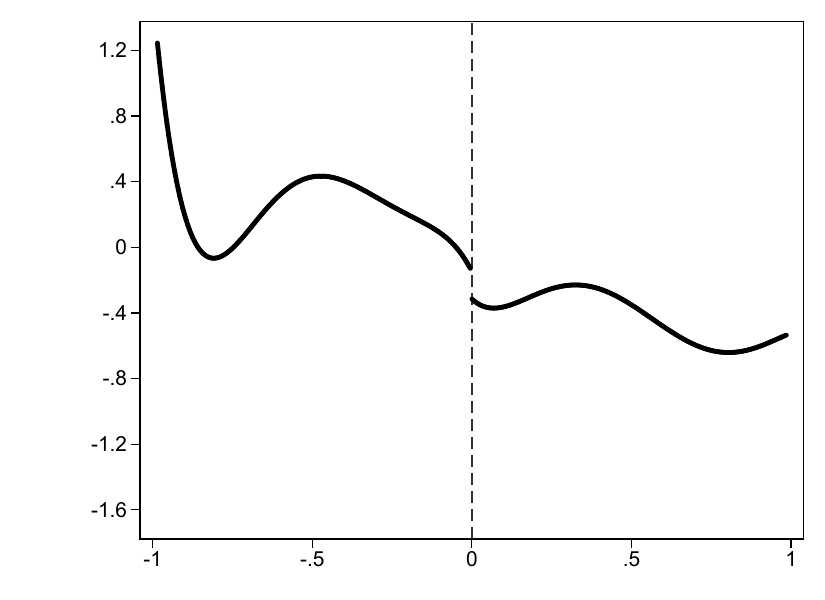}\includegraphics[width=\textwidth/3]{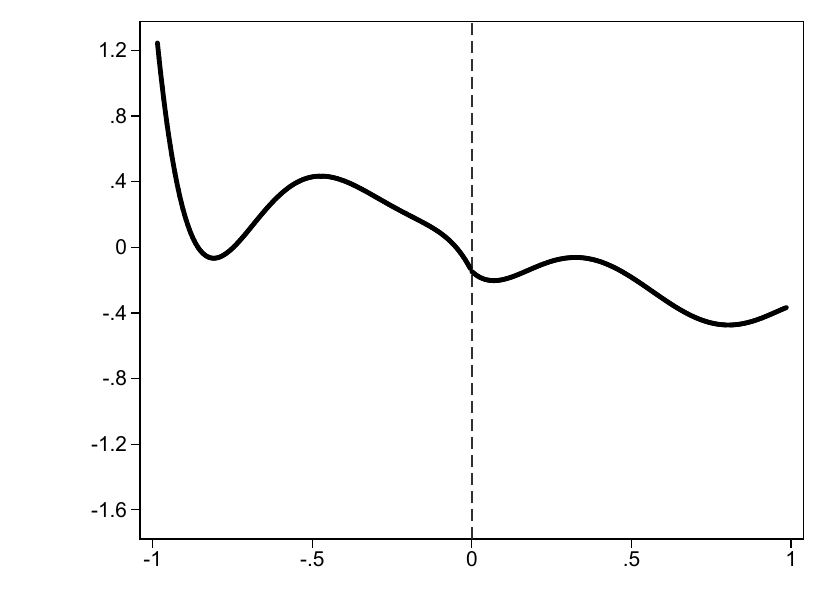}\includegraphics[width=\textwidth/3]{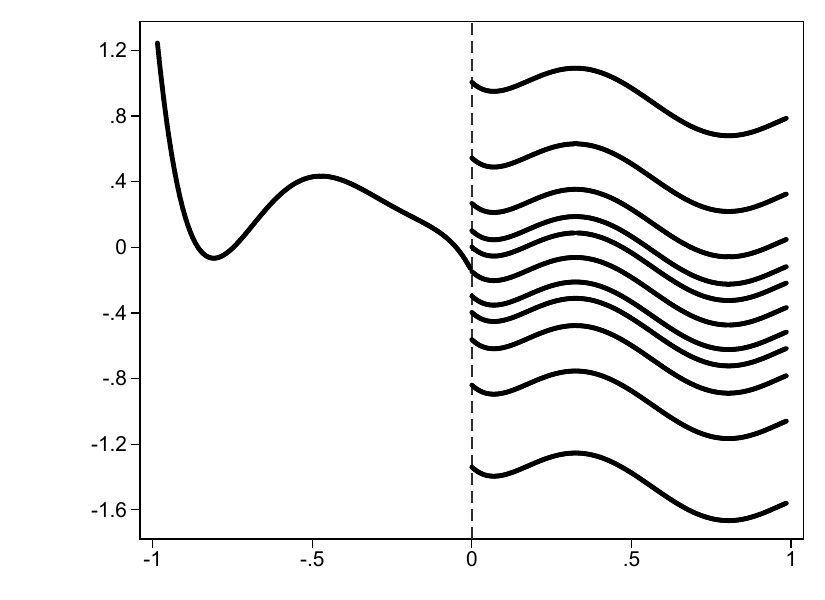}
    \caption{\label{fig:dgp_construction}Creation of Conditional Expectation Functions, DGP9}
    \vspace{3mm}
    \centering{}%
    \begin{minipage}[t]{\textwidth}%
        {\scriptsize{}Notes: The leftmost figure plots the piecewise quintic CEF fitted to the original microdata underlying DGP9 after normalizing the support of the running variable and trimming the tails, as described in Section \ref{sec:creation-datasets-graphs}. The central figure removes the discontinuity by setting the right intercept to equal the left intercept. The rightmost figure plots the final 11 CEFs for DGP9 corresponding to different levels of discontinuity by further changing the right intercept.}%
    \end{minipage}
\end{figure}

\begin{landscape}
    \begin{figure}[H]
        \centering\includegraphics[width=4.5in]{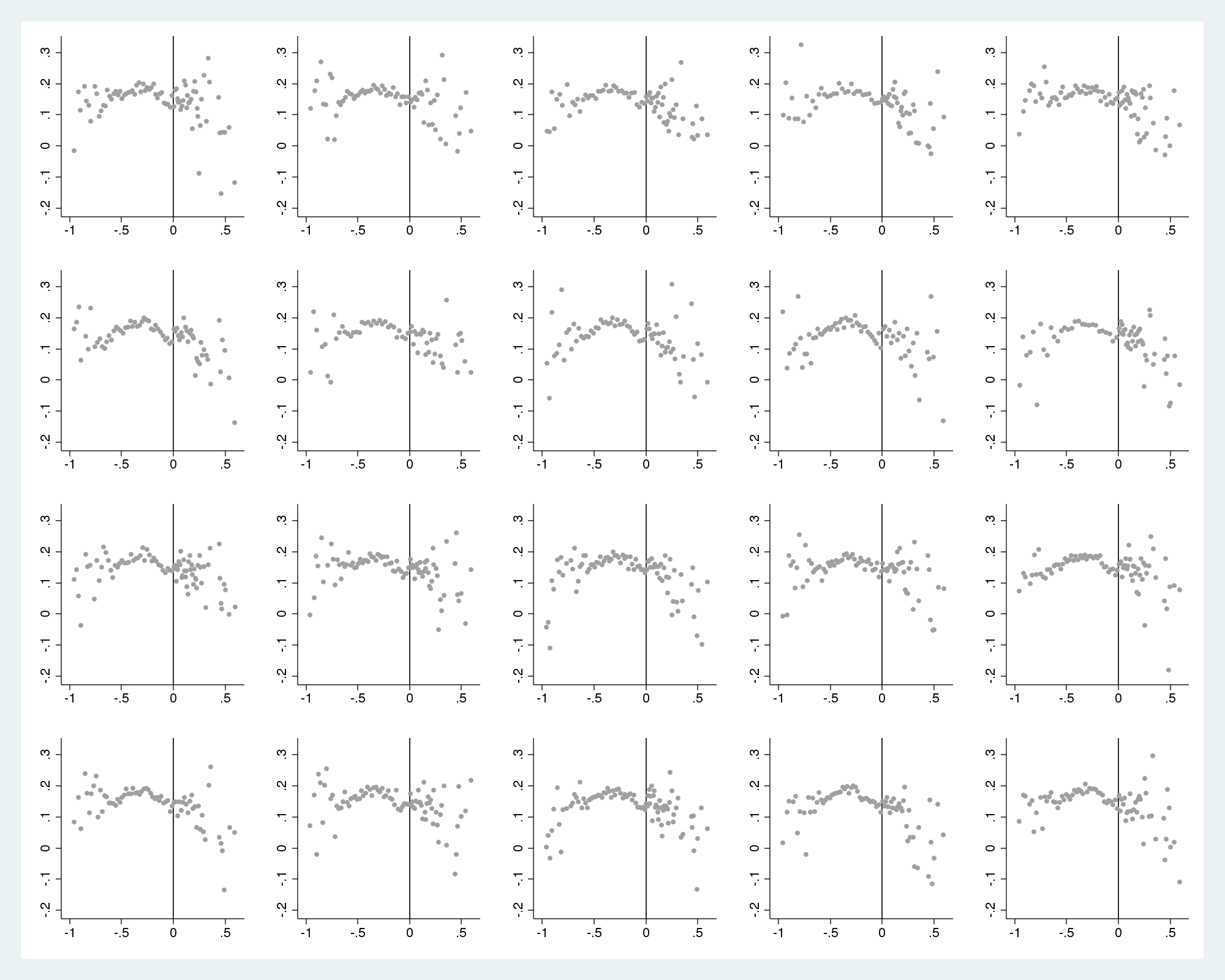}\includegraphics[width=4.5in]{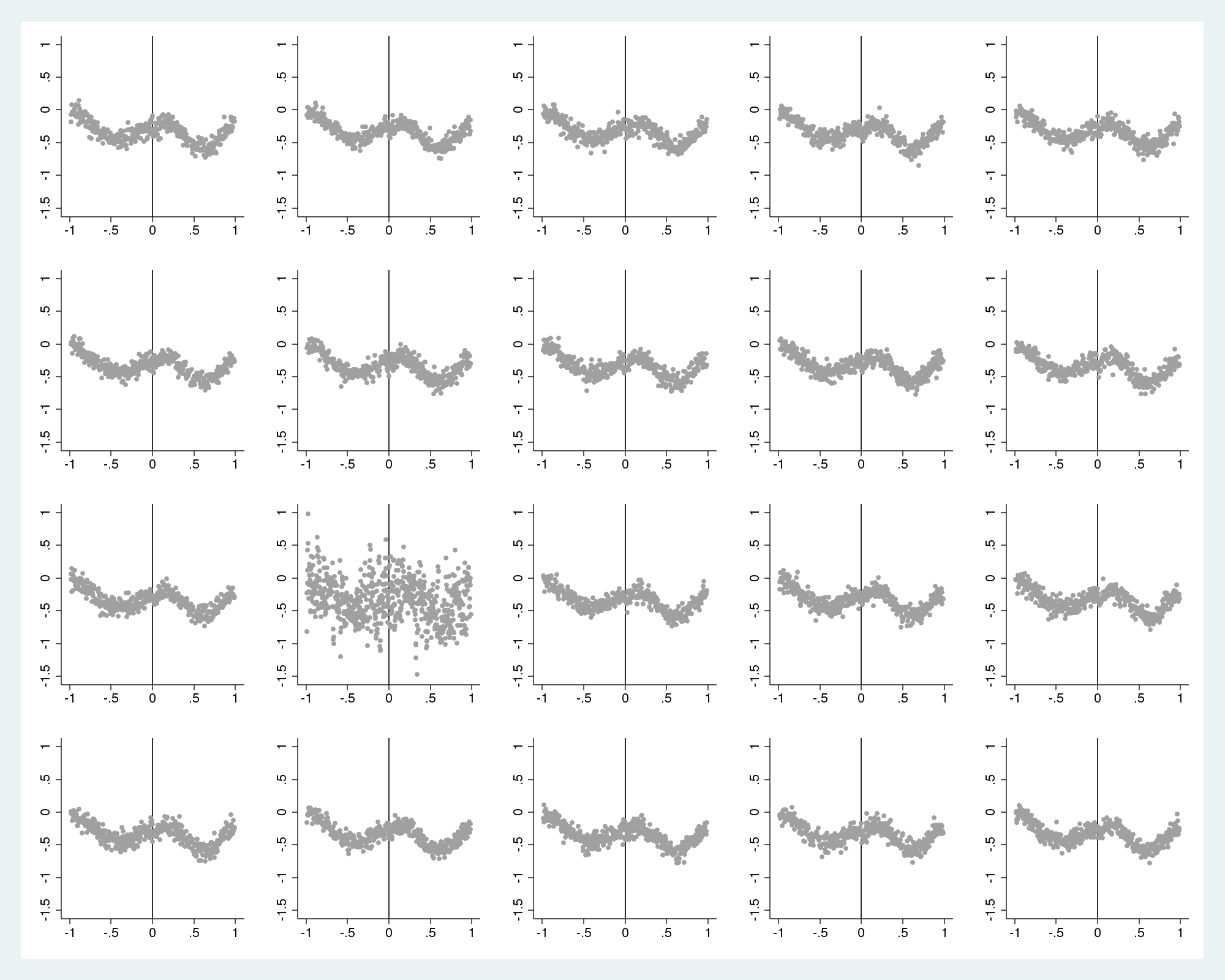}
        \caption{\label{fig:Lineup-Protocol-Example-1}Lineup Protocol
        Graph Examples: DGP10 and DGP3}
        \vspace{3mm}
        \centering{}%
        \begin{minipage}[t]{0.8\columnwidth}%
            {\scriptsize{}Notes: One of the 20 graphs for each lineup protocol is produced from the real data. The other 19 are produced from simulated data drawn from the DGP calibrated to the real data. For DGP10 on the left, the graph produced from data used in the original paper is in row $-3\cdot2+7$ and column $\sqrt{4+5}-2$ (using conventional matrix index notation), while the remaining graphs are generated from our specified DGP (we follow \citealp{Majumderetal2013} and use simple arithmetic to indicate the graph location, so that readers do not accidentally see the answer before reaching their own conclusion). For DGP3 on the right, the graph made with the original data is in row 3 and column 2.}%
        \end{minipage}
    \end{figure}

    \begin{figure}[H]
        \centering\includegraphics[width=3.2in]{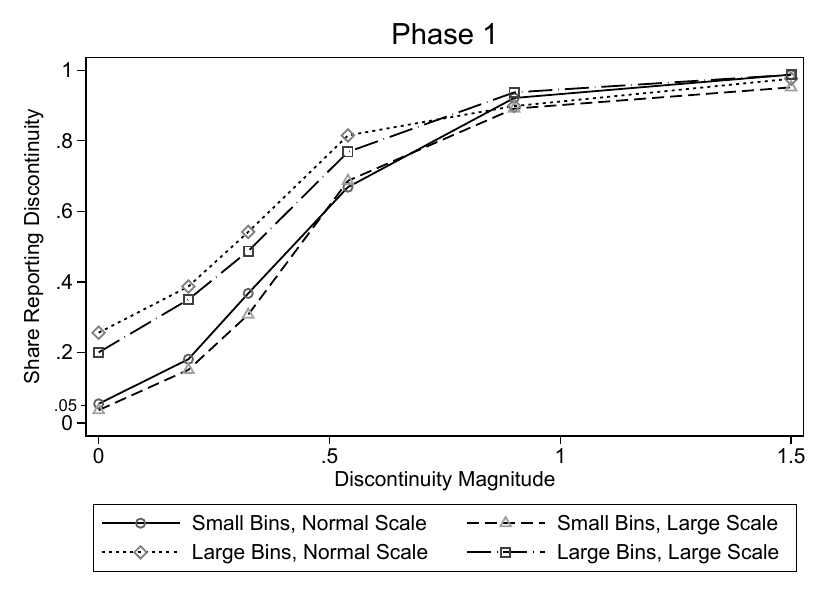}\includegraphics[width=3.2in]{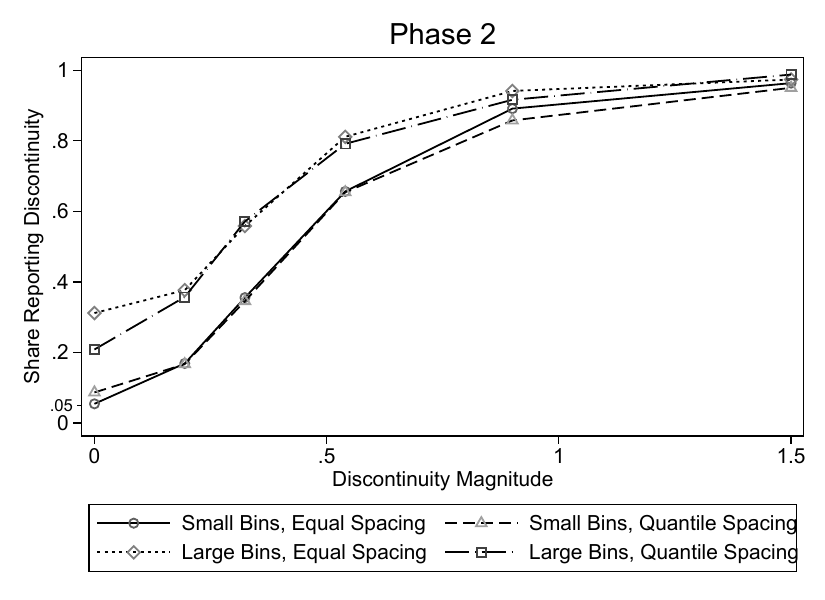}
        \includegraphics[width=3.2in]{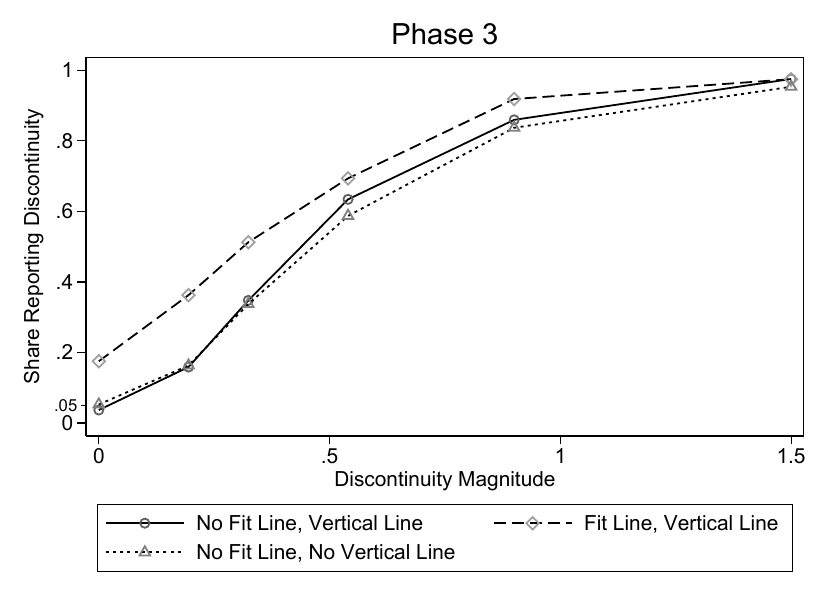}\includegraphics[width=3.2in]{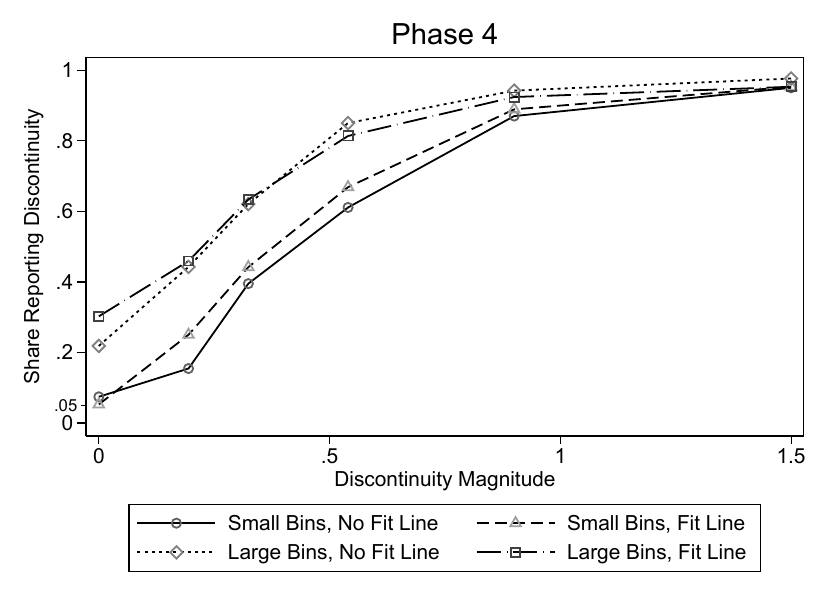}
        \caption{\label{fig:RDD-Power}Power Functions by Experimental Phase}
        \vspace{3mm}
        \begin{minipage}[t]{0.85\columnwidth}%
            {\scriptsize{}Notes: Plotted are empirical power functions from the four non-expert experiment phases. As described in Table \ref{tab:rdd-details-main}, we test the effects of two graphical parameters in each phase while holding fixed the other three parameters. The legend underneath each graph indicates the combination of graphical parameters each phase tests. The power functions are defined in Section \ref{sec:conceptual-framework}. The discontinuity magnitude on the $x$-axis is specified as a multiple of the error standard deviation. The $y$-axis represents the share of respondents classifying a graph as having a discontinuity at the policy threshold.
            \textit{Large bins} corresponds to the \citet{Calonicoetal2015} bin width selector that minimizes the integrated mean squared error of the bin-average estimators of the conditional expectation function; \textit{Small bins} corresponds to the \citet{Calonicoetal2015} bin width selector that aims to approximate the variability of the underlying data; \textit{Quantile spacing} indicates that bins were spaced by quantiles rather than evenly spaced; \textit{Fit line} indicates the presence of parametric fit lines; \textit{Vertical line} indicates the presence of a vertical line at the policy threshold; \textit{Normal scale} corresponds to the default $y$-axis scaling using Stata 14; \textit{Large scale} doubles that default range.}%
        \end{minipage}
    \end{figure}
\end{landscape}

\begin{figure}[H]
    \centering\includegraphics[width=3.5in]{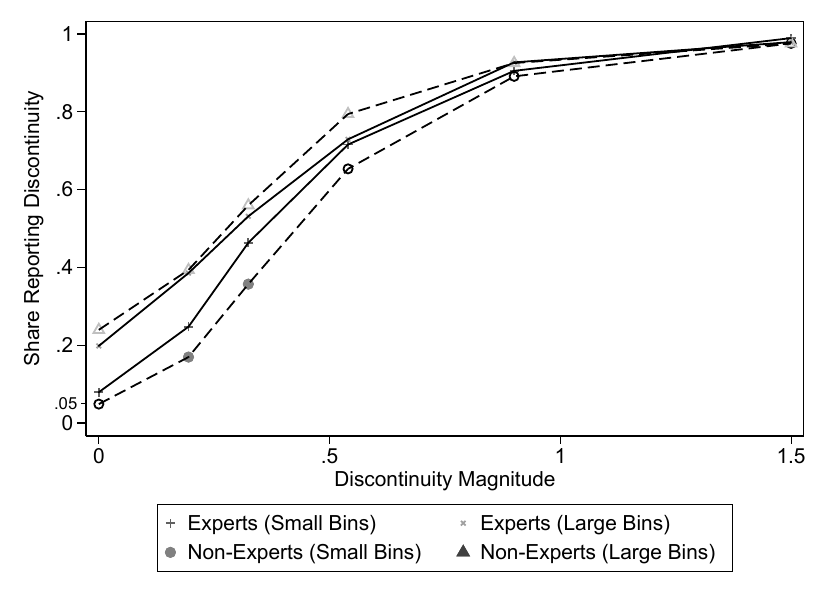}
    \caption{\label{fig:Experts-Nonexperts-ROC}Expert vs Non-Expert Performance}
    \vspace{3mm}
    \begin{minipage}[t]{0.75\columnwidth}%
        {\scriptsize{}\centering{Notes: Plotted are the power functions for the experts and non-experts. Markers in the figure are shown as solid, matching the legend, whenever non-experts perform statistically significantly differently at the 5\% level from experts under the same graphical treatment and at the same discontinuity magnitude. Markers in the figure are shown as the same shape but hollow instead whenever non-experts do not perform statistically significantly differently at the 5\% level from experts under the same graphical treatment and at the same discontinuity magnitude. \textit{Large bins} corresponds to the \citet{Calonicoetal2015} bin width selector that minimizes the integrated mean squared error of the bin-average estimators of the conditional expectation function; \textit{Small bins} corresponds to the \citet{Calonicoetal2015} bin width selector that aims to approximate the variability of the underlying data.}}%
    \end{minipage}
\end{figure}

\begin{figure}[H]
    \centering\includegraphics[width=5.5in]{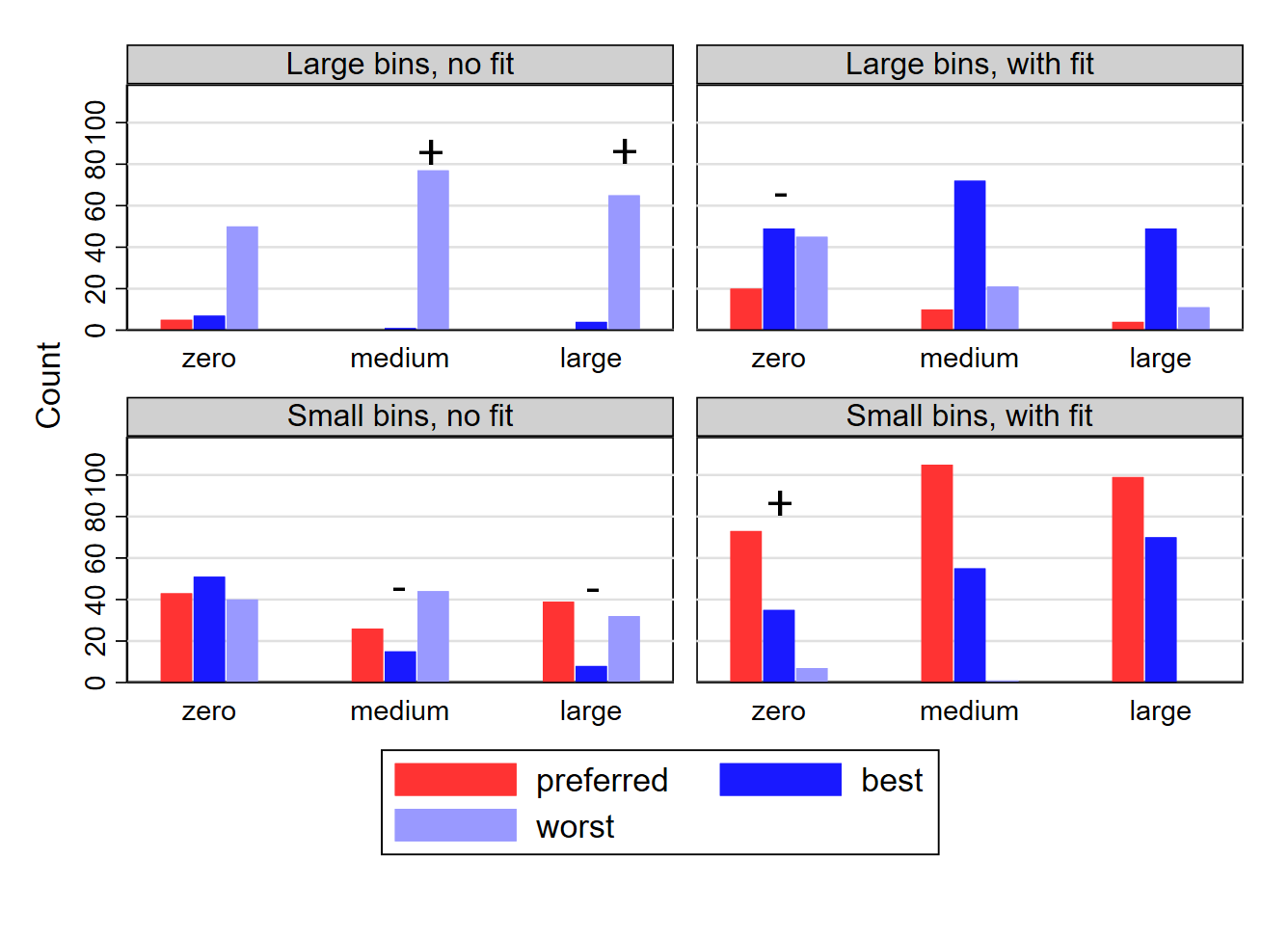}
    \caption{\label{fig:RDD-Experts-Pred-Pref}Expert Preferences and Beliefs
    Regarding Non-Expert Performance}
    \vspace{3mm}
    \begin{minipage}[t]{0.75\columnwidth}%
        {\scriptsize{}Notes: Each panel shows the number of experts who report the given treatment as being their preferred treatment at a given discontinuity level; believe it to be the best-performing treatment among our non-expert sample; or believe it to be the worst-performing treatment among our non-expert sample. For comparison, the treatments that performed best and worst in that sample are marked with a $+$ and $-$ sign respectively. \textit{Large bins} corresponds to the \citet{Calonicoetal2015} bin width selector that minimizes the integrated mean squared error of the bin-average estimators of the conditional expectation function; \textit{Small bins} corresponds to the \citet{Calonicoetal2015} bin width selector that aims to approximate the variability of the underlying data; \textit{with fit/no fit} indicate the presence or absence of parametric fit lines.}%
    \end{minipage}
\end{figure}

\begin{landscape}
    \begin{figure}[H]
        \centering\includegraphics[width=3.5in]{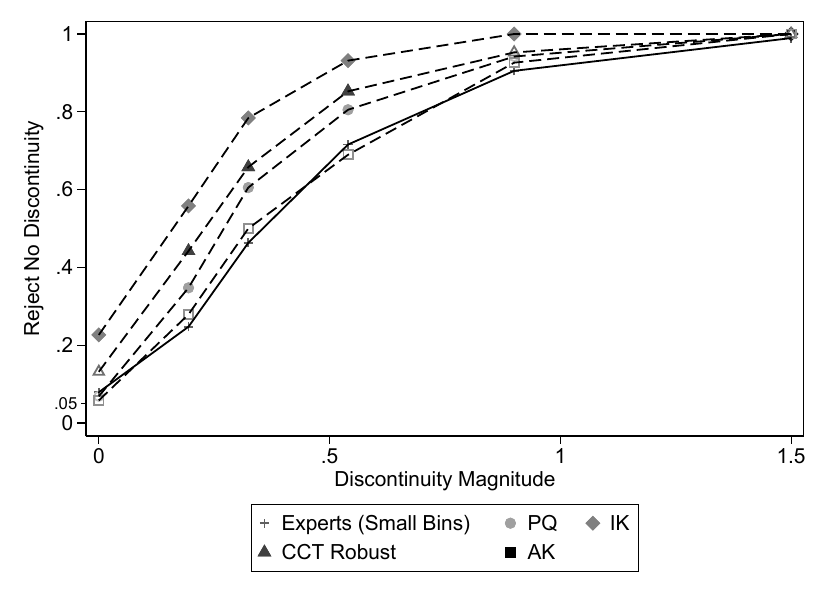}\includegraphics[width=3.5in]{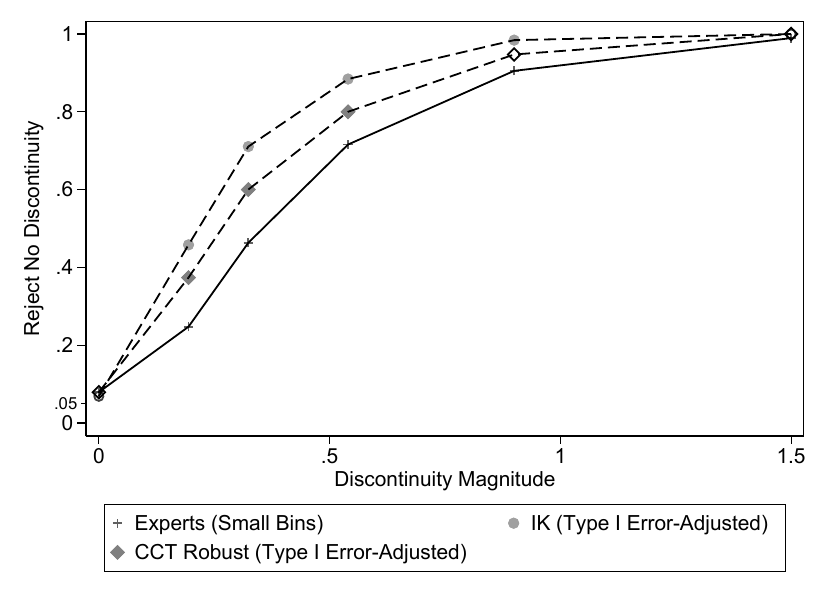}
        \caption{\label{fig:RDD-Power-Experts-vs-All}Expert Visual vs Econometric Inference}
        \vspace{3mm}
        \centering{}%
        \begin{minipage}[t]{0.75\columnwidth}%
            {\scriptsize{}Notes: PQ uses a correctly specified regression model with global piecewise quintics above and below the treatment threshold and assuming homoskedasticity. IK is based on a local linear estimator using the IK bandwidth. CCT Robust is the default RDD inference procedure from CCT's }\texttt{\scriptsize{}rdrobust}{\scriptsize{}. AK uses the }\texttt{\scriptsize{}RDHonest}{\scriptsize{} procedure with the rule-of-thumb bound on each DGP's second derivative. Markers in the figure are shown as solid, matching the legend, whenever the econometric inference procedure performs statistically significantly differently at the 5\% level from expert visual inference at the same discontinuity magnitude. Markers in the figure are shown as the same shape but hollow instead whenever the econometric inference procedure does not perform statistically significantly differently at the 5\% level from expert visual inference at the same discontinuity magnitude. Expert visual inference is shown for the case of \textit{small bins}, corresponding to the \citet{Calonicoetal2015} bin width selector that aims to approximate the variability of the underlying data.}%
        \end{minipage}
    \end{figure}
\end{landscape}

\begin{figure}[H]
    \hspace{-0.4cm}\centering\includegraphics[width=3.70in]{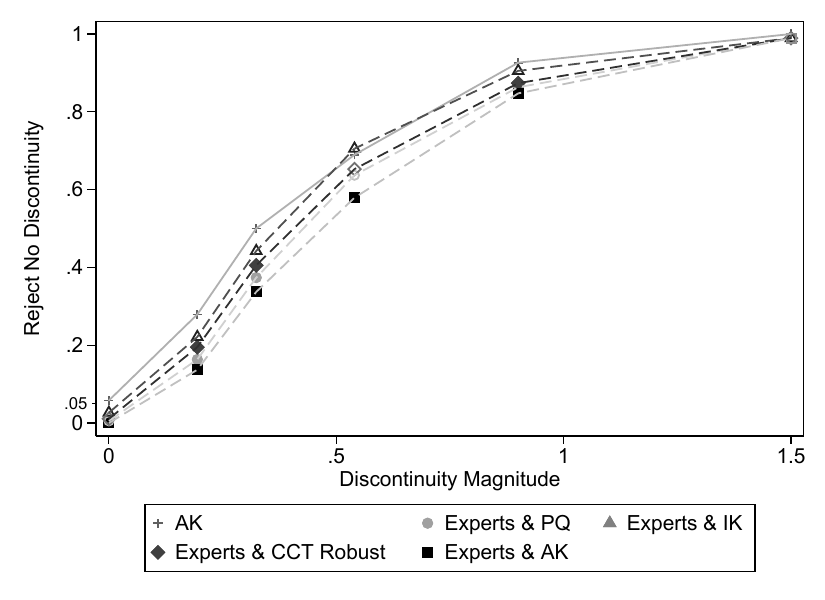}
    \hspace{0.0cm}\includegraphics[width=3.5in]{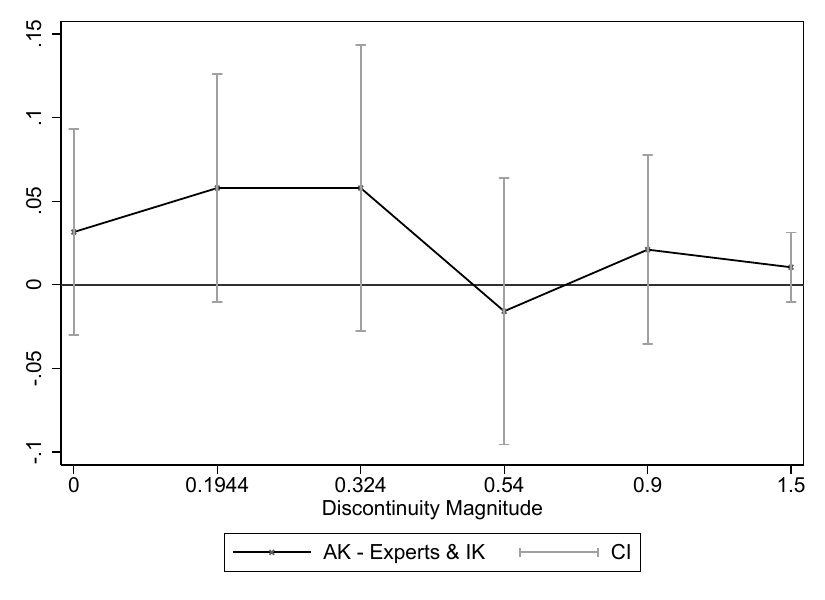}
    \caption{\label{fig:RDD-Visual-Econometric-Inference-Combined-vs-AK}Combined Expert Visual and Econometric
    Inference vs AK}
    \vspace{3mm}
    \centering{}%
    \begin{minipage}[t]{0.75\columnwidth}%
        {\scriptsize{}Notes: Combined expert and visual econometric inference is based on the performance of an inference procedure which infers a discontinuity if and only if both the visual and econometric procedures reject no discontinuity. The relevant expert inference is based on \textit{small bins}, corresponding to the \citet{Calonicoetal2015} bin width selector that aims to approximate the variability of the underlying data. PQ uses a correctly specified regression model with global piecewise quintics above and below the treatment threshold and assuming homoskedasticity. IK is based on a local linear estimator using the IK bandwidth. CCT Robust is the default RDD inference procedure from CCT's }\texttt{\scriptsize{}rdrobust}{\scriptsize{}. AK uses the }\texttt{\scriptsize{}RDHonest}{\scriptsize{} procedure with the rule-of-thumb bound on each DGP's second derivative. Markers in the figure are shown as solid, matching the legend, whenever the combined econometric-expert-visual inference procedure performs statistically significantly differently at the 5\% level  from the AK procedure at the same discontinuity magnitude. Markers in the figure are shown as the same shape but hollow instead whenever the combined econometric-expert-visual inference procedure does not perform statistically significantly differently at the 5\% level  from the AK procedure at the same discontinuity magnitude. In the graph on the bottom, we plot the difference between the combined IK-expert-visual and AK inference procedures, along with 95\% two-way cluster-robust confidence intervals per \citet{cameronetal2011main} that account for potential correlation between visual and econometric inferences at the dataset level (see Online Appendix \ref{subsec:two-way-clustering} for details).}%
    \end{minipage}
\end{figure}

\clearpage{}

\newpage{}

\pagenumbering{arabic}

\appendix

\noindent \begin{center}
~\\
%{\LARGE{}Visual Inference and Graphical Representation in }\\
%{\LARGE{}Regression Discontinuity Designs}\\
{\Large{}VISUAL INFERENCE AND GRAPHICAL REPRESENTATION IN }\\
{\Large{}REGRESSION DISCONTINUITY DESIGNS}\\ 
\par\end{center}

\bigskip

\begin{singlespace}
\begin{center}
{\large{}Christina Korting, University of Delaware}\\
{\large{}Carl Lieberman, U.S. Census Bureau}\\
{\large{}Jordan Matsudaira, Columbia University}\\
{\large{}Zhuan Pei\daggerfootnote{Corresponding author. Associate Professor, Department of Economics and Jeb E. Brooks School of Public Policy, Cornell University, Ithaca, NY 14853, USA; zhuan.pei@cornell.edu}, Cornell University and IZA}\\
{\large{}Yi Shen, University of Waterloo}\\
\end{center}
\end{singlespace}

\begin{center}
\bigskip
\bigskip
\bigskip
{\Large{}APPENDIX}\\
\par\end{center}

\newpage
\startcontents[sections]
\printcontents[sections]{l}{1}{\setcounter{tocdepth}{2}}
\newpage

\section{Theoretical Details: Estimator Properties, DGP Space, Bin Selectors, and MSE Decomposition}

\renewcommand{\theequation}{A\arabic{equation}}

\setcounter{equation}{0}

\setcounter{footnote}{0}

\subsection{Properties of the Type I and Type II Error Probability Estimators\label{sec:Proofs}}

\setcounter{figure}{0} \renewcommand{\thefigure}{A.\arabic{figure}} 
\setcounter{table}{0} \renewcommand{\thetable}{A.\arabic{table}} 

Following Section \ref{sec:conceptual-framework}, Proposition \ref{prop:g} states the properties of the $g$-specific estimator $\hat{p}(\gamma,g,d)$.
\begin{prop}
\label{prop:g}Under Assumption \ref{assu:Assum_iid} and for a given
DGP $g$
\begin{enumerate}
\item $E[\hat{p}(\gamma,g,d)]=p(\gamma,g,d)$
\item $\hat{p}(\gamma,g,d)\overset{\mathbb{P}}{\to}p(\gamma,g,d)$ as $M\to\infty$
\item $M\cdot\hat{p}(\gamma,g,d)\sim\text{Binomial}(M,p(\gamma,g,d))$
\item $\sqrt{M}(\hat{p}(\gamma,g,d)-p(\gamma,g,d))\Rightarrow N\left(0,p(\gamma,g,d)\cdot(1-p(\gamma,g,d))\right).$
\end{enumerate}
\end{prop}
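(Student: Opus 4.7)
The plan is to observe that everything follows from the fact that, under Assumption \ref{assu:Assum_iid}, the classifications $R_1,\dots,R_M$ (I suppress the arguments $T_i(\gamma,g,d)$) are i.i.d. Bernoulli random variables with success probability $p(\gamma,g,d)$. This is because each $R_i$ takes values in $\{0,1\}$ and has mean $p(\gamma,g,d)$, so its distribution is pinned down as Bernoulli. Once this is in hand, parts (1)--(4) are each standard one-line consequences.

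First I would establish (3), the distributional claim, since it immediately implies (1) and is the structural input for (4). By definition, $M \cdot \hat{p}(\gamma,g,d) = \sum_{i=1}^{M} R_i$ is a sum of $M$ i.i.d. Bernoulli$(p(\gamma,g,d))$ variables, so it follows the Binomial$(M,p(\gamma,g,d))$ distribution. For (1), unbiasedness follows either from this (the mean of a Binomial$(M,p)$ is $Mp$, so $E[\hat p] = p$) or directly by linearity of expectation applied to the definition of $\hat{p}$.

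Next I would obtain (2) and (4) from the classical limit theorems for i.i.d. sequences. Consistency is just the weak law of large numbers applied to the bounded (and hence integrable) sequence $\{R_i\}$: $\hat{p}(\gamma,g,d) \overset{\mathbb{P}}{\to} E[R_1] = p(\gamma,g,d)$. For (4), I would invoke the Lindeberg--L\'evy central limit theorem. The variance of each $R_i$ is $\mathrm{Var}(R_i) = p(\gamma,g,d)(1-p(\gamma,g,d))$, which is finite (bounded by $1/4$), so
\[
\sqrt{M}\bigl(\hat{p}(\gamma,g,d) - p(\gamma,g,d)\bigr) \Rightarrow N\bigl(0,\, p(\gamma,g,d)(1-p(\gamma,g,d))\bigr).
\]

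There is no real obstacle here; the content of the proposition is purely that the aggregator $\hat{p}$ inherits standard Bernoulli/Binomial sampling properties once Assumption \ref{assu:Assum_iid} is imposed. The only subtlety worth flagging is that Assumption \ref{assu:Assum_iid} does not literally say Bernoulli, only i.i.d. with a prescribed mean; but since $R_i \in \{0,1\}$ by construction as a binary classification indicator, the Bernoulli distribution is forced and the rest is immediate. The degenerate cases $p(\gamma,g,d) \in \{0,1\}$ require no separate treatment: (3) still holds (a degenerate Binomial), and in (4) the limiting normal simply collapses to a point mass at $0$, consistent with $\hat{p}$ being deterministic in those cases.
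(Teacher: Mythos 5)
Your proof is correct and follows exactly the route the paper has in mind: the paper simply states that Proposition \ref{prop:g} ``follows trivially from Assumption \ref{assu:Assum_iid},'' and your write-up supplies the standard details (the $R_i$'s are i.i.d.\ Bernoulli with success probability $p(\gamma,g,d)$, so the sum is Binomial, and the weak law of large numbers and Lindeberg--L\'evy CLT give parts (2) and (4)). Nothing further is needed.
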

The proof of the proposition follows trivially from Assumption \ref{assu:Assum_iid}.

Parts 1 and 2 of the proposition state that $\hat{p}(\gamma,g,d)$ is an unbiased and consistent estimator for $p(\gamma,g,d)$. Part 3 states that $\hat{p}(\gamma,g,d)$ has a scaled binomial finite sample distribution, for which methods such as the Clopper-Pearson confidence interval have been developed for finite sample inference, which allows us to construct confidence intervals on the $g$-specific type I and II error probabilities. The standard asymptotic normality result is provided in Part 4, but we do not invoke it in our analysis, as $M$ is equal to eight in our experiments due to resource constraints.

Now we state the properties of the overall estimator $\hat{\bar{p}}(\gamma,d)$.

\begin{prop}
\label{prop:overall}Under Assumptions \ref{assu:Assum_iid} and \ref{assu:Assum_g}
    \begin{enumerate}
        \item $E[\hat{\bar{p}}(\gamma,d)]=\bar{p}(\gamma,d)$
        \item $\hat{\bar{p}}(\gamma,d)\overset{\mathbb{P}}{\to}\bar{p}(\gamma,d)$
        as $J\to\infty$
        \item Conditional on $\{g_{j}\}_{j=1}^{J}$, $M\cdot J\cdot\hat{\bar{p}}(\gamma,d)$
        follows a Poisson binomial distribution
        \item As $J\to\infty$, $\sqrt{J}(\hat{\bar{p}}(\gamma,d)-\bar{p}(\gamma,d))\Rightarrow N\left(0,var_{g\in\mathcal{G}}(\hat{p}(\gamma,g,d))\right)$, with\\
        $var_{g\in\mathcal{G}}(\hat{p}(\gamma,g,d))=\frac{1}{M}\bar{p}(\gamma,d)+(1-\frac{1}{M})E[p(\gamma,g,d)^{2}]-\bar{p}(\gamma,d)^{2}.$
    \end{enumerate}
\end{prop}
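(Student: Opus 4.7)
The plan is to treat the four parts in order, leveraging the fact that Assumption \ref{assu:Assum_g} makes the DGPs $g_1,\dots,g_J$ an i.i.d.\ sample from $\mathcal{G}$ and that, conditional on the $g_j$'s, Proposition \ref{prop:g} already characterizes the within-DGP estimator $\hat{p}(\gamma,g_j,d)$. Because the participants assigned to different strata $j$ are distinct, the pairs $(g_j, \hat{p}(\gamma,g_j,d))$ are i.i.d.\ across $j$, which is what drives everything.

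For unbiasedness (part 1), I would apply the law of iterated expectations: condition on $g_j$, invoke Proposition \ref{prop:g}(1) to get $E[\hat{p}(\gamma,g_j,d)\mid g_j]=p(\gamma,g_j,d)$, then take the outer expectation over $g_j\sim\mathcal{G}$ and use the definition $\bar{p}(\gamma,d)=E_{g\in\mathcal{G}}[p(\gamma,g,d)]$. Consistency (part 2) then follows from Khinchine's weak law of large numbers applied to the i.i.d.\ sequence $\{\hat{p}(\gamma,g_j,d)\}_{j=1}^{J}$, whose common mean is $\bar{p}(\gamma,d)$ by the same iterated-expectation argument; the variables are bounded in $[0,1]$ so integrability is automatic. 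For part 3, note that conditional on $\{g_j\}_{j=1}^{J}$, the $MJ$ indicators $R_i(T_i(\gamma,g_j,d))$ are mutually independent Bernoullis with success probabilities $p(\gamma,g_j,d)$ (independent across both $i$ and $j$ by Assumption \ref{assu:Assum_iid} within a stratum and by the disjointness of participants across strata); by the definition of the Poisson binomial distribution, their sum $MJ\cdot\hat{\bar p}(\gamma,d)$ has the stated law.

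For part 4, I would apply the Lindeberg--L\'{e}vy CLT to the i.i.d.\ sequence $\{\hat{p}(\gamma,g_j,d)\}_{j}$ to obtain $\sqrt{J}(\hat{\bar p}(\gamma,d)-\bar p(\gamma,d))\Rightarrow N(0,\sigma^{2})$ with $\sigma^{2}=\mathrm{Var}_{g\in\mathcal{G}}(\hat{p}(\gamma,g,d))$; again boundedness in $[0,1]$ ensures finite variance. The remaining computation is to decompose $\sigma^{2}$ via the law of total variance conditioning on $g$:
\[
\sigma^{2}=E\bigl[\mathrm{Var}(\hat{p}(\gamma,g,d)\mid g)\bigr]+\mathrm{Var}\bigl(E[\hat{p}(\gamma,g,d)\mid g]\bigr).
\]
Proposition \ref{prop:g}(3) gives the conditional variance as $p(\gamma,g,d)(1-p(\gamma,g,d))/M$ and the conditional mean as $p(\gamma,g,d)$. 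Substituting and using $E[p(\gamma,g,d)]=\bar{p}(\gamma,d)$ yields
\[
\sigma^{2}=\tfrac{1}{M}\bar{p}(\gamma,d)-\tfrac{1}{M}E[p(\gamma,g,d)^{2}]+E[p(\gamma,g,d)^{2}]-\bar{p}(\gamma,d)^{2},
\]
which rearranges to the stated expression.

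None of the steps is especially delicate; the only real bookkeeping is the variance decomposition in part 4, where one must be careful to combine the within-DGP binomial variability (the $1/M$ term) with the between-DGP variability of $p(\gamma,g,d)$ without double-counting. Boundedness of the classification indicators in $[0,1]$ sidesteps any integrability or Lindeberg-condition concerns, so the main obstacle is purely algebraic rather than probabilistic.
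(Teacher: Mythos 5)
Your proof is correct and follows essentially the same route as the paper's: iterated expectations for part 1, a weak law of large numbers for part 2, the definition of the Poisson binomial distribution for part 3, and a CLT combined with the law of total variance (with conditional variance $p(\gamma,g,d)(1-p(\gamma,g,d))/M$ from Proposition \ref{prop:g}) for part 4, including the same algebraic rearrangement of the variance. The only cosmetic difference is that the paper invokes triangular-array versions of the WLLN and CLT so that $\hat{p}(\gamma,g_{j},d)$ may change as $J$ grows (e.g., if $M$ or the participant allocation varies with $J$), whereas your standard Khinchine and Lindeberg--L\'{e}vy statements suffice for a fixed design.
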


\begin{proof}
    Part 1: 
    \begin{align*}
        E[\hat{\bar{p}}(\gamma,d)] & =\frac{1}{J}\sum_{j=1}^{J}E_{g_{j}\in\mathcal{G}}\left[E[\hat{p}(\gamma,g_{j},d)|g_{j}]\right]\\
         & =\frac{1}{J}\sum_{j=1}^{J}E_{g_{j}\in\mathcal{G}}\left[p(\gamma,g_{j},d)\right]\\
         & =\bar{p}(\gamma,d)
    \end{align*}
    
    \noindent Part 2: Assumptions \ref{assu:Assum_iid} and \ref{assu:Assum_g} imply that, for any $J$, $\{\hat{p}(\gamma,g_{j},d)\}_{1\leqslant j\leqslant J}$ is a set of independent and identically distributed random variables. Because $\hat{p}(\gamma,g_{j},d)^{2}$ is uniformly bounded between zero and one, the weak law of large numbers for triangular arrays (see, for example, \citealp{Durett2010}) applies to the set $\{\hat{p}(\gamma,g_{j},d)\}_{1\leqslant j\leqslant J,J=1,2,...}$. Because the expectation of $\hat{p}(\gamma,g_{j},d)$ is $\bar{p}(\gamma,d)$ for each $j$ (Part 1 of the proposition), we have the desired result.\footnote{The application of the law of large numbers for triangular arrays, as opposed to its standard counterpart, allows $\hat{p}(\gamma,g_{j},d)$ to change as $J$ increases. Take $\hat{p}(\gamma,g_{1},d)$, for example. We accommodate the scenarios where $M$, the number of participants who see $g_{1}$, changes with $J$, or where we allocate different individuals to see $g_{1}$ as we increase the number of DGPs sampled.}\bigskip{}
    
    \noindent Part 3: the statement simply follows the definition of a Poisson binomial distribution. \bigskip{}
    
    \noindent Part 4: Similar to the proof of part 2 of the Proposition, the result follows from an application of the Central Limit Theorem for Triangular Arrays (or the Lindeberg-Feller Central Limit Theorem):
    \[
        \sqrt{J}(\hat{\bar{p}}(\gamma,d)-\bar{p}(\gamma,d))\Rightarrow N\left(0,var_{g\in\mathcal{G}}(\hat{p}(\gamma,g,d))\right).
    \]
    The law of total variance implies that 
    \begin{align*}
        & var_{g\in\mathcal{G}}(\hat{p}(\gamma,g,d))\\
        = & E_{g\in\mathcal{G}}[var(\hat{p}(\gamma,g,d)|g)]+var_{g\in\mathcal{G}}(E[\hat{p}(\gamma,g,d)|g])\\
        = & E_{g\in\mathcal{G}}\left[\frac{p(\gamma,g,d)\left(1-p(\gamma,g,d)\right)}{M}\right]\\
         & +var_{g\in\mathcal{G}}(p(\gamma,g,d))
    \end{align*}
    and the last equality follows from Proposition \ref{prop:g}. Because
    \[
        var_{g\in\mathcal{G}}(p(\gamma,g,d))=E_{g\in\mathcal{G}}[p(\gamma,g,d)^{2}]-E_{g\in\mathcal{G}}[p(\gamma,g,d)]^{2}
    \]
    we have 
    \[
        var_{g\in\mathcal{G}}(p(\gamma,g,d))=\frac{1}{M}\bar{p}(\gamma,d)+(1-\frac{1}{M})E_{g\in\mathcal{G}}[p(\gamma,g,d)^{2}]-\bar{p}(\gamma,d)^{2}
    \]
    following simple algebra.
\end{proof}

Parts 1 and 2 of Proposition \ref{prop:overall} establish unbiasedness and consistency of $\hat{\bar{p}}(\gamma,d)$ as an estimator for $\bar{p}(\gamma,d)$. Part 3 provides the finite-sample distribution of $\hat{\bar{p}}(\gamma,d)$; the Poisson binomial distribution is that of a sum of independent but potentially non-identically distributed Bernoulli random variables. Part 4 states the asymptotic normality result for $\hat{\bar{p}}(\gamma,d)$ as $J\to\infty$. In principle, we can consistently estimate the variance and use Part 4 to conduct inference on $\bar{p}(\gamma,d)$ if $J$ is large. As with $M$, however, we choose a moderate $J$ ($11$) in our experiments due to resource constraints, and the asymptotic normality statement is, therefore, more conceptual than practical. Nevertheless, we make the following observations on the variance expression in Part 4 to help understand the variation in the estimator $\hat{\bar{p}}(\gamma,d)$. In essence, the variance expression reflects the block assignment nature of the DGPs to the participants. When $M=1$, each DGP is only assigned to a single participant, the $R_{i}$'s are i.i.d., and $\hat{\bar{p}}(\gamma,d)$ simply follows a (scaled) binomial distribution with variance $\bar{p}(\gamma,d)-\bar{p}(\gamma,d)^{2}$. When $M>1$, each DGP is assigned to a block of participants, the $R_{i}$'s are no longer independent within the same block, and the distribution of $\hat{\bar{p}}(\gamma,d)$ generally deviates from a (scaled) binomial. When $M$ is large, we can precisely estimate $p(\gamma,g,d)$ for each DGP $g$, and the variance of $\hat{\bar{p}}(\gamma,d)$ reduces to that of $p(\gamma,g,d)$ over the space of DGPs.

In the actual analysis of our experimental data, we use the normal approximation of the binomial random variable $\text{Binomial}(M\cdot J,\bar{p}(\gamma,d))$ when $M\cdot J$ is large. Based on a result from \citet{Hoeffding1956} and summarized by \citet{PercasandPercas1985}, using $\text{Binomial}(M\cdot J,\bar{p}(\gamma,d))$ leads to conservative inference on the Poisson binomial random variable $M\cdot J\cdot\hat{\bar{p}}(\gamma,d)$ when conditioning on $\{g_{j}\}_{j=1}^{J}$.\footnote{For a given $(\gamma,d)$, inference for $\hat{\bar{p}}(\gamma,d)$ using $\text{Binomial}(M\cdot J,\bar{p}(\gamma,d))$ is exact when $p(\gamma,g_{j},d)$ is the same for all $j$.} Therefore, conditioning on the set of selected DGPs, using the normal distribution $N\left(\hat{\bar{p}}(\gamma,d),\hat{\bar{p}}(\gamma,d)(1-\hat{\bar{p}}(\gamma,d))/(M\cdot J)\right)$ provides (approximately) conservative inference for $\hat{\bar{p}}(\gamma,d)$ and only relies on the product $M\cdot J$ being large, as opposed to $J$ being large by itself.

\subsection{The DGP Space\label{subsec:DGP-space}}

We formally define the space of DGPs, $\mathcal{G}$. As discussed in Section \ref{sec:conceptual-framework}, the data generating process $g$ has four components: i) the running variable distribution; ii) the conditional expectation function which is continuous at 0; iii) the error term distribution; iv) the sample size. Thus, $\mathcal{G}$ is the product of four spaces.

Let $\mathcal{G}_X$ be the collection of all probability distributions on $(\mathbb{R},\mathcal{B})$ with support on a compact interval that contains $0$ as an interior point, where $\mathcal{B}$ is the Borel $\sigma$-algebra of $\mathbb{R}$. Let $\mathcal{G}_\mu$ be the space of real-valued continuous functions on $\mathbb{R}$. Let $\mathcal{G}_u$ be the collection of mean-zero probability distributions on $(\mathbb{R},\mathcal{B})$. Let $\mathcal{G}_N$ be the set of positive integers. 

We can define the appropriate $\sigma$-algebra for each space. The $\sigma$-algebra for $\mathcal{G}_X$ is generated by the mappings $\nu\to\nu(A)$ from $\mathcal{G}_X$ to $(\mathbb R, \mathcal B)$ for all $A\in \mathcal B$ (p. 49 of \citealp{Kallenberg2017}). The $\sigma$-algebra for $\mathcal{G}_\mu$ is the cylinder $\sigma$-algebra (p. 2 of \citealp{Gusaketal2010}). The $\sigma$-algebra for $\mathcal{G}_u$ is defined analogously to that for $\mathcal{G}_X$. And the $\sigma$-algebra for $\mathcal{G}_N$ is the the power set of positive integers, i.e., the discrete $\sigma$-algebra.

The space of DGPs is 
\[
    \mathcal{G}=\mathcal{G}_X \times \mathcal{G}_\mu \times \mathcal{G}_u \times \mathcal{G}_N,
\]
which is naturally equipped with the product $\sigma$-algebra $\mathcal{F}$. Let the probability measure  $\mathbb{P}$ on $(\mathcal{G},\mathcal{F})$ reflect the distribution of DGPs underlying empirical RD datasets. Because of the infinite dimensionality of $\mathcal{G}$, it is impractical to theoretically characterize $\mathbb{P}$. Instead, we treat the datasets we gather as realizations from DGPs that are sampled from the probability space $(\mathcal{G},\mathcal{F},\mathbb{P})$.

\subsection{Frameworks for Evaluating and Conceptualizing Graphical Methods \label{subsec:Decision-theory}}

In this section, we first present two decision-theoretic frameworks for evaluation graphical methods. They correspond to the second and third criteria mentioned in Section \ref{sec:conceptual-framework}. At the end of the section, we then briefly discuss alternative ways to conceptualize graphs.

In the first framework, we incur a loss from reader $i$ misclassifying graph $T_i(\gamma,g,d)$:
\[
\mathcal{L}^{i}(\gamma,g,d) \equiv R_{i}(T_{i}(\gamma,g,0)) \cdot \kappa \cdot 1_{[d=0]}+(1-R_{i}(T_{i}(\gamma,g,d)))\cdot \varphi \cdot 1_{[d\neq0]}, 
\]
where $\kappa$ and $\varphi$ are the costs of a type I and type II error, respectively. This loss function generalizes the zero-one loss (e.g., p. 20 of \citealp{friedman2001elements}) to allow asymmetric loss for the two error types. This type of loss function has been used in many binary classification problems, e.g., by \citet{ElliottandLieli2013} and \citet{kline2019audits}.

Averaging $\mathcal{L}^{i}$ over $i$, which encapsulates averaging across both readers and graph realizations, leads to the expected loss or risk for DGP $g$:
\[
\mathcal{R}(\gamma,g,d)\equiv p(\gamma,g,0) \cdot \kappa \cdot 1_{[d=0]}+ (1-p(\gamma,g,d)) \cdot \varphi \cdot 1_{[d\neq0]}.
\]
Further integrating $\mathcal{R}$ over the distribution of $g$ leads to:
\[
\bar{\mathcal{R}}(\gamma,d)\equiv \bar{p}(\gamma,0) \cdot \kappa \cdot 1_{[d=0]}+ (1-\bar{p}(\gamma,d)) \cdot \varphi \cdot 1_{[d\neq0]}.
\]
In the remainder of the paper, we refer to $\mathcal{R}$ and $\bar{\mathcal{R}}$ as the classical risks. 

We make three remarks. First, the type I and type II error probabilities are key inputs for the classical risks. Second, unlike econometric inference, it is hard to theoretically control the type I error probability of visual inference under a pre-specified threshold, and different graphical methods often lead to type I and type II error tradeoffs. To choose a method in the presence of these tradeoffs, we need to specify the cost parameters $\kappa$ and $\varphi$ and further average over $d$ according to a prior probability of encountering graphs with different discontinuity levels. These specifications entail subjective judgement, and we discuss our choices in Online Appendix \ref{sec:Risk} when we estimate the classical risks with experimental data. Third, we can define the minimax graphical method as the best performing method under the most adverse DGP. While the vastness of the DGP space makes it difficult to estimate the population maximal risk, the best performing graphical method in terms of the estimated classical risk defined above also does very well under the sample maximal risk among our DGPs.

Our second framework for evaluating graphical methods builds on the recent work on scientific communication by \citet{andrews2020model} (henceforth AS), who propose the so-called communication risk and show that reporting certain statistics achieves lower communication risk than others. Adapted to our context, the (posterior) AS communication risk starts from the same loss function as the classical risk, but it takes an expectation of the loss with respect to the reader's perceived discontinuity distribution.

Formally, the AS risk for reader $i$ when viewing graph $T_{i}$ generated from GGP $(\gamma,g,d)$ is\footnote{Unlike participants in our experiments, the audience of an empirical study has access to other statistical information beyond the RD graph. The AS risk formulation can easily accommodating this by expanding the conditioning set.}
\[
\mathcal{R}_{AS}^{i} (\gamma,g,d) = \min_{\delta \in \{0,1\}}E^{i}\left[\delta \cdot 1_{[D=0]} \cdot \kappa_{i} + (1-\delta) \cdot 1_{[D\neq0]} \cdot \varphi_{i} \mid T_{i}(\gamma,g,d)\right].
\]
We use the capital letter $D$ to denote the reader's perceived discontinuity. More precisely, $D$ is the random variable that represents the discontinuity unknown to the reader, for which she perceives a distribution. The expectation $E^{i}$ is taken with respect to this subjective (posterior) distribution of reader $i$ after she sees a graph. $\kappa_{i}$ and $\varphi_{i}$ are the costs the reader incurs for committing a type I and type II error, respectively, and they are conceptually distinct from the $\kappa$ and $\varphi$ without the subscript $i$ used in the classical risk. While $\kappa$ and $\varphi$ reflect the costs we as evaluators of graphical methods assign to the type I and type II errors, $\kappa_{i}$ and $\varphi_{i}$ are reader-specific and depend on the payoffs in our experimental design. We can think of $\kappa_{i}$ and $\varphi_{i}$ in terms of reader $i$'s utility---this formulation is sufficiently general as to be agnostic of the form of her utility function. A reader minimizing the AS risk is equivalent to her maximizing expected utility. The two values of $\delta$ represent the classification choices, and risk minimization (or expected utility maximization) leads to the reader's discontinuity classification $R_{i}$. Solving the optimization problem implies that she would classify a graph as discontinuous (i.e., $R_{i}=1$) if she is reasonably certain. Denoting her perceived discontinuity probability by $q_{i}\equiv \Pr^{i}(D\neq0|T_{i})$, the reader classifies a discontinuity when $q_{i}$ is above the cutoff $\varsigma_{i} \equiv \kappa_{i}/(\kappa_{i}+\varphi_{i})$, and her posterior AS risk is
\[
\mathcal{R}_{AS}^{i} (\gamma,g,d) = (1-q_{i}) \cdot 1_{[q_{i}\geqslant \varsigma_{i}]} \cdot \kappa_{i} + q_{i} \cdot 1_{[q_{i}< \varsigma_{i}]} \cdot \varphi_{i}, 
\]
which is tent-shaped as a function of $q_{i}$. We focus on the case where $\kappa_{i}=\varphi_{i}=1$ as the payoff is the same in our experimental design regardless of whether the reader commits a type I or a type II error (only the ratio of the two parameters matters in determining the optimal graphical parameter, but choosing a value of $1$ leads to easily interpretable quantities). The risk simplifies to $1-\beta_{i}$ where $\beta_{i} \equiv 0.5+|q_{i}-0.5|$. $\beta_{i}$ is simply the reader's perceived probability of being correct: it attains the highest value of $1$ when she is certain that the graph is continuous ($q_{i}=0$) or discontinuous ($q_{i}=1$), and it attains the lowest value of $0.5$ when she is completely unsure ($q_{i}=0.5$). Therefore, a graphical parameter is preferred under the AS communication risk if it leads to higher reader confidence (this form of the AS risk is also known as the (mis)classification risk, see e.g., \citealp{kitagawa2021constrained}).

Averaging over $i$, we arrive at the AS risk for GGP $(\gamma,g,d)$: 
\[
\mathcal{R}_{AS} (\gamma,g,d) \equiv E_{i}[\mathcal{R}_{AS}^{i} (\gamma,g,d)]=E_{i}[1-\beta_{i}(\gamma,g,d)],
\]
where we make explicit the dependence of $\beta_{i}$ on the GGP $(\gamma,g,d)$. Because we randomly assign graphs to participants, we can interpret the expectation above as first averaging over the realizations of graph $T$ for each participant type $\phi$ as defined in Section \ref{sec:conceptual-framework}, and then averaging over the distribution of $\phi$. The first average is the AS ex ante communication risk if each reader correctly anticipates the objective distribution of $T$ (the ex ante communication risk corresponds to the integral of the ex post communication risk with respect to the reader's subjective prior distribution of $T$, which is hard to ascertain). Because the second average is over the $\phi$-distribution, we can interpret $\mathcal{R}_{AS}$ as a weighted average communication risk per AS where the weights reflect the composition of our experimental subjects. As before, we can further average the DGP-specific risk with respect to the distribution of $g$ to define an overall average communication risk:
\[
\bar{\mathcal{R}}_{AS} (\gamma,d) \equiv E_{g}[\mathcal{R}_{AS} (\gamma,g,d)].
\]

Estimating the AS communication risks requires measures of $\beta_{i}$, each reader's perceived probability of being correct. We do not directly elicit them from the participants in our experiments but can approximate them with each participant's graph-specific choice of a risky or risk-free payment scheme. We discuss the payment schemes in Online Appendix \ref{sec:primary-experiment-bonus-scheme} and compare across graphical methods the estimated values of the communication risk in Online Appendix \ref{sec:Risk}.

Finally, we point out that there are alternative ways to conceptualize graphs beyond the frameworks presented here. While we do not explicitly consider the interest of an analyst in the frameworks above, the analyst may have his own agenda and choose to curate a set of evidence, including graphs, to convince an audience to interpret data in a way that benefits him. \citet{schwartzstein2021using} view graphs through this lens and, as an example, point to fit lines as a device to influence audience perception.\footnote{It is not explicit in the \citet{schwartzstein2021using} formulation, but the analyst may use an interactive process to experiment with different visualizations before choosing a picture to present to the reader. \citet{hullman2021design} provide a conceptual framework of the role of interactive graphical analysis in the context of exploratory analysis and model checking.} Recent studies by \citet{banerjee2020theory} and \cite{spiess2018optimal} study statistical problems---experimental design and covariate adjustments---using decision theory while accounting for analyst preference. Extending this line of thinking to studying graphs may be a promising avenue for future research.

%\subsection{Calonico, Cattaneo, and Titiunik (2015)  Bin Selection Algorithms\label{sec:Bin-algorithms}}
\subsection[CCT Bin Selection Algorithms]{\citet{Calonicoetal2015}  Bin Selection Algorithms\label{sec:Bin-algorithms}}

This section describes the two bin width algorithms from \citet{Calonicoetal2015}. The IMSE algorithm, which minimizes the integrated mean squared error of the bin-average estimators of the CEF and results in fewer, larger bins. The MV algorithm, which aims to approximate the variability of the underlying data and results in more, smaller bins.

Formally, consider without loss of generality the number of bins above the threshold. We denote the relevant quantities with a ``$+$'' subscript. The IMSE bin selector chooses the number of bins $J_{+,N}$ to minimize the integrated mean squared error
\[
    \int_{0}^{\overline{x}}E[(\hat{\mu}_{+}(x)-\mu_{+}(x))^{2}|X=x]f(x)dx=\frac{J_{+,N}}{N}Var_{+}\{1+o_{p}(1)\}+\frac{1}{J_{+,N}^{2}}Bias_{+}\{1+o_{p}(1)\},
\]
where $\mu_{+}$ denotes the CEF, $\hat{\mu}_{+}$ denotes the bin-average estimator of $\mu_{+}$, $f(x)$ is the density of $X$, $N$ is the overall sample size. The constants $Var_{+}$ and $Bias_{+}$ are equal to
\[
    Var_{+}=\frac{1}{\overline{x}}\int_{0}^{\bar{x}}\sigma_{+}^{2}(x)dx
\]
\[
    Bias_{+}=\frac{\overline{x}^{2}}{12}\int_{0}^{\bar{x}}(\mu_{+}^{\prime}(x))^{2}f(x)dx,
\]
with $\bar{x}$ being the upper bound of the support of $X$, $\mu_{+}^{\prime}$ the first derivative of $\mu_{+}$, and $\sigma_{+}^{2}(x)$ the conditional variance of $Y$ given $X=x$.

Solving the minimization problem, we obtain the IMSE-optimal number of bins
\[
    J_{+,N,IMSE}=\left\lceil \left(\frac{2Bias_{+}}{Var_{+}}\right)^{\frac{1}{3}}N^{\frac{1}{3}}\right\rceil 
\]
with an analogous $J_{-,N,IMSE}$ for the number of bins below the threshold.

The goal of the mimicking variance bin selector, on the other hand, is to ``choose the number of bins so that the binned sample means have an asymptotic (integrated) variability approximately equal to the amount of variability of the raw data'' \citep{Calonicoetal2015}. Consider again the problem of choosing the number of bins above the threshold. The MV criterion translates to setting
\[
    \frac{J_{+,N}}{N}Var_{+}=Var(Y|X\geqslant0)\equiv V_{+}.
\]
It follows that the number of bins should be
\[
    J_{+,N}=\frac{V_{+}}{Var_{+}}N.
\]
However, this number of bins is likely to be too large. There are two ways of seeing this. First, it grows linearly with the sample size, which is too fast for the rate conditions in \citet{Calonicoetal2015} that ensure the consistency of $\hat{\mu}_{+}(x)$. Second, as a referee points out, the constant $V_{+}/Var_{+}$ may be larger than one, resulting in more bins above the threshold than the overall sample size. To see this, consider the simple case where $\bar{x}=1$. By the law of total variance, 
\begin{align*}
V_{+}&=Var(Y|X\geqslant0)\\&=E[Var(Y|X)|X\geqslant0]+Var(E[Y|X]|X\geqslant0) .    
\end{align*}
If the density of $X$ conditional on $X\geqslant0$ is uniform on $[0,1]$, then 
\[
    E[Var(Y|X)|X\geqslant0]=\int_{0}^{1}Var(Y|X=x)dx=Var_{+} .
\]
Since $Var(E[Y|X]|X\geqslant0)\geqslant0$, we have $V_{+}\geqslant Var_{+}$, with the equality binding if and only if $E[Y|X]$ is a constant function almost surely for $X\geqslant0$. The same result applies if, instead of assuming a uniform distribution for $X$, we impose homoskedasticity: $Var(Y|X)=\sigma^{2}$ for $X\geqslant0$. 

To avoid the problem having too many bins, \citet{Calonicoetal2015} modify the mimicking variance formula and use the following bin number 
\[
    J_{+,N,MV}=\left\lceil \frac{V_{+}}{Var_{+}}\frac{N}{\log(N)^{2}}\right\rceil .
\]

\subsection{Mean Squared Error Decompositions\label{sec:MSE-Decomp}}

For each graph $T_{i}$ generated from GGP $(\gamma,g,d)$, let $d_{i}(T_{i}(\gamma,g,d))$ be the discontinuity estimate by participant $i$. The average visual discontinuity estimate for this GGP  is given by $\hat{d}(\gamma,g,d) \equiv \frac{1}{M}\sum_{i=1}^{M}d_{i}(T_{i}(\gamma,g,d))$. Meanwhile, let $d_{\hat{\theta}}(W_{i}(g,d))$ and $\hat{d}_{\hat{\theta}}(g,d)\equiv\frac{1}{M}\sum_{i=1}^{M}d_{\hat{\theta}}(W_{i}(g,d))$ be the corresponding estimates by the econometric estimator $\hat{\theta}$. We can easily compare the mean squared errors (MSEs) of the visual and econometric estimates, and we plot their square root in the left panel of Figure \ref{fig:RDD-Expert-Estimator-RMSE}. 

To understand the driving force behind the MSE comparisons, we can further decompose the MSEs into the square of the bias and the variance, by using the following identities:
\begin{align*}
    \frac{1}{M}\sum_{i=1}^{M}(d_{i}(T_{i}(\gamma,g,d))-d)^{2} & =(\hat{d}(\gamma,g,d)-d)^{2}+\frac{1}{M}\sum_{i=1}^{M}(d_{i}(T_{i}(\gamma,g,d))-\hat{d}(\gamma,g,d))^{2}\\
    \frac{1}{M}\sum_{i=1}^{M}(d_{\hat{\theta}}(W_{i}(g,d))-d)^{2} & =(\hat{d}_{\hat{\theta}}(g,d)-d)^{2}+\frac{1}{M}\sum_{i=1}^{M}(d_{\hat{\theta}}(W_{i}(g,d))-\hat{d}_{\hat{\theta}}(g,d))^{2}.
\end{align*}
Note that the interpretation of the second term on the right hand side of the first identity differs slightly from the conventional MSE decomposition, as the randomness here not only comes from the realization of data, but also from the participant ($d_{i}$). Nevertheless, it can still be conveniently understood as variance, as $d_{i}(T_{i}(\gamma,g,d))$ remains independent and identically distributed for different $i$. We use these two identities to decompose the difference between the visual and econometric MSEs  into two parts: one due to bias and one due to variance. To save space, we plot only the decomposition results for the visual-IK comparison in the right panel of Figure \ref{fig:RDD-Expert-Estimator-RMSE}, as IK dominates visual estimates in MSE for every DGP.

\section{Additional Details on DGP Specification and
Graph Creation\label{sec:creation-datasets-graphs-appendix}}

In conjunction with Section \ref{sec:creation-datasets-graphs} in
the main text, this online appendix describes the process by which we specify DGPs and create graphs for our experiments.

We identify a total of 110 empirical RD papers published in top economics journals including the \emph{American Economic Review}, \emph{American  Economic  Journals}, \emph{Econometrica}, \emph{Journal of Business and Economic Statistics}, \emph{Journal  of  Political  Economy}, \emph{Quarterly Journal of Economics}, \emph{Review of Economic Studies}, and \emph{Review of Economics and Statistics} between 1999 and 2017. We then randomly sample 11 papers from this list that have replication data available.

For the specification of the running variable for each DGP, we directly use the empirical distribution of $X$ from the corresponding paper and normalize the running variable to lie in $[-1,1]$. If the support of the running variable is asymmetric, this normalization may create a discontinuity in the density at $x=0$, which violates the assumptions in the identification framework of \citet{Lee2008} (but not the minimally sufficient identifying assumptions of \citealp{Hahnetal1999}). The original running variable in seven of our 11 papers fails the \citet{McCrary2008} test for continuity of the running variable density. Among our normalized running variables, eight fail the test, with two passing only in their non-normalized version and one passing only after normalizing. In our testing, visual inference performance does not vary significantly across DGPs whether the support of their original (unnormalized) running variables is symmetric. Following \citet{ImbensKalyanaraman2012} and \citet{Calonicoetal2014}, we remove any observations where $|X|>0.99$.

Next, we use the resulting datasets and further follow \citet{ImbensKalyanaraman2012} and \citet{Calonicoetal2014} to specify the DGP's CEF via a piecewise global quintic regression. Denoting the left and right intercepts of the CEFs by $\alpha^{-}$ and $\alpha^{+}$, we shift the right arm of the CEF upwards/downwards by the amount $|\alpha^{+}-\alpha^{-}|$ to make the functions $E[\tilde{Y}|X=x]$ continuous at the policy threshold.

Our specification of the distribution of the error term $u$ as i.i.d.\ normal similarly follows \citet{ImbensKalyanaraman2012} and \citet{Calonicoetal2014}. $u$ has mean zero and standard deviation $\sigma$, which is specified as the RMSE of the global quintic regression in the CEF specification above. We generate eight draws of $u$ for each DGP, enough so that every graph we generate for our experiments is seen by no more than one participant. Most of our papers use a continuous running variable, but two feature semi-discrete variables where the running variable takes on many unique values but these points have multiple observations each. In these two cases, we add small amounts of noise from a $N(0,(\frac{1}{\min\{N_{-},N_{+}\}})^{2})$ distribution, where $N_{-}$ and $N_{+}$ are the number of observations below and above the policy threshold, to  match the continuous running variable condition assumed in \citet{Calonicoetal2015}. We add this noise prior to fitting the piecewise quintic CEF. Perturbing the data prior to fitting the quintics introduces  ``measurement error'' that can attenuate estimated discontinuities. Alternatives include fitting the CEF prior to adding the noise or drawing the noise from a uniform distribution to prevent observations from crossing the policy threshold. For a discussion of measurement error in RD, see, for example, \citet{PeiShen2017}.

In practice, the difference between adding the noise before and after fitting the CEF is minor. Figure \ref{fig:Lineup-Protocols-Fudged-Unfudged} presents lineup protocols for these two DGPs. To test the null hypothesis that the DGP resulting from fitting the piecewise quintic prior to adding noise is indistinguishable from the DGP where noise is added first, we randomly place one graph from the former distribution among 19 from the latter. All graphs use a continuous running variable generated by adding noise to the original semi-discrete one. The solutions, i.e., the graphs from the DGP fit on the original data, are in the figure notes. We present both sets of CEFs in Figure \ref{fig:CEFs-Fudged-Unfudged}, with the scale of the $y$-axis chosen to match that of the corresponding graph in the original paper. Although the differences in the tails for the CEF on the right of Figure \ref{fig:CEFs-Fudged-Unfudged} underscore the sensitivity of the piecewise quintic specifications, the difficulty of these lineup protocols and the general similarity of the CEFs suggest that the injection of measurement error is not a major concern.

\section{Robustness to Alternative DGP Specifications\label{sec:dgp-robustness}}

To address concerns about the possible sensitivity of our results to our DGP-specification process, in March 2021, we conducted an additional experimental phase to compare non-expert performances across alternative DGP specifications (while holding fixed the graphical method). We test all four combinations of global quintic/local linear CEFs and homoskedastic/heteroskedastic noises.\footnote{We choose local linear to maximize the contrast from the global piecewise quintic specification. Using local cubic, for example, results in CEFs that looked very similar to the global quintic CEFs. We compare the local linear and local cubic CEFs in Figure \ref{fig:RDD-CEF-local}, and the global quintic CEFs are plotted in Figure \ref{fig:RDD-CEF} as mentioned in the main text.} The setup for these DGPs with respect to normalizing and trimming is identical to the procedures from the full paper. In all treatment arms, we use graphs with evenly spaced small bins with a vertical line at $x=0$, no fit lines, and Stata 14's default spacing. We present example plots across these treatments in Figure \ref{fig:RDD-local-examples}.

We set the zero-discontinuity dataset by subtracting from the RHS the fitted outcome values at $x=0$ based on the local fits on both sides of the treatment cutoff. We continue to specify discontinuities as multiples of the residual standard deviation from the piecewise quintic regressions to ensure comparability, so that the discontinuity levels are identical across alternative DGP specifications.

To allow for heteroskedasticity, we follow the approach by \citet{fan1998efficient} to estimate conditional variances of the residuals. Specifically, we estimate the conditional variance of $Y$ given $X$ by fitting local linear regressions to the squared residuals after we specify the local linear CEF. We multiply the error term with the square root of these conditional variance estimates.

We plot the four resulting power functions from this experiment in Figure \ref{fig:RDD-Power-RDD5}. None of the differences between treatment arms are statistically significant at any discontinuity magnitude, and the repeated treatment's power function is not statistically significantly different from the aggregated power function from the main four phases anywhere. We conclude that our empirical results on visual inference are not driven by the idiosyncrasies of our DGP-specification process.

\section{Design of Experiments and Studies\label{sec:Experiment-Details-App}}

This section outlines the structure of the experiments. We first describe the general sequence of events and survey structure before discussing the sequential rollouts of the experiments. All parts of this study received exemption confirmations from the Institutional Review Boards (IRBs) of Cornell University, Princeton University, Columbia Teachers College, and the University of Waterloo and were pre-registered on Open Science Framework and at the American Economic Association's RCT Registry. The supplemental phase 5 was additionally confirmed exempt by the IRB of the University of Delaware because it was conducted after one of the authors (Korting) changed her affiliation from Cornell. Expert studies run as part of seminars received separate exemption confirmations from their respective local IRBs. Before beginning any of the studies, participants completed consent forms. Participants had the opportunity to exit the study at any time.

\subsection{Non-Expert Experiment}

Each experiment consisted of three parts. In Part 1, participants watched a video tutorial outlining the graph construction (normalization of the running variable, binning the data, etc.) and stating the objective as classifying a discontinuity in the ``true'' underlying relationship at $x=0$. The video tutorial lasted between three to four minutes depending on the participant's graphical representation treatment. For example, participants in treatment arms featuring fit lines received a brief supplementary segment informing them that fit lines only serve as approximations of the true underlying relationship between the running variable and the outcome variable. To ensure participant engagement, the video tutorial featured an attention check about halfway through the video. This took the form of a colored bird (see example in Figure \ref{fig:attention-check-bird}); the voiceover on this slide informed participants that when asked about the color of our ``bird of interest,'' they should report a specific color that did not match the color of the bird in the picture. For example, participants seeing Figure \ref{fig:attention-check-bird} below might be asked to report that the bird is red, not blue. Participants had the opportunity to watch the video several times or to pause and rewatch specific sections as desired. However, participants were not able to return to the video tutorial once they had moved on to the next section.

The video tutorial was followed by a sequence of example questions. Participants were asked to classify tutorial graphs and received feedback on their answers. Participants could navigate between these graphs using a panel of buttons on the right-hand side of the screen (see Figures \ref{fig:example-tasks} and \ref{fig:example-tasks-ctd}).

Part 2 of the experiment consisted of a series of paid classification tasks. Participants were asked to assess whether a given graph featured a discontinuity or not and could choose between two potential bonus options for each graph. We discuss this bonus system in detail in Online Appendix \ref{sec:primary-experiment-bonus-scheme} below. Figure \ref{fig:example-classification-screen} shows a typical classification screen. Participants were not given any feedback regarding their accuracy or earnings between tasks.

Part 3 of the experiment was an exit survey. We solicited basic sample demographics such as age, gender, occupation, education level, and prior experience with statistics. For all phases after the initial pilot, we also asked participants to report whether they had participated in earlier iterations of the study. At the end of the experiment, participants were informed of the total number of graphs they classified correctly as well as their corresponding bonus earnings (see Figure \ref{fig:Result-Screen-Example} for an example). Experiment participants were paid in Amazon gift cards after each phase was complete.

\subsubsection{Experiment Implementation\label{sec:Appendix-Experiment-Timeline}}

The rollout of the experiments was staggered across five phases which were completed between November 2018 and March 2021. Phase 1 served as a pilot and was not part of the pre-registration package. The goal of this pilot was to gauge effect sizes in the case of the bin width and axis scaling treatments in order to determine the required sample size for subsequent (pre-registered) phases. Table \ref{tab:rdd-details-main} outlines the graphical representation treatments implemented in each phase. Each participant's classification task featured 11 graphs, one from each of the 11 specified DGPs. The order of graphs was randomized within subjects. The discontinuity levels were chosen as follows: two graphs featured no discontinuity, one graph featured the extreme discontinuity levels of $\pm1.5\sigma$, and the remaining eight graphs were based on each of the eight discontinuity levels $\pm0.1944\sigma,\pm0.324\sigma,\pm0.54\sigma,\pm0.9\sigma$. Each treatment arm contained 968 unique graphs split across 88 unique participants. Figure \ref{fig:Sequence-of-Events-RDD} outlines the sequence of events within an experiment.

\subsubsection{Participant Bonus Choice\label{sec:primary-experiment-bonus-scheme}}

In this section, we analyze a participant's bonus choice and discuss how we can use it to estimate her confidence in discontinuity classification and calculate the AS risk in Online Appendix \ref{sec:Risk}. First note that participants will choose a classification if and only if their perceived probability that the classification is correct ($\beta$ as defined in Online Appendix \ref{subsec:Decision-theory}) is at least 50\%. Denoting the utility function by $v$, the expected utility under the wager is given by $\beta v(\$0.4)+(1-\beta) v(\$0)$, which is equal to $\beta v(\$0.4)$ if we normalize $v(\$0)$ to zero. Expected utility maximizing participants will choose the sure amount whenever $\beta v(\$0.4)\leqslant v(\$0.2)$, i.e., whenever $\beta \in[0.5,v(\$0.2)/v(\$0.4)]$. A risk neutral subject (with a linear utility function) would therefore only choose the sure amount when they believe their chance of being correct is exactly 0.5 and choose the wager as soon as their confidence exceeds 0.5. If the distribution of $\beta$ is continuous, then few should choose the risk-free payment scheme. In the data, however, we see a number of people choosing the risk-free payment scheme.

An alternative behavioral model that is consistent with the observed bonus choices is loss aversion.\footnote{In principle, we can model risk aversion instead, but \citet{rabin2001anomalies} and \citet{o2018modeling} show that risk aversion over small stakes such as those in our wager implies implausible choices at higher stakes.} In this case, the utility function is replaced with a value function for gains and losses with respect to a reference point, and loss aversion is governed through a parameter $\lambda>1$ multiplying outcomes in the loss domain. In the context of our wager, the sure option of 20 cents serves as a natural focal point to participants, so we consider the simple loss aversion framework put forward by \citet{KahnemanandTversky1979} assuming a fixed reference point of 20 cents. That is, participants would consider the gamble as an opportunity to win 20 cents or lose 20 cents relative to this reference point depending on their answer (as mentioned in Online Appendix \ref{subsec:Decision-theory}, an expected utility maximizing reader's discontinuity decision, unlike her bonus choice, is not affected by loss aversion). 

The expected payoff of the wager in this case is given by $\beta v(\$0.2)-\lambda (1-\beta) v(\$0.2)$, compared to zero under the fixed payment option. Participants choose the wager whenever $\beta \geqslant\lambda/(1+\lambda)$. Therefore, for a given value of $\lambda$, a player's bonus choice indicates the interval in which $\beta$ falls, which we can use to approximate the communication risk by \citet{andrews2020model}. \citet{brown2021meta} conduct a meta-analysis of 607 empirical loss-aversion estimates across 150 studies and find that the average coefficient $\lambda$ lies between 1.8 and 2.1. For simplicity, we abstract away from heterogeneity in $\lambda$ and choose a loss-aversion parameter of $\lambda=2$ throughout our analysis. However, as discussed in Online Appendix \ref{sec:Risk} below, choosing alternative values for $\lambda$ does not alter our graphical method recommendation.

\subsection{Expert Study}

The expert study was run both online recruiting NBER members in applied microeconomic fields (aging, children, development, education, health, health care, industrial organization, labor, and public) and IZA fellows/affiliates, and during three seminar presentations between May and October of 2019. Table \ref{tab:expert-details-appendix} shows the timeline and number of participants for each session. Figure \ref{fig:Sequence-of-Events-Experts} outlines the sequence of events for the expert study.

Unlike in the non-expert experiments, participants in the expert study were not asked to watch a video tutorial or complete practice questions at the start. Instead, Part 1 of the expert study was a classification task paralleling the one non-expert participants completed (see Figures \ref{fig:Expert-Survey-Part1-Intro} and \ref{fig:Expert-Survey-Part1}). In addition, the graphical representation choices in this study were not randomized between subjects. Participants had the opportunity to navigate between the 11 graph classifications using a panel of navigation buttons at the bottom of the screen. This feature was introduced to address feedback we received during the testing of the survey. Figure \ref{fig:Expert-Survey-Part1} shows an example of the task screens for participants.

Part 2 of the expert study asked participants to rank four alternative graphical representation options for RDD. The four treatment combinations corresponded to small or large bins interacted with both fit line treatments, as in phase 4.  The instructions for Part 2 of the expert study are outlined in Figures \ref{fig:Expert-Survey-Part2-Intro} and \ref{fig:Expert-Survey-Part2-Decision-Screen}. Figure  \ref{fig:RDD-Power-Phase-VI-DGP9} shows the power functions for the DGP used in the example graphs in this part of the expert study. We select this DGP based on its visual inference performance aligning most closely with the average performance over all DGPs. Participants were asked to specify which graphing treatment they believe researchers should use to present evidence of the main treatment effect of a study, as well as which graphical options they believe would perform best and worst in our non-expert sample. Participants were also able to report having ``no preference'' across the different graphing options. These three decisions were repeated three times in the context of a zero, small, and large discontinuity.

Part 3 of the expert study was an exit survey. We asked about basic sample demographics such as age, gender, region, main area of research, and prior experience with RDD. After all sessions of the expert study were complete, we randomly selected four participants to receive a payment of \$450 plus \$50 for every correctly classified graph in Part 1 of the study. We did not take the accuracy of the discontinuity magnitude estimate into account to determine payments, only the binary classification decision.

Table \ref{tab:expert-details-appendix} highlights the different participant pools for the sessions of the expert study and lists the graphical representation choices in the classification task (Part 1 of the expert study). All graphs in Part 1 used the default Stata 14 axis scaling with no fit lines and a vertical line at $x=0$. All but one session of the expert study used small bins. The exception was the session run at Princeton's Quantitative Social Science Colloquium in October 2019 in which we used large bins to compare the effect of bin widths on experts and non-experts.

The distributions of graphs seen by the experts and non-experts differ slightly due to the methods used to randomize participants' graphs. Non-expert randomization is based around the order in which the 88 participants for each treatment arm begin the survey, i.e., we randomized treatment order in advance and then assigned each participant a participant-specific survey based on their arrival time. We use the same randomization mechanism for experts, but repeating the survey in different seminars and online phases results in more experts at the ``beginning'' of the randomized assignments. In addition, a few experts did not complete the study after accessing their participant-specific survey, and we removed their partial responses. Therefore, although all 88 datasets are represented over the entire power function of expert visual inference, this is not true at each discontinuity magnitude: there are 85, 82, 84, 84, 87, and 69 datasets at the $0$, $0.1944\sigma$, $0.324\sigma$ $0.54\sigma$, $0.9\sigma$, and $1.5\sigma$ discontinuity magnitudes, respectively (recall that we only have one graph with $1.5\sigma$ discontinuity for every two graphs with other discontinuity magnitudes). Reweighting the data to match the distribution of graphs seen by non-experts does not meaningfully change any of our results.

\section{Evaluating Graphical Methods Using Frameworks from Online Appendix \ref{subsec:Decision-theory} \label{sec:Risk}}

In this section, we provide additional evidence in support of our recommendation of using small bins, no fit lines, even spacing, default $y$-axis scaling, and a vertical line at the policy threshold as a starting point in RDD graphical analysis. In particular, we empirically implement the two decision theoretic frameworks from Online Appendix \ref{subsec:Decision-theory}, which further synthesize our experimental results.

First, we estimate the overall classical risk of each graphical method. For a given $d$, the risk $\bar{\mathcal{R}}(\gamma,d)$ is a simple transformation of the type I or type II error probability and is effectively summarized by the power functions presented in Figure \ref{fig:RDD-Power}. To facilitate the risk comparisons across $\gamma$, however, we will need to aggregate the information in each power function across $d$. Formally, as mentioned in Online Appendix \ref{subsec:Decision-theory}, this task requires the specification of the type I and type II error cost parameters---$\kappa$ and $\varphi$, respectively---and a prior on $d$ that represents the probability of encountering a graph with a certain discontinuity level. Because their choices involve subjective judgement, we present results under different scenarios, and our power functions also allow researchers to estimate the risks with their preferred specification of these quantities. For the cost parameters, we normalize $\varphi$ to 1 (only the ratio of the cost parameters matters for comparing risks) and try both the baseline case of $\kappa=1$ and a benchmark used by \citet{kline2019audits}, $\kappa=4$. For the prior on $d$, we adopt the uninformative prior that it is equally likely to encounter a continuous graph as a discontinuous graph, but there is still the question of how to incorporate the type II error rates across different discontinuities. In our main estimates presented in Table \ref{tab:risks-main}, we use the type II error rate at $0.324\sigma$, the modal (and median)---and therefore the most likely---discontinuity level among the subset of the 11 papers that report a significant discontinuity. In Online Appendix Table \ref{tab:risks-supp}, we also present risk estimates by simply averaging the type II error rates across the five different nonzero discontinuity levels, which lead to the same recommendation.

When weighting type I and II errors equally, the classical risks are fairly similar across treatments. In fact, no risk is statistically significantly different from that of the benchmark small bins/no fit lines treatment (in bold) repeated across experimental phases. When penalizing type I errors more in the \(\kappa=4\) calculations, the benefit of the low type I error rate of the benchmark treatment is more evident. The benchmark treatment has the lowest risk in two of the four phases (phases 2 and 3) and the second-lowest in the other two, where the differences with the best performer are not statistically significant. Further, the large bin treatments in the three phases where they appear (phases 1, 2, and 4) have statistically significantly higher classical risks, as does the fit lines treatment in phase 3.

Second, we estimate the AS communication risks per \citet{andrews2020model}. To do this, we need to measure the perceived probability of being correct, $\beta$, for each participant-graph combination. As noted in Online Appendix \ref{sec:primary-experiment-bonus-scheme}, a participant's bonus scheme choice provides information on $\beta$. Assuming a loss aversion parameter of 2, $\beta$ falls into the interval $(2/3, 1)$ if the participant chooses the $(\$0.4,\$0)$ wager, and it falls into $(1/2,2/3)$ if she chooses the risk-free $(\$0.2,\$0.2)$ option. For the estimates of the AS risk we present, we simply approximate $\beta$ by the midpoint of the two intervals, which is a common way of handling interval data in the empirical literature. The resulting averages of $\beta$ lie between $7/12$ and $5/6$. We plot these values in Figure \ref{fig:Subj-Prob-CorrectBy-DGP} with inference on the differences across treatment arms in Tables \ref{tab:TE-conf-RD-I} through \ref{tab:TE-conf-RD-VI}. We note that the midpoint approximation is coarse and may underrepresent the true curvature in Figure \ref{fig:Subj-Prob-CorrectBy-DGP}---the average $\beta$ at the highest and most obvious discontinuity magnitude is likely very close to one as opposed to $5/6$. That said, the midpoint approach has a great advantage: it uses a simple and transparent mapping from the bonus scheme choices to reader confidence: the risky choice is mapped to $\beta=5/6$ and the risk-free choice to $\beta=7/12$. As a result, using the the midpoint approach to rank graphical methods is equivalent to just using the binary bonus scheme choice, as it uses the same information without imposing untestable structure on the unobserved belief distribution among readers. As an added benefit, the midpoint approach leads to a quantity that is interpretable through the AS framework. This simple mapping also means that the recommended graphical method based on the midpoint approach should not depend on the particular value we choose for the loss aversion parameter, $\lambda$. Finally, we have also tried alternative functional forms for the underlying distribution of $\beta$ and reach similar conclusions.

Approximating $\beta$ using participants' bonus scheme choices allows us to estimate the AS risk $\bar{\mathcal{R}}_{AS}(\gamma,d)$ for each $(\gamma,d)$ combination. But we still need to aggregate across $d$ so that we can compare the overall risks across $\gamma$. We simply proceed with the same weighting scheme as with our classical risks.

We present these results in the bottom half of Table \ref{tab:risks-main}, and the repeated benchmark small bins/no fit lines treatment (in bold) again performs well. 
When giving the same weight to the AS risk at zero and nonzero discontinuities, the benchmark treatment has the lowest risk in phase 2 and the second-lowest in phases 1, 3, and 4. With greater weights at zero discontinuity, it has the lowest risk in phases 1 and 2. As with the classical risks, no treatment achieves a statistically significantly lower AS risk than small bins/no fit lines, and many are significantly outperformed in either weighting scheme.

\section{Subject Characteristics and Performance Predictors\label{sec:Appendix-Additional-RDD-Analysis}}

\subsection{Demographic Balance and Effects}

For our non-expert experiment, we check the validity of our randomization by testing for the balance of covariates across treatment groups. For each phase, we regress the covariates on treatment group indicators and test whether all coefficients are equal. Specifically, we test for the balance of sex, education, age categories, statistical knowledge, passing the attention check, being a first-time participant, and not completing the experiment. The results are in Tables \ref{tab:Cov-Bal-RD-I} through \ref{tab:Cov-Bal-RD-VII}. Across these 44 hypotheses, we cannot reject balance in all but two instances (4.5\% of tests) when using a 5\%-level test: college completion and fraction of participants aged 23-49 in phase 4. In addition, $p$-values for joint tests of significance for all covariates within each phase based on \citet{pei2019poorly} are 0.18, 0.50, 0.65, and 0.14 for phases 1-4. The evidence is consistent with successful randomization.

As can be seen in Table \ref{tab:Cov-Prediction-RD-I-II-III-VI}, participant demographics such as gender, age, college completion, statistical knowledge, and being a repeat participant are generally not predictive of performance in the experiment, with insignificant point estimates of a few percentage points. Out of the 34 factors we test, two have a statistically significant effect on the probability of classifying a graph correctly (5.9\% of tests). In phase 1, failing the attention check is associated with a 5.5pp decrease in the probability of classifying a graph correctly. This effect is smaller and insignificant in phases 2 (4.3pp), 3 (-1.2pp), and 4 (-1.2pp). In phase 3, being 50 or older correlates with a 9.4pp increase in the probability of classifying a graph correctly. This lack of predictive strength makes any imbalances in demographics across phases an even smaller concern. It also suggests that comparisons of treatments across phases are unlikely to be confounded by differing subject pools.

\subsection{DGP Characteristics}

While our results in Section \ref{sec:Results} focus on the treatment effects of each graphical parameter, in this section we focus on other factors of inference performance. In particular, we explore how the characteristics of an RD DGP affect visual inference.

Figure \ref{fig:RDD-DGP-Histograms} plots the density of each DGP along its running variable. These distributions are varied. Some are approximately uniform, some are more normal, and others, such as DGPs 6 and 8, are highly skewed. Some DGPs show clear changes in density at the policy threshold, some of which were present in the original data and some of which were introduced by having asymmetric supports prior to our rescaling the running variable. Figure \ref{fig:ROC-Phase-6-Symmetric-Asymmetric} compares power functions for the eight DGPs with symmetric supports prior to normalization with the three with asymmetric supports in phase 4. There is a noticeable flatness or dip in the share of participants reporting a discontinuity between 0 and $0.1944\sigma$. This pattern appears in every phase, but with only three asymmetric DGPs, this may also be due to the effects of the individual DGPs rather than something about the effects of normalizing the running variable when asymmetry is present.

When the distribution of the running variable is uniform, there is no difference between graphs generated with even spacing and quantile spacing. When the distribution is not uniform, we may expect quantile spacing to perform better, for example by preventing outlier bins with very few observations in more sparsely populated regions of the support. To quantify a DGP's deviation from a uniform distribution, we can compute Gini coefficients based on the distribution of observations within evenly spaced bins.\footnote{We do not compute Gini coefficients with quantile spacing, as all bins have about the same number of observations and the Gini coefficients would all be approximately zero.} We calculate Gini coefficients for both the IMSE and MV bin width algorithms and take the average across data realizations for each DGP. Figure \ref{fig:Gini-Spacing-Performance} explores how the Gini coefficient interacts with bin spacing to test whether quantile spacing outperforms equal spacing when the distribution is more skewed by averaging the probability of a correct response across treatment arms within each DGP. There is no relationship between a DGP's Gini coefficient and performance across spacing types. Among DGPs with higher distributional inequality, as well as for those with intermediate and lower values, neither spacing dominates the other.

In Table \ref{tab:DGP-Direction-Prediction-Power}, we examine how the first derivatives of each arm of the DGP interact with the direction of the discontinuity in impacting type II error rates. Specifically, we ask whether visual inference performs better if the discontinuity is positive or negative within a combination of left and right slope signs. Looking only at graphs with nonzero discontinuities, with which we measure type II error rates, we regress the binary classification decision on dummy variables for the negative discontinuity sign within each slope combination (omitting the positive discontinuity dummy) with DGP fixed effects and clustering at the participant level. These regressions show that visual inference performs better when the discontinuity direction matches the signs of both the left and right derivatives. For example, when the left and right first derivatives are both positive, the type II error rate is 11 percentage points higher when the discontinuity is negative than positive (see Figure \ref{fig:Slope-Discontinuity-Direction} for an illustration). Table \ref{tab:DGP-Direction-Prediction-Size} repeats a similar exercise for type I error rates over the no-discontinuity graphs. We cannot include DGP fixed effects here, as they would eliminate the variation needed to identify impacts. Unlike with type II error rates, we do not find any meaningful results for type I error rates.

\subsection{Dynamic Visual Inference\label{subsec:Dynamic-Visual-Inference}}

In this subsection, we investigate participants' performance over the course of the 11 graphs they see. This exercise provides evidence to alleviate the concern that participants have a preconceived notion that about half of the graphs feature a discontinuity, which may bias our results in favor of good control of type I error rate by visual inference.

First, we trace out the fraction of participants reporting a discontinuity over their 11 graphs. If there is a consistent upward or downward trend, it would be indicative of participants having a fixed number of continuous graphs in mind. As shown in Figure \ref{fig:RDD-classifications-over-time}, however, the trends are flat in the treatment arm (small bins and no fit lines) used to compare visual and econometric inference procedures over the main four non-expert phases and in the expert sample.\footnote{Because we randomize the graph order, the flat discontinuity classification rate is consistent with a flat correct classification rate, which we also verify in the data. That is, no evidence supports \textquotedblleft learning\textquotedblright{} over the 11 graphs in terms of visual inference success, which is concordant with our design in which participants find out the total number of correct discontinuity classifications only after they complete the study.}

We also implement an AR(1) regression to more closely examine the dynamic discontinuity classification over the 11 graphs, which Figure \ref{fig:RDD-classifications-over-time} may miss:
\begin{equation}
    R_{is}=\rho R_{i(s-1)}+c_{i}+\epsilon_{is}.\label{eq:AR1}
\end{equation}
$R_{is}$ is the binary variable indicating whether participant $i$ reports a discontinuity in graph $s$, $c_{i}$ is the individual fixed effect, and $\epsilon_{is}$ is the population residual from the projection of $R_{is}$ on $R_{i(s-1)}$ and $c_{i}$. We expect a negative $\rho$ if participants believe ex-ante that half of the graphs will feature a discontinuity. In fact, we do see negative estimates of $\rho$ ranging between -0.1 and -0.2 as reported in column (1) of Table \ref{tab:dynamic-discontinuity-table}. However, these estimates are subject to a large-$N$ asymptotic bias of order $O(1/S)$ per \citet{Nickell1981} where $S$ is the number of graphs. Inverting equation (18) from \citet{Nickell1981}, we solve for the bias-corrected estimates of $\rho$ and report them in column (2) of Table \ref{tab:dynamic-discontinuity-table}. Although these estimates are still negative, they are considerably closer to zero. For the main four phases of the non-expert phases 1-4, three bias-corrected estimates are statistically insignificant at the 5\% level, while the other one is marginally significant.

Although the estimate of $\rho$ in the expert sample is still significant after bias correction, the fact that it is the largest in magnitude of the five estimates in Table \ref{tab:dynamic-discontinuity-table} is reassuring. Because non-experts appear to achieve lower a type I error rate than experts and are less experienced with RDD graphs, we may be more concerned that the non-experts' performance is driven by having a fixed number of continuous graphs in mind. However, this is refuted by the four estimates of $\rho$ from the non-expert sample being smaller. Therefore, we conclude that the evidence does not support the hypothesis that participants expect half of the graphs to be continuous, and the low type I error rate of RD visual inference is not a result of it.

\section{What \emph{t}-Statistic Can Eyes Detect?\label{sec:t-stat}}

Using our experimental results, we can provide an evidence-based answer to the question of how large a $t$-statistic needs to be in order for readers to visually detect a discontinuity (a question that was also posed by Kirabo Jackson on Twitter). We do so by using an alternative scaling of the discontinuity. Instead of specifying the discontinuity $d$ as a multiple of the error standard deviation $\sigma$ as in Section \ref{sec:Results}, we specify the $x$-axis of the power functions in Figures \ref{fig:RDD-Power-Rescaled} and \ref{fig:RDD-Power-By-DGP-Rescaled} as $d/\sigma\sqrt{(X^{\prime}X)_{dd}^{-1}}$, where $X$ is a regressor matrix containing the 12 polynomial terms of the running variable for a piecewise quintic regression and the
$dd$ subscript denotes the entry of the matrix corresponding to the discontinuity estimator. Because the denominator is the large-sample error of the discontinuity estimator from a (correctly specified) piecewise quintic regression, the $x$-axes of these figures have a $t$-statistic interpretation. Note that because the rescaling results in a distinct set of six discontinuity magnitudes for each of our 11 DGPs, we now have 66 discontinuity magnitudes. Therefore, we bin points along the $x$-axis, and consequently,
these power functions are no longer monotone, as the composition of DGPs changes across bin points.

Our analysis shows that 80\% of the participants correctly report a discontinuity, across all phases of our experiment, as the $t$-statistic from the correctly specified regression exceeds 4 (80\% is a threshold commonly used in power analysis). There are caveats to this already nuanced answer. First, this threshold $t$-statistic is specific to the range of parameters we test. For example, one could substantially increase the $y$-axis scale, which would be sure to make any discontinuity we test in this paper disappear before the human eye. Second, this threshold is specific for detecting discontinuities; in separate experiments for regression kink designs presented in our previous working paper \citep{kortingetal2020wp}, we find a much smaller threshold applies to kink detection.

One might conjecture that the $t$-statistic threshold is sensitive to the sample size. In particular, it seems possible to keep the plot fixed and increase the sample size arbitrarily by adding observations in each bin, which would increase the $t$-statistic without altering visual inference. However, for the plot to look the same while increasing sample size, one needs to change $\sigma$ as well. In fact, $\sigma$ needs to increase proportionally to $\sqrt{N}$, leading to an invariant $t$-statistic. 

\section{Visual versus Econometric Inference: Details}\label{sec:visual_v_metrics_details}

In this section, we provide details on the comparisons between visual and econometric inferences. First, we discuss in more depth the three nonparametric econometric inference methods---IK, CCT, and AK---and present additional empirical results on their performances vis-a-vis visual inference. Second, in the subsection below, we describe how we construct the confidence intervals for the differences between visual and econometric inferences. 

The IK inference method refers to the practice of using a local linear estimator with the IK bandwidth selector to estimate the treatment effect and ignoring the asymptotic bias when constructing the confidence interval. Specifically, the IK bandwidth aims to minimize the asymptotic MSE and has the form: 
\[
\hat{h}_{IK}=C_{IK}\cdot\left(\frac{\hat{\sigma}^{2}(0^{+})+\hat{\sigma}^{2}(0^{-})}{\hat{f}(0)\cdot(\hat{\mu}^{(2)}(0^{+})-\hat{\mu}^{(2)}(0^{-}))^{2}+\hat{r}}\right)^{1/5}\cdot N^{-1/5}
\]
where $C_{IK}$ is a constant determined by the kernel, $f(\cdot)$ is the density of the running variable, $\mu^{(2)}(0^{\pm})$ and $\sigma^{2}(0^{\pm})$ represent the second derivatives of the CEF and conditional variances on both sides of the threshold, $r$ is a regularization term to prevent very large bandwidths, and the hats on the various quantities indicate that they are estimated. A local linear estimator $\hat{\tau}$ with bandwidth $h$ for the RD treatment effect parameter $\tau$ has the asymptotic distribution
\[
\sqrt{Nh}(\hat{\tau}-\tau-h^{2}B)\Rightarrow N(0,\Omega).
\]
$B$ is the asymptotic bias constant, which is a function of the second derivatives, $\mu^{(2)}(0^{+})$ and $\mu^{(2)}(0^{-})$. When $h$ shrinks at the MSE-optimal rate of $h=O(N^{-1/5})$, as is the case for IK, the asymptotically normal distribution of $\hat{\tau}$ is centered at the constant $\tau+\sqrt{Nh^{5}}B$. Ignoring the bias term $\sqrt{Nh^{5}}B$ in constructing the confidence interval can be justified on the ground that $B$ is small locally, e.g., when the CEF is approximately piecewise linear around the threshold. Or it can be justified on the ground that ``undersmoothing'' using a bandwidth with a slightly larger shrinkage rate would eliminate the bias term but make little difference empirically. But it is harder to justify the ``conventional'' inference method in the case where $\hat{h}_IK$ is large, and \citet{Calonicoetal2014} document that the resulting confidence interval may have coverage rates considerably below the nominal level.

The CCT inference procedure constructs confidence intervals by directly incorporating the asymptotic bias term. It first estimates the bias $B$ using another (``pilot'') bandwidth $b$, which is MSE-optimal for estimating $\hat{\mu}^{(2)}(0^{+})-\hat{\mu}^{(2)}(0^{-})$. Then, it creates a bias-corrected estimator $\hat{\tau}^{bc}=\hat{\tau}-h^{2}\hat{B}$, whose asymptotically normal distribution is centered at the true RD treatment effect $\tau$. Finally, it adopts an innovative asymptotic framework, in which $h/b$ converges to a positive constant ($h/b$ converges to zero under standard asymptotics), to account for the sampling variation in the estimator $\hat{B}$. 

The AK inference procedure produces asymptotically valid and minimax (near-)optimal confidence intervals over the Taylor class of conditional expectation functions
\[
\left\{ \mu_{\pm}:|\mu_{\pm}(x)-\mu_{\pm}^{\prime}(0)x|\leqslant C_{T}|x|^{2}\text{ for all }x\in\chi_{\pm}\right\} .
\]
$\chi_{\pm}$ is the support of the running variable on two sides of 0, and $C_{T}$ is a researcher-supplied bound on the second derivatives. $C_{T}$ is the key tuning parameter and dictates the bias magnitude of a local linear estimator---unlike CCT, AK's bias correction is nonrandom. We use the default rule-of-thumb procedure in the \texttt{RDHonest} package to estimate the tuning parameter.\footnote{\citet{ArmstrongKolesar2017} propose analogous confidence intervals that maintain coverage and enjoy minimax optimality over a H\"{o}lder class of functions, which is determined by a global, as opposed to local, bound on the second derivative of the CEF. Though not presented here, we find the corresponding power functions to be similar. Like \citet{ArmstrongandKolesar2017,ArmstrongKolesar2017}, \citet{imbens2019optimized} also adapt the idea of \citet{donoho1994statistical}. They propose an RD estimator through numerical optimization that is minimax mean-squared-error optimal over CEFs with a global second derivative bound. Because the corresponding inference procedure performs similarly to \citet{ArmstrongandKolesar2017} in simulations by \citet{Peietal2018local} in their 2018 working paper version and can be computationally demanding, we do not implement the \citet{imbens2019optimized} procedure here.} 

In the remainder of this section, we present two additional sets of empirical results. First, we show power functions of the econometric inference methods where we impose our knowledge of the DGP. This exercise sheds light on the factors underlying the econometric methods' performances. Second, we compare the accuracy of the visual and econometric estimates of discontinuity magnitudes. The comparison goes beyond the binary classification we focus on in much of the paper.

We present the first set of additional results in Online Appendix Figure \ref{fig:RDD-Power-Experts-vs-Theoretical}, where we impose our knowledge of the DGP when implementing the econometric inference procedures. Specifically, we use the theoretical MSE-optimal bandwidth for IK, this bandwidth and the theoretical asymptotic estimator bias for CCT, and the true second derivative bound for AK.\footnote{Because we specify the $X$-distribution as the empirical running variable distribution from the published study, we need to estimate its density at $0$ in computing the theoretical MSE-optimal bandwidth, and it is the only quantity we estimate.} While the AK result is similar to when we use the estimated turning parameter (it is possible that our quintic specification is friendly to the rule-of-thumb estimator, which relies on a quartic regression), the IK type I error rate is considerably lower, leaving little for CCT to improve upon. Although the theoretical MSE-optimal bandwidth is still ``too large'' based on its $N^{-1/5}$ rate of shrinkage, the driving force of IK's high type I error rate appears to be the noisy estimates of the constants in the bandwidth formula (similar patterns also emerge from the simulation results in \citealp{Calonicoetal2014}). 

We present the second set of additional results in Figure \ref{fig:RDD-Expert-Estimator-RMSE}, where we compare the point-estimate accuracy across methods. We plot the RMSE of each method by DGP, averaged over all discontinuity magnitudes. We do this by DGP to prevent scaling issues, as units for each DGP are different and results from one DGP are not directly comparable to those from another. We also take steps to make human and econometric point estimates comparable. Because we ask participants to round their estimates to the nearest hundredth, we round estimators with the same precision. We similarly replace econometric estimates with 0 when the test fails to reject the null hypothesis.

Although there are a handful of DGPs where experts perform similarly to the econometric estimators, their point estimates generally have a greater RMSE than the estimators, and experts overall are worse at estimating magnitudes. Specifically, IK has a lower RMSE than visual inference for every single DGP, which is driven by the variance component of the MSE in a majority of cases as shown in the right panel of Figure \ref{fig:RDD-Expert-Estimator-RMSE} (see Online Appendix \ref{sec:MSE-Decomp} for details of decomposing the MSE differences). In summary, while the average human performs quite well at identifying the existence of discontinuities, her ability to estimate their magnitude is not as strong.

\subsection{Inference on the Difference between Visual and Econometric Inferences\label{subsec:two-way-clustering}}

We describe the construction of the confidence intervals, which are plotted, for example, in Figure \ref{fig:RDD-Experts-Vs-All-Diff}. At each discontinuity level, we estimate the difference between visual and econometric inferences via regressions where we stack visual and econometric classifications for each graph of that discontinuity. With stacking, we can construct standard errors that account for potential correlation across visual and econometric classifications for the same graph and potential correlation in visual classifications of different graphs by the same participant.

We now detail the standard error calculation. We use $R_{i}^{k}$ to denote the classification of graph $k$ by participant $i$ and  $R_{\hat{\theta}}^{k}$ to denote the classification of (the dataset underlying) graph $k$ by econometric procedure $\hat{\theta}$. At each discontinuity magnitude, we regress the vector $(R_{i}^{k},R_{\hat{\theta}}^{k})^{\prime}$ on a constant and an indicator for visual inference, i.e., the regressor matrix $\left(\begin{array}{cc}
1 & 1\\
1 & 0
\end{array}\right)$. The difference between visual and econometric inferences is given by the coefficient on the visual inference indicator. We compute the standard error of the difference by two-way clustering per \citet{cameronetal2011main}. We cluster by dataset and by participant identifier (each human participant has a participant identifier, and we assign a unique artificial ``participant identifier'' to each observation corresponding to econometric classification that is different from any of the human participant identifiers). Clustering by dataset accounts for correlation between visual and econometric inferences, and clustering by participant identifier captures correlation of visual inferences by the same individual.

We also apply this procedure to generate the confidence intervals on the differences between the combined visual-IK and AK procedures plotted in Figure \ref{fig:RDD-Visual-Econometric-Inference-Combined-vs-AK}. For each graph $k$, we simply replace the visual classification $R_{i}^{k}$ with the combined visual-IK classification, and the rest of the confidence interval construction proceeds in the same way.

\clearpage{}

%\newpage{}

%\section*{Online Appendix References}

%\begin{singlespace}

%\begin{btSect}{RDMainBib_arxiv}
%\btPrintCited
%\end{btSect}

%\end{singlespace}

\newpage{}

\section*{Online Appendix Tables and Figures}

\begin{table}[H]
    \caption{\label{tab:TE-RD-I}Effects of Graphical Methods on Visual Inference: Phase 1}
    \centering\resizebox*{4.0in}{!}{%
\begin{tabular*}{1.0\hsize}{@{\hskip\tabcolsep\extracolsep\fill}l c c c c c c} 
\vspace{-.45cm} 
&\multicolumn{6}{c}{Dependent variable: player reports discontinuity} \\ 
\cline{2-7} \\
&\multicolumn{6}{c}{True discontinuity magnitude = } \\ 
\vspace{-.45cm} 
 &\multicolumn{6}{c}{} \\
 &    0 &  0.1944 $\sigma$  & 0.324 $\sigma$ & 0.54 $\sigma$ & 0.9 $\sigma$ &  1.5 $\sigma$   \\
\cline{2-7} \\
Large bins; Default y-axis   & 0.197 & 0.209 & 0.177 & 0.140 & -0.024 & -0.012 \\
  & (0.039) & (0.045) & (0.042) & (0.046) & (0.030) & (0.019) \\
Small bins; Large y-axis   & -0.024 & -0.018 & -0.056 & 0.008 & -0.034 & -0.035 \\
  & (0.029) & (0.043) & (0.042) & (0.043) & (0.029) & (0.026) \\
Large bins; Large y-axis   & 0.142 & 0.186 & 0.120 & 0.093 & 0.013 & -0.001 \\
  & (0.036) & (0.040) & (0.042) & (0.042) & (0.025) & (0.017) \\
Small bins; Default y-axis (Mean)   & 0.054 & 0.181 & 0.367 & 0.669 & 0.922 & 0.988 \\
Number of graphs  & 660 & 660 & 660 & 660 & 660 & 330 \\
Number of players  & 330 & 330 & 330 & 330 & 330 & 330 \\
\cline{1-7} \\
\end{tabular*} 
}

    \begin{minipage}[t]{0.8\columnwidth}%
        {\scriptsize{}Notes: We have a stratified randomized experiment, where each of the 11 strata is determined by the DGPs seen for every discontinuity magnitude. The treatment effect estimates come from regressing the participants' responses on treatment indicators and stratum fixed effects. We obtain standard errors using the procedure from \citet{Bugnietal2019}
        for stratified experiments. \textit{Large bins} corresponds to the \citet{Calonicoetal2015} bin width selector that minimizes the integrated mean squared error of the bin-average estimators of the conditional expectation function; \textit{small bins} corresponds to the \citet{Calonicoetal2015} bin width selector that aims to approximate the variability of the underlying data; \textit{default} $y$-axis scaling corresponds to Stata 14's default axis scaling; \textit{large} $y$-axis scaling doubles this default axis scaling.}%
    \end{minipage}
\end{table}

\begin{table}[H]
    \caption{\label{tab:TE-RD-II}Effects of Graphical Methods on Visual Inference: Phase 2}
    \centering
    \resizebox*{4.0in}{!}{%
\begin{tabular*}{1.0\hsize}{@{\hskip\tabcolsep\extracolsep\fill}l c c c c c c} 
\vspace{-.45cm} 
&\multicolumn{6}{c}{Dependent variable: player reports discontinuity} \\ 
\cline{2-7} \\
&\multicolumn{6}{c}{True discontinuity magnitude = } \\ 
\vspace{-.45cm} 
 &\multicolumn{6}{c}{} \\
 &    0 &  0.1944 $\sigma$  & 0.324 $\sigma$ & 0.54 $\sigma$ & 0.9 $\sigma$ &  1.5 $\sigma$   \\
\cline{2-7} \\
Large bins; Equal spacing   & 0.256 & 0.207 & 0.204 & 0.155 & 0.048 & 0.011 \\
  & (0.042) & (0.039) & (0.045) & (0.039) & (0.033) & (0.029) \\
Small bins; Quantile spacing   & 0.034 & 0.009 & -0.012 & -0.008 & -0.030 & -0.013 \\
  & (0.033) & (0.039) & (0.046) & (0.046) & (0.036) & (0.033) \\
Large bins; Qunatile spacing   & 0.157 & 0.185 & 0.217 & 0.138 & 0.030 & 0.024 \\
  & (0.038) & (0.040) & (0.046) & (0.039) & (0.031) & (0.023) \\
Small bins; Equal spacing (Mean)   & 0.054 & 0.169 & 0.355 & 0.657 & 0.892 & 0.964 \\
Number of graphs  & 650 & 650 & 650 & 650 & 650 & 325 \\
Number of players  & 325 & 325 & 325 & 325 & 325 & 325 \\
\cline{1-7} \\
\end{tabular*} 
}

    \begin{minipage}[t]{0.8\columnwidth}%
        {\scriptsize{}Notes: We have a stratified randomized experiment, where each of the 11 strata is determined by the DGPs seen for every discontinuity magnitude. The treatment effect estimates come from regressing the participants' responses on treatment indicators and stratum fixed effects. We obtain standard errors using the procedure from \citet{Bugnietal2019} for stratified experiments. \textit{Large bins} corresponds to the \citet{Calonicoetal2015} bin width selector that minimizes the integrated mean squared error of the bin-average estimators of the conditional expectation function; \textit{small bins} corresponds to the \citet{Calonicoetal2015} bin width selector that aims to approximate the variability of the underlying data.}%
    \end{minipage}
\end{table}

\begin{table}[H]
    \caption{\label{tab:TE-RD-III}Effects of Graphical Methods on Visual Inference: Phase 3}
    \centering\resizebox*{4.0in}{!}{%
\begin{tabular*}{1.0\hsize}{@{\hskip\tabcolsep\extracolsep\fill}l c c c c c c} 
\vspace{-.45cm} 
&\multicolumn{6}{c}{Dependent variable: player reports discontinuity} \\ 
\cline{2-7} \\
&\multicolumn{6}{c}{True discontinuity magnitude = } \\ 
\vspace{-.45cm} 
 &\multicolumn{6}{c}{} \\
 &    0 &  0.1944 $\sigma$  & 0.324 $\sigma$ & 0.54 $\sigma$ & 0.9 $\sigma$ &  1.5 $\sigma$   \\
\cline{2-7} \\
Fit lines; Vertical line   & 0.142 & 0.215 & 0.172 & 0.051 & 0.057 & -0.002 \\
  & (0.038) & (0.046) & (0.044) & (0.045) & (0.029) & (0.025) \\
No fit lines; No vertical line   & 0.017 & 0.012 & -0.007 & -0.052 & -0.026 & -0.023 \\
  & (0.024) & (0.043) & (0.045) & (0.043) & (0.031) & (0.025) \\
No fit lines; Vertical line (Mean)   & 0.037 & 0.159 & 0.348 & 0.634 & 0.860 & 0.976 \\
Number of graphs  & 496 & 496 & 496 & 496 & 496 & 248 \\
Number of players  & 248 & 248 & 248 & 248 & 248 & 248 \\
\cline{1-7} \\
\end{tabular*} 
}

    \begin{minipage}[t]{0.8\columnwidth}%
        {\scriptsize{}Notes: We have a stratified randomized experiment, where each of the 11 strata is determined by the DGPs seen for every discontinuity magnitude. The treatment effect estimates come from regressing the participants' responses on treatment indicators and stratum fixed effects. We obtain standard errors using the procedure from \citet{Bugnietal2019} for stratified experiments. \textit{Fit lines} indicates the presence of parametric fit lines; \textit{vertical line} indicates the presence of a vertical line at the treatment threshold.}%
    \end{minipage}
\end{table}

\begin{table}[H]
    \caption{\label{tab:TE-RD-VI}Effects of Graphical Methods on Visual Inference: Phase 4}
    \centering
    \resizebox*{4.0in}{!}{%
\begin{tabular*}{1.0\hsize}{@{\hskip\tabcolsep\extracolsep\fill}l c c c c c c} 
\vspace{-.45cm} 
&\multicolumn{6}{c}{Dependent variable: player reports discontinuity} \\ 
\cline{2-7} \\
&\multicolumn{6}{c}{True discontinuity magnitude = } \\ 
\vspace{-.45cm} 
 &\multicolumn{6}{c}{} \\
 &    0 &  0.1944 $\sigma$  & 0.324 $\sigma$ & 0.54 $\sigma$ & 0.9 $\sigma$ &  1.5 $\sigma$   \\
\cline{2-7} \\
Large bins; No fit lines   & 0.145 & 0.283 & 0.223 & 0.239 & 0.074 & 0.027 \\
  & (0.038) & (0.044) & (0.047) & (0.042) & (0.033) & (0.030) \\
Small bins; Fit lines   & -0.019 & 0.088 & 0.047 & 0.059 & 0.024 & 0.002 \\
  & (0.030) & (0.042) & (0.049) & (0.046) & (0.036) & (0.034) \\
Large bins; Fit lines   & 0.230 & 0.300 & 0.239 & 0.207 & 0.058 & 0.003 \\
  & (0.045) & (0.046) & (0.047) & (0.044) & (0.034) & (0.034) \\
Small bins; No fit lines (Mean)   & 0.074 & 0.154 & 0.395 & 0.611 & 0.870 & 0.951 \\
Number of graphs  & 680 & 680 & 680 & 680 & 680 & 340 \\
Number of players  & 340 & 340 & 340 & 340 & 340 & 340 \\
\cline{1-7} \\
\end{tabular*} 
}

    \begin{minipage}[t]{0.8\columnwidth}%
        {\scriptsize{}Notes: We have a stratified randomized experiment, where each of the 11 strata is determined by the DGPs seen for every discontinuity magnitude. The treatment effect estimates come from regressing the participants' responses on treatment indicators and stratum fixed effects. We obtain standard errors using the procedure from \citet{Bugnietal2019} for stratified experiments. \textit{Large bins} corresponds to the \citet{Calonicoetal2015} bin width selector that minimizes the integrated mean squared error of the bin-average estimators of the conditional expectation function; \textit{small bins} corresponds to the \citet{Calonicoetal2015} bin width selector that aims to approximate the variability of the underlying data; \textit{fit line} indicates the presence of parametric fit lines.}%
    \end{minipage}
\end{table}

\begin{table}[H]
    \caption{\label{tab:TE-RD-VII}Effects of DGP Specification on Visual Inference: Supplemental Phase 5}
    \centering
    \resizebox*{4.0in}{!}{%
\begin{tabular*}{1.0\hsize}{@{\hskip\tabcolsep\extracolsep\fill}l c c c c c c} 
\vspace{-.45cm} 
&\multicolumn{6}{c}{Dependent variable: player reports discontinuity} \\ 
\cline{2-7} \\
&\multicolumn{6}{c}{True discontinuity magnitude = } \\ 
\vspace{-.45cm} 
 &\multicolumn{6}{c}{} \\
 &    0 &  0.1944 $\sigma$  & 0.324 $\sigma$ & 0.54 $\sigma$ & 0.9 $\sigma$ &  1.5 $\sigma$   \\
\cline{2-7} \\
Homoskedastic local linear   & -0.018 & -0.044 & 0.013 & -0.006 & 0.001 & 0.000 \\
  & (0.025) & (0.042) & (0.043) & (0.045) & (0.026) & (0.027) \\
Heteroskedastic global quintic   & -0.025 & -0.052 & -0.010 & -0.046 & 0.009 & 0.023 \\
  & (0.021) & (0.040) & (0.043) & (0.046) & (0.026) & (0.022) \\
Heteroskedastic local linear   & -0.011 & -0.063 & 0.044 & 0.006 & 0.008 & 0.022 \\
  & (0.023) & (0.039) & (0.041) & (0.047) & (0.028) & (0.023) \\
Homoskedastic global quintic (Mean)   & 0.047 & 0.218 & 0.359 & 0.676 & 0.912 & 0.965 \\
Number of graphs  & 678 & 678 & 678 & 678 & 678 & 339 \\
Number of players  & 339 & 339 & 339 & 339 & 339 & 339 \\
\cline{1-7} \\
\end{tabular*} 
}

    \begin{minipage}[t]{0.8\columnwidth}%
        {\scriptsize{}Notes:  In the supplemental Phase 5, we test the sensitivity of visual inference to alternative specifications of the data generating processes (using a local linear instead of global quintic specification and allowing for heteroskedastic errors). We have a stratified randomized experiment, where each of the 11 strata is determined by the DGPs seen for every discontinuity magnitude. The treatment effect estimates come from regressing the participants' responses on treatment indicators and stratum fixed effects. We obtain standard errors using the procedure from \citet{Bugnietal2019} for stratified experiments.}%
    \end{minipage}
\end{table}
 
\begin{table}[H]
    \caption{Risks: Classical and Andrews and Shapiro (AS)}
    \label{tab:risks-main}
    \centering{\footnotesize{%
    \begin{tabular}{llllll}
    \hline
    & {\footnotesize{} (1)} & {\footnotesize{}(2)} & {\footnotesize{}(3)} & {\footnotesize{}(4)} \\
    {\footnotesize{}Treatment} & {\footnotesize{}\makecell[l]{Type I Error Rate\\($d=0$)}} & {\footnotesize{}\makecell[l]{Type II Error Rate\\($\lvert d \rvert=0.324\sigma$)}} & {\footnotesize{}\makecell[l]{Classical Risk\\(Equal Weights)}} & {\footnotesize{}\makecell[l]{Classical Risk\\($4\times$ Weight at $d=0$)}} \\
    \hline 
    & \multicolumn{4}{c}{Phase 1}  \\\cline{2-5}
    \textbf{Small bins/default axes} & 0.055 & 0.634 & 0.688 & 0.853 \\
    Large bins/default axes & 0.257 & 0.461 & 0.717 [0.647] & 1.488 [0.000] \\
    Small bins/large axes & 0.036 & 0.692 & 0.729 [0.504] & 0.838 [0.835] \\
    Large bins/large axes & 0.198 & 0.520 & 0.718 [0.747] & 1.313 [0.002] \\
    & \multicolumn{4}{c}{Phase 2} \\\cline{2-5}
    \textbf{Small bins/even spacing} & 0.053 & 0.648 & 0.702 & 0.861 \\
    Large bins/even spacing & 0.306 & 0.439 & 0.745 [0.355] & 1.663 [0.000] \\
    Small bins/quantile spacing & 0.088 & 0.655 & 0.743 [0.414] & 1.008 [0.270] \\
    Large bins/quantile spacing & 0.211 & 0.421 & 0.632 [0.322] & 1.264 [0.006] \\
    & \multicolumn{4}{c}{Phase 3} \\\cline{2-5}
    \textbf{No fit lines/vertical line} & 0.036 & 0.659 & 0.694 & 0.802 \\
    Fit lines/vertical line & 0.179 & 0.485 & 0.664 [0.613] & 1.200 [0.010] \\
    No fit lines/no vertical line & 0.052 & 0.664 & 0.716 [0.658] & 0.873 [0.487] \\
    & \multicolumn{4}{c}{Phase 4} \\\cline{2-5}
    \textbf{Small bins/no fit lines} & 0.073 & 0.609 & 0.682 & 0.900 \\
    Large bins/no fit lines & 0.218 & 0.379 & 0.597 [0.174] & 1.250 [0.022] \\
    Small bins/fit lines & 0.054 & 0.555 & 0.609 [0.220] & 0.769 [0.285] \\
    Large bins/fit lines & 0.304 & 0.367 & 0.671 [0.908] & 1.582 [0.000] \\
    \hline
    {\footnotesize{}Treatment} &  {\footnotesize{}\makecell[l]{AS Risk\\($d=0$)}} & {\footnotesize{}\makecell[l]{AS Risk\\($\lvert d \rvert=0.324\sigma$)}} & {\footnotesize{}\makecell[l]{AS Risk\\(Equal Weights)}} & {\footnotesize{}\makecell[l]{AS Risk\\($4\times$ Weight at $d=0$)}} \\ 
    \hline
    & \multicolumn{4}{c}{Phase 1}\\\cline{2-5}
    \textbf{Small bins/default axes} & 0.180 & 0.208 & 0.388 & 0.927 \\
    Large bins/default axes & 0.215 & 0.221 & 0.436 [0.002] & 1.081 [0.000] \\
    Small bins/large axes & 0.182 & 0.204 & 0.386 [0.913] & 0.931 [0.924] \\
    Large bins/large axes & 0.201 & 0.213 & 0.415 [0.078] & 1.018 [0.024] \\
    & \multicolumn{4}{c}{Phase 2} \\\cline{2-5}
    \textbf{Small bins/even spacing} & 0.183 & 0.214 & 0.397 & 0.947 \\
    Large bins/even spacing & 0.226 & 0.215 & 0.441 [0.018] & 1.119 [0.001] \\
    Small bins/quantile spacing & 0.212 & 0.233 & 0.445 [0.005] & 1.081 [0.002] \\
    Large bins/quantile spacing & 0.229 & 0.218 & 0.447 [0.009] & 1.134 [0.000] \\
    & \multicolumn{4}{c}{Phase 3} \\\cline{2-5}
     \textbf{No fit lines/vertical line} & 0.192 & 0.210 & 0.402 & 0.980 \\
    Fit lines/vertical line & 0.183 & 0.218 & 0.401 [0.997] & 0.950 [0.510] \\
    No fit lines/no vertical line & 0.190 & 0.214 & 0.403 [0.891] & 0.972 [0.882] \\
    & \multicolumn{4}{c}{Phase 4} \\\cline{2-5}
    \textbf{Small bins/no fit lines} & 0.198 & 0.227 & 0.425 & 1.017 \\
    Large bins/no fit lines & 0.223 & 0.217 & 0.439 [0.397] & 1.107 [0.063] \\
    Small bins/fit lines & 0.192 & 0.216 & 0.408 [0.398] & 0.984 [0.506] \\
    Large bins/fit lines & 0.219 & 0.216 & 0.435 [0.586] & 1.090 [0.134] \\
    \hline 
    \end{tabular}{\footnotesize\par}
    }}
    \medskip
    \begin{singlespace}
    \noindent %
    \begin{minipage}[t]{6.5in}%
    \begin{singlespace}
        {\scriptsize{}Notes: For both the classical and the Andrews and Shapiro risk measures, column (3) is simply the sum of columns (1) and (2); column (4) is equal to four times column (1) plus column (2). Risks at $\lvert d \rvert=0.324\sigma$ are calculated using responses at that discontinuity magnitude, the mode and median of the subset of our 11 DGPs corresponding to papers that report discontinuities. In brackets in columns (3) and (4) are the $p$-values for testing whether the difference in risks relative to the first and benchmark treatment (in bold) within each phase is zero. We obtain the $p$-values by regressing risks on treatment indicators and stratum fixed effects, where we define the 11 strata by the DGPs seen for every discontinuity magnitude, and conducting inference using the procedure from \citet{Bugnietal2019} for stratified experiments. \textit{Large bins} corresponds to the \citet{Calonicoetal2015} bin width selector that minimizes the integrated mean squared error of the bin-average estimators of the conditional expectation function; \textit{small bins} corresponds to the \citet{Calonicoetal2015} bin width selector that aims to approximate the variability of the underlying data; \textit{default axes} corresponds to Stata 14's default axis scaling; \textit{large axes} doubles this default axis scaling; \textit{vertical line} indicates the presence of a vertical line at the treatment threshold. In each phase, we highlight the repeated benchmark small bins/no fit lines treatment in bold.}
    \end{singlespace}
    \end{minipage}
    \end{singlespace}
\end{table}
 
\begin{table}[H]
    \caption{Risks: Classical and Andrews and Shapiro (AS), Using All Non-Zero Discontinuities}
    \label{tab:risks-supp}
    \centering{\footnotesize{%
        \begin{tabular}{llllll}
            \hline
            & {\footnotesize{} (1)} & {\footnotesize{}(2)} & {\footnotesize{}(3)} & {\footnotesize{}(4)} \\
            {\footnotesize{}Treatment} & {\footnotesize{}\makecell[l]{Type I Error Rate\\($d=0)$}} & {\footnotesize{}\makecell[l]{Type II Error Rate\\($d\neq 0$)}} & {\footnotesize{}\makecell[l]{Classical Risk\\(Equal Weights)}} & {\footnotesize{}\makecell[l]{Classical Risk\\($4\times$ Weight at $d=0$)}} \\
            \hline 
            & \multicolumn{4}{c}{Phase 1}  \\\cline{2-5}
            \textbf{Small bins/default axes} & 0.055 & 0.374 & 0.428 & 0.593 \\
            Large bins/default axes & 0.257 & 0.277 & 0.534 [0.042] & 1.305 [0.000] \\
            Small bins/large axes & 0.036 & 0.403 & 0.440 [0.947] & 0.549 [0.537] \\
            Large bins/large axes & 0.198 & 0.294 & 0.492 [0.168] & 1.087 [0.001] \\
            & \multicolumn{4}{c}{Phase 2}  \\\cline{2-5}
            \textbf{Small bins/even spacing} & 0.053 & 0.394 & 0.447 & 0.606 \\
            Large bins/even spacing & 0.306 & 0.266 & 0.572 [0.005] & 1.490 [0.000] \\
            Small bins/quantile spacing & 0.088 & 0.407 & 0.495 [0.216] & 0.760 [0.263] \\
            Large bins/quantile spacing & 0.211 & 0.275 & 0.485 [0.497] & 1.117 [0.001] \\
            & \multicolumn{4}{c}{Phase 3}  \\\cline{2-5}
            \textbf{No fit lines/vertical line} & 0.036 & 0.405 & 0.440 & 0.548 \\
            Fit lines/vertical line & 0.179 & 0.308 & 0.487 [0.374] & 1.024 [0.002]\\
            No fit lines/no vertical line & 0.052 & 0.424 & 0.476 [0.277] & 0.633 [0.387] \\
            & \multicolumn{4}{c}{Phase 4}  \\\cline{2-5}
            \textbf{Small bins/no fit lines} & 0.073 & 0.406 & 0.479 & 0.698 \\
            Large bins/no fit lines & 0.218 & 0.233 & 0.451 [0.342] & 1.104 [0.009] \\
            Small bins/fit lines & 0.054 & 0.360 & 0.413 [0.072] & 0.574 [0.304] \\
            Large bins/fit lines & 0.304 & 0.244 & 0.548 [0.270] & 1.459 [0.000] \\
            \hline
            {\footnotesize{}Treatment} &  {\footnotesize{}\makecell[l]{AS Risk\\($d=0$)}} & {\footnotesize{}\makecell[l]{AS Risk\\($d\neq0$)}} & {\footnotesize{}\makecell[l]{AS Risk\\(Equal Weights)}} & {\footnotesize{}\makecell[l]{AS Risk\\($4\times$ Weight at $d=0$)}} \\ 
            \hline
            & \multicolumn{4}{c}{Phase 1}  \\\cline{2-5}
            \textbf{Small bins/default axes} & 0.180 & 0.201 & 0.381 & 0.920 \\
            Large bins/default axes & 0.215 & 0.206 & 0.421 [0.003] & 1.066 [0.000] \\
            Small bins/large axes & 0.182 & 0.205 & 0.387 [0.627] & 0.932 [0.765] \\
            Large bins/large axes & 0.201 & 0.200 & 0.402 [0.133] & 1.006 [0.044] \\
            & \multicolumn{4}{c}{Phase 2}  \\\cline{2-5}
            \textbf{Small bins/even spacing} & 0.183 & 0.199 & 0.382 & 0.932 \\
            Large bins/even spacing & 0.226 & 0.200 & 0.425 [0.007] & 1.103 [0.000] \\
            Small bins/quantile spacing & 0.212 & 0.218 & 0.430 [0.001] & 1.066 [0.003] \\
            Large bins/quantile spacing & 0.229 & 0.205 & 0.435 [0.000] & 1.122 [0.000] \\
            & \multicolumn{4}{c}{Phase 3}  \\\cline{2-5}
            \textbf{No fit lines/vertical line} & 0.192 & 0.202 & 0.394 & 0.972 \\
            Fit lines/vertical line & 0.183 & 0.198 & 0.381 [0.319] & 0.930 [0.333] \\
            No fit lines/no vertical line & 0.190 & 0.204 & 0.393 [0.949] & 0.962 [0.835] \\
            & \multicolumn{4}{c}{Phase 4}  \\\cline{2-5}
            \textbf{Small bins/no fit lines} & 0.198 & 0.208 & 0.405 & 0.998 \\
            Large bins/no fit lines & 0.223 & 0.200 & 0.423 [0.277] & 1.091 [0.059] \\
            Small bins/fit lines & 0.192 & 0.198 & 0.390 [0.318] & 0.967 [0.508] \\
            Large bins/fit lines & 0.219 & 0.207 & 0.425 [0.224] & 1.081 [0.083] \\
            \hline 
        \end{tabular}{\footnotesize\par}
    }}
    \medskip
    \begin{singlespace}
    \noindent %
    \begin{minipage}[t]{6.5in}%
        \begin{singlespace}
            {\scriptsize{}Notes: For both the classical and the Andrews and Shapiro risk measures, column (3) is simply the sum of columns (1) and (2); column (4) is equal to four times column (1) plus column (2). Values when $d \neq 0$ weight all discontinuity magnitudes equally. In brackets in columns (3) and (4) are the $p$-values for testing whether the difference in risks relative to the first and benchmark treatment (in bold) within each phase is zero. We obtain the $p$-values by regressing risks on treatment indicators and stratum fixed effects, where we define the 11 strata by the DGPs seen for every discontinuity magnitude, and conducting inference using the procedure from \citet{Bugnietal2019} for stratified experiments. \textit{Large bins} corresponds to the \citet{Calonicoetal2015} bin width selector that minimizes the integrated mean squared error of the bin-average estimators of the conditional expectation function; \textit{small bins} corresponds to the \citet{Calonicoetal2015} bin width selector that aims to approximate the variability of the underlying data; \textit{default axes} corresponds to Stata 14's default axis scaling; \textit{large axes} doubles this default axis scaling; \textit{vertical line} indicates the presence of a vertical line at the treatment threshold. In each we phase, we highlight the repeated benchmark small bins/no fit lines treatment in bold.}
        \end{singlespace}
    \end{minipage}
    \end{singlespace}
\end{table}

\begin{table}[H]
    \caption{\label{tab:TE-conf-RD-I}Effects of Graphical Methods on Subjective Probability of Correct Classification: Phase 1}
    \centering\resizebox*{4.0in}{!}{%
\begin{tabular*}{1.0\hsize}{@{\hskip\tabcolsep\extracolsep\fill}l c c c c c c} 
\vspace{-.45cm} 
&\multicolumn{6}{c}{Dependent variable: subjective probability correct} \\ 
\cline{2-7} \\
&\multicolumn{6}{c}{True discontinuity magnitude = } \\ 
\vspace{-.45cm} 
 &\multicolumn{6}{c}{} \\
 &    0 &  0.1944 $\sigma$  & 0.324 $\sigma$ & 0.54 $\sigma$ & 0.9 $\sigma$ &  1.5 $\sigma$   \\
\cline{2-7} \\
Large bins; Default y-axis   & -0.035 & -0.001 & -0.014 & 0.002 & -0.012 & 0.007 \\
  & (0.009) & (0.012) & (0.011) & (0.012) & (0.009) & (0.006) \\
Small bins; Large y-axis   & -0.001 & 0.006 & 0.003 & -0.021 & -0.006 & 0.000 \\
  & (0.007) & (0.010) & (0.011) & (0.011) & (0.008) & (0.007) \\
Large bins; Large y-axis   & -0.020 & 0.003 & -0.007 & 0.008 & -0.004 & 0.003 \\
  & (0.009) & (0.011) & (0.010) & (0.011) & (0.008) & (0.007) \\
Small bins; Default y-axis (Mean)   & 0.820 & 0.791 & 0.793 & 0.787 & 0.812 & 0.826 \\
Number of graphs  & 660 & 660 & 660 & 660 & 660 & 330 \\
Number of players  & 330 & 330 & 330 & 330 & 330 & 330 \\
\cline{1-7} \\
\end{tabular*} 
}

    \begin{minipage}[t]{0.8\columnwidth}%
        {\scriptsize{}Notes: We have a stratified randomized experiment, where each of the 11 strata is determined by the DGPs seen for every discontinuity magnitude. The treatment effect estimates come from regressing the participants' imputed subjective probabilities of a correct classification on treatment indicators and stratum fixed effects. We obtain standard errors using the procedure from \citet{Bugnietal2019} for stratified experiments. \textit{Large bins} corresponds to the \citet{Calonicoetal2015} bin width selector that minimizes the integrated mean squared error of the bin-average estimators of the conditional expectation function; \textit{small bins} corresponds to the \citet{Calonicoetal2015} bin width selector that aims to approximate the variability of the underlying data; \textit{default} $y$-axis scaling corresponds to Stata 14's default axis scaling; \textit{large} $y$-axis scaling doubles this default axis scaling.}%
    \end{minipage}
\end{table}

\begin{table}[H]
    \caption{\label{tab:TE-conf-RD-II}Effects of Graphical Methods on Subjective Probability of Correct Classification: Phase 2}
    \centering\resizebox*{4.0in}{!}{%
\begin{tabular*}{1.0\hsize}{@{\hskip\tabcolsep\extracolsep\fill}l c c c c c c} 
\vspace{-.45cm} 
&\multicolumn{6}{c}{Dependent variable: subjective probability correct} \\ 
\cline{2-7} \\
&\multicolumn{6}{c}{True discontinuity magnitude = } \\ 
\vspace{-.45cm} 
 &\multicolumn{6}{c}{} \\
 &    0 &  0.1944 $\sigma$  & 0.324 $\sigma$ & 0.54 $\sigma$ & 0.9 $\sigma$ &  1.5 $\sigma$   \\
\cline{2-7} \\
Large bins; Equal spacing   & -0.044 & -0.002 & -0.002 & 0.001 & 0.001 & 0.003 \\
  & (0.011) & (0.009) & (0.012) & (0.011) & (0.009) & (0.008) \\
Small bins; Quantile spacing   & -0.028 & -0.014 & -0.021 & -0.035 & -0.012 & -0.003 \\
  & (0.010) & (0.010) & (0.012) & (0.012) & (0.009) & (0.009) \\
Large bins; Qunatile spacing   & -0.046 & -0.035 & -0.004 & 0.007 & 0.001 & 0.009 \\
  & (0.011) & (0.011) & (0.011) & (0.010) & (0.008) & (0.007) \\
Small bins; Equal spacing (Mean)   & 0.817 & 0.802 & 0.787 & 0.794 & 0.809 & 0.823 \\
Number of graphs  & 650 & 650 & 650 & 650 & 650 & 325 \\
Number of players  & 325 & 325 & 325 & 325 & 325 & 325 \\
\cline{1-7} \\
\end{tabular*} 
}

    \begin{minipage}[t]{0.8\columnwidth}%
        {\scriptsize{}Notes: We have a stratified randomized experiment, where each of the 11 strata is determined by the DGPs seen for every discontinuity magnitude. The treatment effect estimates come from regressing the participants' imputed subjective probabilities of a correct classification on treatment indicators and stratum fixed effects. We obtain standard errors using the procedure from \citet{Bugnietal2019} for stratified experiments. \textit{Large bins} corresponds to the \citet{Calonicoetal2015} bin width selector that minimizes the integrated mean squared error of the bin-average estimators of the conditional expectation function; \textit{small bins} corresponds to the \citet{Calonicoetal2015} bin width selector that aims to approximate the variability of the underlying data.}%
    \end{minipage}
\end{table}

\begin{table}[H]
    \caption{\label{tab:TE-conf-RD-III}Effects of Graphical Methods on Subjective Probability of Correct Classification: Phase 3}
    \centering\resizebox*{4.0in}{!}{%
\begin{tabular*}{1.0\hsize}{@{\hskip\tabcolsep\extracolsep\fill}l c c c c c c} 
\vspace{-.45cm} 
&\multicolumn{6}{c}{Dependent variable: subjective probability correct} \\ 
\cline{2-7} \\
&\multicolumn{6}{c}{True discontinuity magnitude = } \\ 
\vspace{-.45cm} 
 &\multicolumn{6}{c}{} \\
 &    0 &  0.1944 $\sigma$  & 0.324 $\sigma$ & 0.54 $\sigma$ & 0.9 $\sigma$ &  1.5 $\sigma$   \\
\cline{2-7} \\
Fit lines; Vertical line   & 0.008 & 0.010 & -0.007 & 0.020 & -0.004 & -0.000 \\
  & (0.008) & (0.011) & (0.011) & (0.011) & (0.008) & (0.008) \\
No fit lines; No vertical line   & 0.002 & 0.002 & -0.003 & 0.007 & -0.009 & -0.006 \\
  & (0.009) & (0.011) & (0.012) & (0.011) & (0.009) & (0.008) \\
No fit lines; Vertical line (Mean)   & 0.808 & 0.794 & 0.791 & 0.780 & 0.814 & 0.826 \\
Number of graphs  & 496 & 496 & 496 & 496 & 496 & 248 \\
Number of players  & 248 & 248 & 248 & 248 & 248 & 248 \\
\cline{1-7} \\
\end{tabular*} 
}

    \begin{minipage}[t]{0.8\columnwidth}%
        {\scriptsize{}Notes: We have a stratified randomized experiment, where each of the 11 strata is determined by the DGPs seen for every discontinuity magnitude. The treatment effect estimates come from regressing the participants' imputed subjective probabilities of a correct classification on treatment indicators and stratum fixed effects. We obtain standard errors using the procedure from \citet{Bugnietal2019} for stratified experiments. \textit{Fit line} indicates the presence of parametric fit lines; \textit{vertical line} indicates the presence of a vertical line at the treatment threshold.}%
    \end{minipage}
\end{table}

\begin{table}[H]
    \caption{\label{tab:TE-conf-RD-VI}Effects of Graphical Methods on Subjective Probability of Correct Classification: Phase 4}
    \centering\resizebox*{4.0in}{!}{%
\begin{tabular*}{1.0\hsize}{@{\hskip\tabcolsep\extracolsep\fill}l c c c c c c} 
\vspace{-.45cm} 
&\multicolumn{6}{c}{Dependent variable: subjective probability correct} \\ 
\cline{2-7} \\
&\multicolumn{6}{c}{True discontinuity magnitude = } \\ 
\vspace{-.45cm} 
 &\multicolumn{6}{c}{} \\
 &    0 &  0.1944 $\sigma$  & 0.324 $\sigma$ & 0.54 $\sigma$ & 0.9 $\sigma$ &  1.5 $\sigma$   \\
\cline{2-7} \\
Large bins; No fit lines   & -0.025 & -0.006 & 0.010 & 0.014 & 0.016 & 0.005 \\
  & (0.011) & (0.010) & (0.012) & (0.011) & (0.008) & (0.009) \\
Small bins; Fit lines   & 0.005 & 0.004 & 0.011 & 0.006 & 0.022 & 0.004 \\
  & (0.010) & (0.011) & (0.013) & (0.012) & (0.008) & (0.009) \\
Large bins; Fit lines   & -0.021 & -0.017 & 0.011 & 0.009 & 0.002 & 0.001 \\
  & (0.011) & (0.011) & (0.012) & (0.012) & (0.010) & (0.010) \\
Small bins; No fit lines (Mean)   & 0.803 & 0.796 & 0.773 & 0.786 & 0.801 & 0.816 \\
Number of graphs  & 680 & 680 & 680 & 680 & 680 & 340 \\
Number of players  & 340 & 340 & 340 & 340 & 340 & 340 \\
\cline{1-7} \\
\end{tabular*} 
}

    \begin{minipage}[t]{0.8\columnwidth}%
        {\scriptsize{}Notes: We have a stratified randomized experiment, where each of the 11 strata is determined by the DGPs seen for every discontinuity magnitude. The treatment effect estimates come from regressing the participants' imputed subjective probabilities of a correct classification on treatment indicators and stratum fixed effects. We obtain standard errors using the procedure from \citet{Bugnietal2019} for stratified experiments. \textit{Large bins} corresponds to the \citet{Calonicoetal2015} bin width selector that minimizes the integrated mean squared error of the bin-average estimators of the conditional expectation function; \textit{small bins} corresponds to the \citet{Calonicoetal2015} bin width selector that aims to approximate the variability of the underlying data. \textit{Fit line} indicates the presence of parametric fit lines.}%
    \end{minipage}
\end{table}

\begin{table}[htbp]
    \caption{Timeline of Expert Study\label{tab:expert-details-appendix}}
    \centering\includegraphics[scale=0.9]{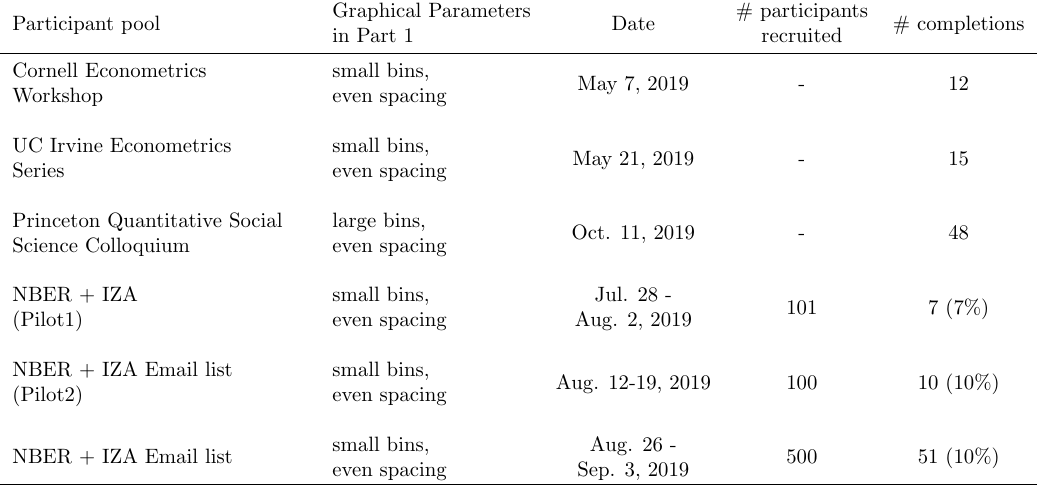}
    \begin{minipage}[t]{0.95\columnwidth}%
        {\scriptsize{}Notes: \textit{Large bins} corresponds to the \citet{Calonicoetal2015} bin width selector that minimizes the integrated mean squared error of the bin-average estimators of the conditional expectation function; \textit{small bins} corresponds to the \citet{Calonicoetal2015} bin width selector that aims to approximate the variability of the underlying data. }%
    \end{minipage}
\end{table}

\begin{table}[H]
    \caption{\label{tab:Cov-Bal-RD-I}Covariate Means and Balance across Treatment Arms: Phase 1}
    \centering\resizebox*{4.0in}{!}{%
\begin{tabular*}{1.0\hsize}{@{\hskip\tabcolsep\extracolsep\fill}l c c c c c } 
 &\multicolumn{5}{c}{} \\
 &    T1 &  T2  & T3 & T4 & p-value   \\
\cline{2-6} \\
Female   & 0.687 & 0.738 & 0.759 & 0.688 & 0.649  \\
Completed college   & 0.518 & 0.512 & 0.482 & 0.625 & 0.268  \\
Age 18 to 22   & 0.422 & 0.571 & 0.554 & 0.425 & 0.088  \\
Age 23 to 49   & 0.494 & 0.381 & 0.398 & 0.487 & 0.318  \\
Age 50 or older   & 0.060 & 0.048 & 0.048 & 0.087 & 0.742  \\
Graduate stats knowledge   & 0.145 & 0.119 & 0.133 & 0.050 & 0.091  \\
Passed attention check   & 0.867 & 0.857 & 0.843 & 0.912 & 0.516  \\
Attrition rate   & 0.057 & 0.045 & 0.057 & 0.091 & 0.690  \\
Participants & 83 & 84 & 83 & 80 & \\
\cline{1-6} \\
\end{tabular*} 
}

    \begin{singlespace}
    \noindent \centering{}%
    \begin{minipage}[t]{4in}%
        \noindent {\scriptsize{}Notes: The $p$-value for the joint significance of the covariates is 0.175. T1-T4 represent the four treatment arms considered in phase 1.}%
    \end{minipage}
    \end{singlespace}
\end{table}

\begin{table}[H]
    \caption{\label{tab:Cov-Bal-RD-II}Covariate Means and Balance across Treatment Arms: Phase 2}
    \centering\resizebox*{4.0in}{!}{%
\begin{tabular*}{1.0\hsize}{@{\hskip\tabcolsep\extracolsep\fill}l c c c c c } 
 &\multicolumn{5}{c}{} \\
 &    T1 &  T2  & T3 & T4 & p-value   \\
\cline{2-6} \\
Female   & 0.614 & 0.623 & 0.617 & 0.726 & 0.325  \\
Completed college   & 0.494 & 0.429 & 0.543 & 0.512 & 0.526  \\
Age 18 to 22   & 0.518 & 0.584 & 0.481 & 0.488 & 0.548  \\
Age 23 to 49   & 0.386 & 0.364 & 0.481 & 0.429 & 0.447  \\
Age 50 or older   & 0.084 & 0.039 & 0.037 & 0.071 & 0.485  \\
Graduate stats knowledge   & 0.072 & 0.143 & 0.099 & 0.119 & 0.499  \\
Passed attention check   & 0.928 & 0.870 & 0.877 & 0.940 & 0.306  \\
First time player   & 0.711 & 0.805 & 0.802 & 0.762 & 0.466  \\
Attrition rate   & 0.057 & 0.125 & 0.080 & 0.045 & 0.264  \\
Participants & 83 & 77 & 81 & 84 & \\
\cline{1-6} \\
\end{tabular*} 
}

    \begin{singlespace}
    \noindent \centering{}%
    \begin{minipage}[t]{4in}%
        \noindent {\scriptsize{}Notes: The $p$-value for the joint significance of the covariates is 0.496. T1-T4 represent the four treatment arms considered in phase 2.}%
    \end{minipage}
    \end{singlespace}
\end{table}

\begin{table}[H]
    \caption{\label{tab:Cov-Bal-RD-III}Covariate Means and Balance across Treatment Arms: Phase 3}
    \centering\resizebox*{4.0in}{!}{%
\begin{tabular*}{1.0\hsize}{@{\hskip\tabcolsep\extracolsep\fill}l c c c c } 
 &\multicolumn{4}{c}{} \\
 &    T1 &  T2  & T3 &  p-value   \\
\cline{2-5} \\
Female   & 0.683 & 0.725 & 0.616 &  0.323  \\
Completed college   & 0.659 & 0.650 & 0.558 &  0.339  \\
Age 18 to 22   & 0.439 & 0.463 & 0.523 &  0.529  \\
Age 23 to 49   & 0.488 & 0.512 & 0.419 &  0.449  \\
Age 50 or older   & 0.073 & 0.025 & 0.047 &  0.345  \\
Graduate stats knowledge   & 0.159 & 0.113 & 0.070 &  0.184  \\
Passed attention check   & 0.902 & 0.900 & 0.907 &  0.988  \\
First time player   & 0.488 & 0.500 & 0.453 &  0.823  \\
Attrition rate   & 0.068 & 0.091 & 0.023 & 0.087  \\
Participants & 82 & 80 & 86 & \\
\cline{1-5} \\
\end{tabular*} 
}

    \begin{singlespace}
    \noindent \centering{}%
    \begin{minipage}[t]{4in}%
        \noindent {\scriptsize{}Notes: The $p$-value for the joint significance of the covariates is 0.653. T1-T4 represent the four treatment arms considered in phase 3.}%
    \end{minipage}
    \end{singlespace}
\end{table}

\begin{table}[H]
    \caption{\label{tab:Cov-Bal-RD-VI}Covariate Balance across Treatment Arms: Phase 4}
    \centering\resizebox*{4.0in}{!}{%
\begin{tabular*}{1.0\hsize}{@{\hskip\tabcolsep\extracolsep\fill}l c c c c c } 
 &\multicolumn{5}{c}{} \\
 &    T1 &  T2  & T3 & T4 & p-value   \\
\cline{2-6} \\
Female   & 0.654 & 0.782 & 0.733 & 0.744 & 0.327  \\
Completed college   & 0.543 & 0.655 & 0.442 & 0.535 & 0.040  \\
Age 18 to 22   & 0.568 & 0.471 & 0.663 & 0.570 & 0.084  \\
Age 23 to 49   & 0.420 & 0.506 & 0.291 & 0.407 & 0.031  \\
Age 50 or older   & 0.012 & 0.023 & 0.035 & 0.023 & 0.797  \\
Graduate stats knowledge   & 0.160 & 0.115 & 0.070 & 0.047 & 0.067  \\
Passed attention check   & 0.914 & 0.897 & 0.953 & 0.907 & 0.427  \\
First time player   & 0.556 & 0.563 & 0.628 & 0.581 & 0.771  \\
Attrition rate   & 0.080 & 0.011 & 0.023 & 0.023 & 0.185  \\
Participants & 81 & 87 & 86 & 86 & \\
\cline{1-6} \\
\end{tabular*} 
}

    \begin{singlespace}
    \noindent \centering{}%
    \begin{minipage}[t]{4in}%
        \noindent {\scriptsize{}Notes: The $p$-value for the joint significance of the covariates is 0.137. T1-T4 represent the four treatment arms considered in phase 4.}%
    \end{minipage}
    \end{singlespace}
\end{table}

\begin{table}[H]
    \caption{\label{tab:Cov-Bal-RD-VII}Covariate Means and Balance across Treatment Arms: Supplemental Phase 5}
    \centering\resizebox*{4.0in}{!}{%
\begin{tabular*}{1.0\hsize}{@{\hskip\tabcolsep\extracolsep\fill}l c c c c c } 
 &\multicolumn{5}{c}{} \\
 &    T1 &  T2  & T3 & T4 & p-value   \\
\cline{2-6} \\
Female   & 0.659 & 0.706 & 0.694 & 0.774 & 0.388  \\
Completed college   & 0.576 & 0.600 & 0.635 & 0.560 & 0.766  \\
Age 18 to 22   & 0.518 & 0.518 & 0.424 & 0.440 & 0.466  \\
Age 23 to 49   & 0.435 & 0.447 & 0.518 & 0.500 & 0.651  \\
Age 50 or older   & 0.047 & 0.035 & 0.059 & 0.024 & 0.670  \\
Graduate stats knowledge   & 0.176 & 0.094 & 0.082 & 0.155 & 0.186  \\
Passed attention check   & 0.941 & 0.906 & 0.929 & 0.964 & 0.440  \\
First time player   & 0.671 & 0.776 & 0.753 & 0.690 & 0.357  \\
Attrition rate   & 0.034 & 0.034 & 0.034 & 0.045 & 0.976  \\
Participants & 85 & 85 & 85 & 84 & \\
\cline{1-6} \\
\end{tabular*} 
}

    \begin{singlespace}
    \noindent \centering{}%
    \begin{minipage}[t]{4in}%
        \noindent {\scriptsize{}Notes: The $p$-value for the joint significance of the covariates is 0.202. T1-T4 represent the four treatment arms considered in phase 5.}%
    \end{minipage}
    \end{singlespace}
\end{table}

\begin{table}[H]
    \caption{\label{tab:Cov-Prediction-RD-I-II-III-VI}Predictions of Correct Classification with Participant Characteristics}
    \centering\resizebox*{4.0in}{!}{%
\begin{tabular*}{1.0\hsize}{@{\hskip\tabcolsep\extracolsep\fill}l c c c c } 
\toprule 
 &\multicolumn{4}{c}{} \\
 &    Phase 1 &  Phase 2  & Phase 3 & Phase 4  \\
\cline{2-5} \\
\midrule
Female   & -0.001 & -0.007 & -0.019 & 0.017 \\
   & (0.014) & (0.014) & (0.018) & (0.017)  \\
Completed college   & 0.004 & -0.018 & -0.025 & 0.015 \\
   & (0.019) & (0.018) & (0.019) & (0.023)  \\
Age 23 to 49   & -0.006 & 0.003 & -0.012 & 0.009 \\
   & (0.021) & (0.019) & (0.021) & (0.023)  \\
Age 50 or older   & 0.004 & 0.037 & 0.094 & 0.055 \\
   & (0.026) & (0.031) & (0.033) & (0.073)  \\
Graduate stats knowledge   & -0.013 & 0.030 & 0.053 & -0.048 \\
   & (0.022) & (0.022) & (0.028) & (0.029)  \\
Passed attention check   & 0.055 & 0.043 & -0.014 & -0.011 \\
   & (0.023) & (0.026) & (0.029) & (0.031)  \\
First time player   & NA & 0.000 & -0.001 & 0.018 \\
   & (NA) & (0.016) & (0.017) & (0.015)  \\
\cline{1-5} \\
Joint test p-value & 0.321 & 0.460 & 0.009 & 0.389 \\
\bottomrule \\
\end{tabular*} 
}

    \begin{minipage}[t]{4in}%
        \noindent {\scriptsize{}Notes: This table presents the coefficients from regressing whether study participants are correct in classifying an RD graph on their characteristics. Each of the four columns represents results from one of the main four phases. Standard errors are clustered at the participant level.}%
    \end{minipage}{\scriptsize\par}
\end{table}

\begin{table}[H]
    \caption{\label{tab:DGP-Direction-Prediction-Power}Predictions of Type II Error Rates with DGP Characteristics}
    \centering\resizebox*{4.0in}{!}{%
\begin{tabular*}{1.0\hsize}{@{\hskip\tabcolsep\extracolsep\fill}l c c c c} 
 &\multicolumn{4}{c}{} \\
 &    Phase 1 &  Phase 2  & Phase 3 & Phase 4  \\
\cline{2-5} \\
LHS deriv-, RHS deriv-, discont-   & 0.244 & 0.262 & 0.267 & 0.308 \\
   & (0.032) & (0.033) & (0.039) & (0.031) \\
LHS deriv+, RHS deriv-, discont-   & -0.005 & 0.029 & -0.020 & 0.096 \\
   & (0.039) & (0.042) & (0.047) & (0.041) \\
LHS deriv-, RHS deriv+, discont-   & 0.024 & 0.103 & 0.173 & 0.102 \\
   & (0.058) & (0.057) & (0.068) & (0.054) \\
LHS deriv+, RHS deriv+, discont-   & -0.111 & -0.169 & -0.103 & -0.110 \\
   & (0.022) & (0.022) & (0.026) & (0.022) \\
\cline{1-5} \\
\end{tabular*} 
}

    \begin{minipage}[t]{4in}%
        \noindent {\scriptsize{}Notes: This table presents the coefficients from regressions of whether the participant is correct in classifying discontinuous graphs on the interactions of the signs of the left- and right-hand side first derivatives of the CEF at the policy threshold with the sign of discontinuity. \textit{LHS deriv}$+/-$ and \textit{RHS deriv}$+/-$ denote the sign of the left- and right-hand-side derivatives of the CEF, respectively. \textit{discont}$+/-$ denotes the sign of the discontinuity. Each of the four columns represents results from one of the main four phases. Standard errors are clustered at the participant level.}%
    \end{minipage}{\scriptsize\par}
\end{table}

\begin{table}[H]
    \caption{\label{tab:DGP-Direction-Prediction-Size}Predictions of Type I Error Rate with DGP Characteristics}
    \centering\resizebox*{4.0in}{!}{%
\begin{tabular*}{1.0\hsize}{@{\hskip\tabcolsep\extracolsep\fill}l c c c c} 
 &\multicolumn{4}{c}{} \\
 &    Phase 1 &  Phase 2  & Phase 3 & Phase 4   \\
\cline{2-5} \\
LHS deriv-, RHS deriv-   & 0.052 & 0.049 & -0.003 & 0.093 \\
   & (0.035) & (0.037) & (0.029)  & (0.036) \\
LHS deriv+, RHS deriv-   & -0.040 & -0.018 & 0.072 & -0.048 \\
   & (0.032) & (0.035) & (0.037)  & (0.032) \\
LHS deriv-, RHS deriv+   & -0.124 & -0.168 & -0.032 & -0.074 \\
   & (0.026) & (0.022) & (0.033)  & (0.039) \\
LHS deriv+, RHS deriv+ (Mean) & 0.141 & 0.168 & 0.077  & 0.154 \\
   & (0.020) & (0.022) & (0.018) & (0.020) \\
\cline{1-5} \\
\end{tabular*} 
}

    \begin{minipage}[t]{4in}%
        \noindent {\scriptsize{}Notes: This table presents the coefficients from regressions of whether the participant is correct in classifying continuous graphs on the interactions of the signs of left- and right-hand-side first derivatives of the CEF at the policy threshold. \textit{LHS deriv}$+/-$ and \textit{RHS deriv}$+/-$ denote the sign of the left- and right-hand-side derivatives of the CEF at the policy threshold, respectively. \textit{discont}$+/-$ denotes the sign of the discontinuity. Each of the four columns represents results from one of the main four phases. Standard errors are clustered at the participant level.}%
    \end{minipage}{\scriptsize\par}
\end{table}

\begin{table}[H]
    \caption{Dynamic Discontinuity Classification: AR(1) Regression\label{tab:dynamic-discontinuity-table}}
    \centering{\footnotesize{}}%
    \begin{tabular}{ccc}
    \hline
     & {\footnotesize{}(1)} & {\footnotesize{}(2)}\tabularnewline
    {\footnotesize{}Phase} & {\footnotesize{}Estimate of $\rho$ from Eq. (\ref{eq:AR1})} & {\footnotesize{}Bias-Corrected Estimate of $\rho$}\tabularnewline
    \hline 
    {\footnotesize{}1} & {\footnotesize{}-0.127} & {\footnotesize{}-0.041}\tabularnewline
     & {\footnotesize{}(0.038)} & {\footnotesize{}(0.046)}\tabularnewline
    {\footnotesize{}2} & {\footnotesize{}-0.170} & {\footnotesize{}-0.088}\tabularnewline
     & {\footnotesize{}(0.035)} & {\footnotesize{}(0.039)}\tabularnewline
    {\footnotesize{}3} & {\footnotesize{}-0.099} & {\footnotesize{}-0.009}\tabularnewline
     & {\footnotesize{}(0.029)} & {\footnotesize{}(0.036)}\tabularnewline
    {\footnotesize{}4} & {\footnotesize{}-0.126} & {\footnotesize{}-0.034}\tabularnewline
     & {\footnotesize{}(0.035)} & {\footnotesize{}(0.041)}\tabularnewline
    {\footnotesize{}Expert} & {\footnotesize{}-0.205} & {\footnotesize{}-0.126}\tabularnewline
     & {\footnotesize{}(0.032)} & {\footnotesize{}(0.037)}\tabularnewline
    \hline 
    \end{tabular}{\footnotesize\par}
    \medskip{}
    \begin{singlespace}
    \noindent \centering{}%
    \begin{minipage}[t]{3.5in}%
        \begin{singlespace}
        {\scriptsize{}Notes: This table presents results for the AR(1) regression of discontinuity classifications outlined in Equation (\ref{eq:AR1}). Standard errors are in parentheses. Standard errors are clustered at the participant level, and those in column (2) additionally correct the $O(1/S)$ bias by inverting equation (18) of \citet{Nickell1981}.}
        \end{singlespace}
    \end{minipage}
    \end{singlespace}
\end{table}

%\clearpage{}
%\newpage{}

\begin{landscape}
    \begin{figure}[H]
        \caption{\label{fig:RDD-Power-Phase-By-DGP}Power Functions by DGP and Phase}
        \centering
        \includegraphics[width=3.5in]{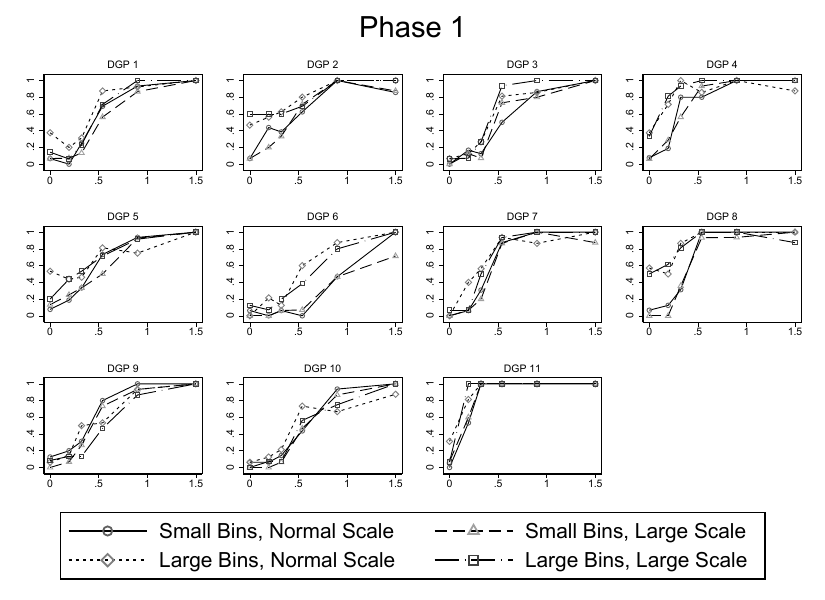}
        \includegraphics[width=3.5in]{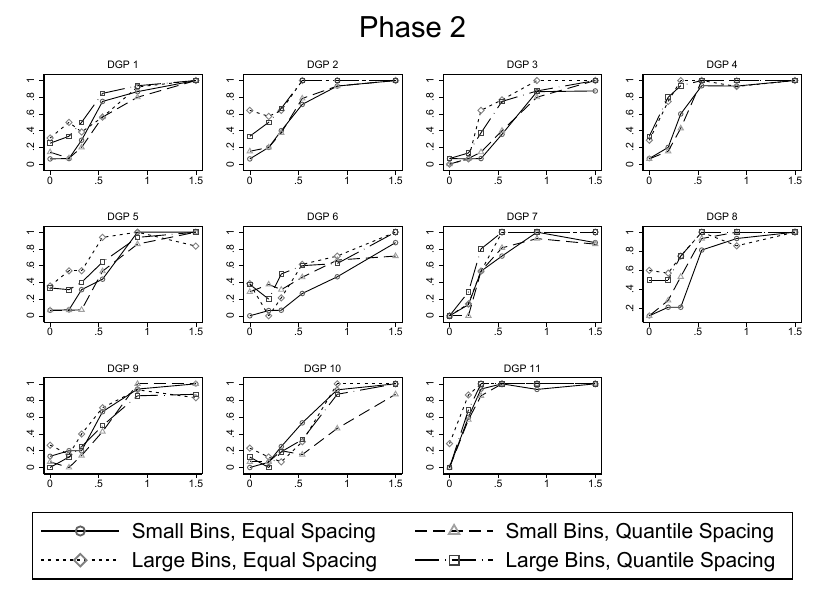}
        \includegraphics[width=3.5in]{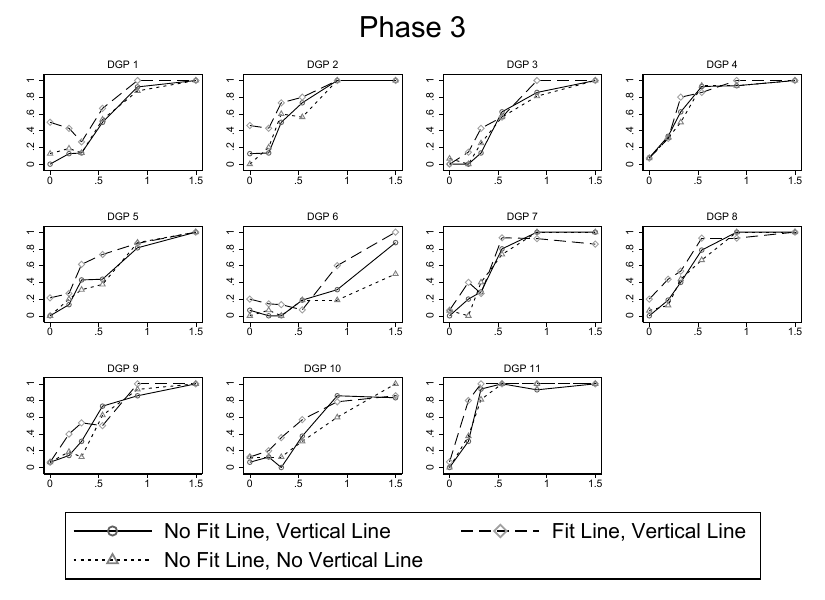}
        \includegraphics[width=3.5in]{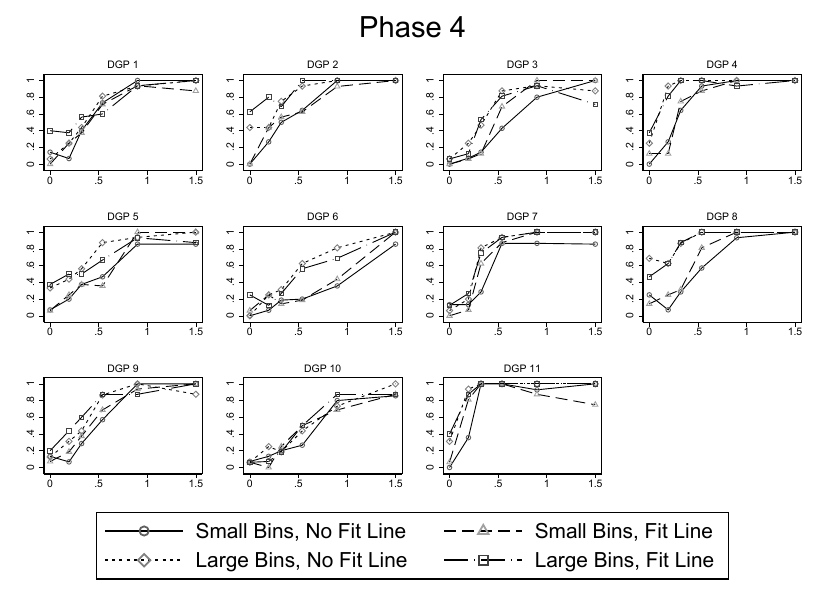}
        \begin{minipage}[t]{\columnwidth}%
            \noindent {\scriptsize{}Notes: These figures break up the power functions (as defined in Section \ref{sec:conceptual-framework}) in Figure \ref{fig:RDD-Power} by DGP. The discontinuity magnitude on the $x$-axis is specified as a multiple of the error standard deviation. The $y$-axis represents the share of respondents classifying a graph as having a discontinuity at the policy threshold. \textit{Large bins} corresponds to the \citet{Calonicoetal2015} bin width selector that minimizes the integrated mean squared error of the bin-average estimators of the conditional expectation function; \textit{small bins} corresponds to the \citet{Calonicoetal2015} bin width selector that aims to approximate the variability of the underlying data; \textit{quantile spacing} indicates that bins were spaced by quantiles rather than evenly spaced; \textit{fit line} indicates the presence of parametric fit lines; \textit{vertical line} indicates the presence of a vertical line at the policy threshold; \textit{normal scale} corresponds to the Stata 14 default $y$-axis scaling; \textit{large scale} doubles that default range.}%
        \end{minipage}{\scriptsize\par}
    \end{figure}

    \begin{figure}[H]
        \caption{\label{fig:Subj-Prob-CorrectBy-DGP}Average Subjective Probabilities of Correct Classification by Phase}
        \centering
        \includegraphics[width=3.5in]{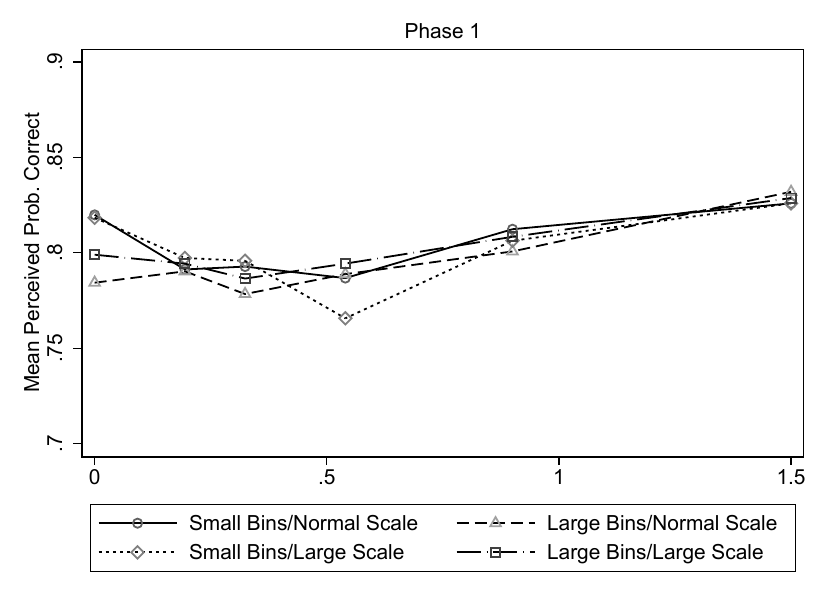}
        \includegraphics[width=3.5in]{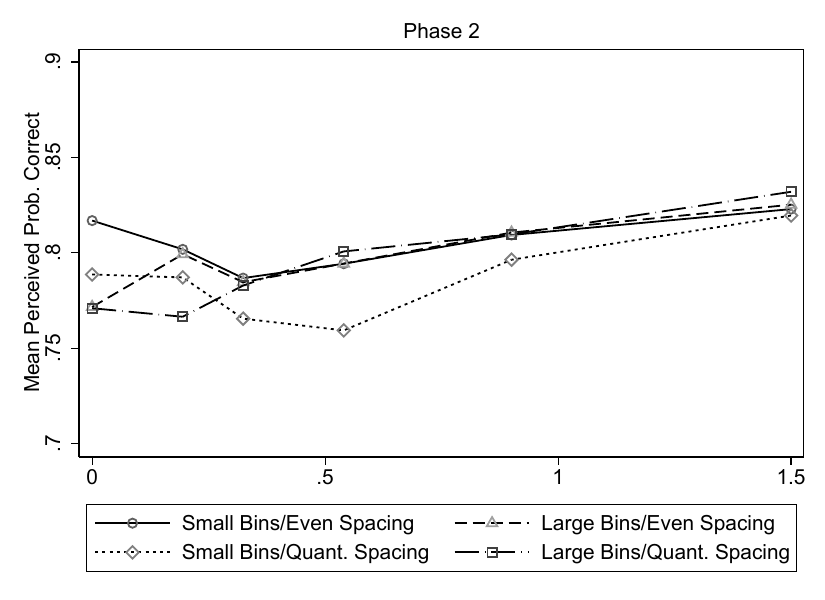}
        \includegraphics[width=3.5in]{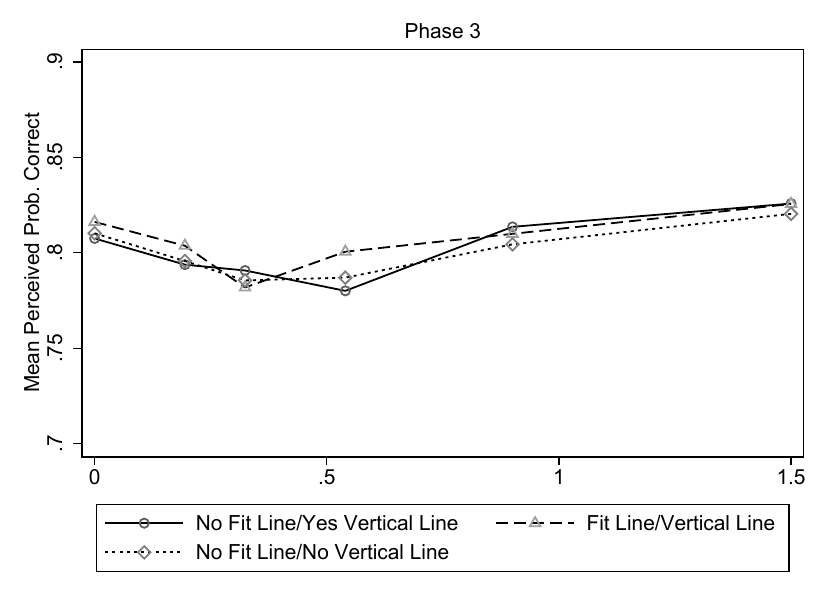}
        \includegraphics[width=3.5in]{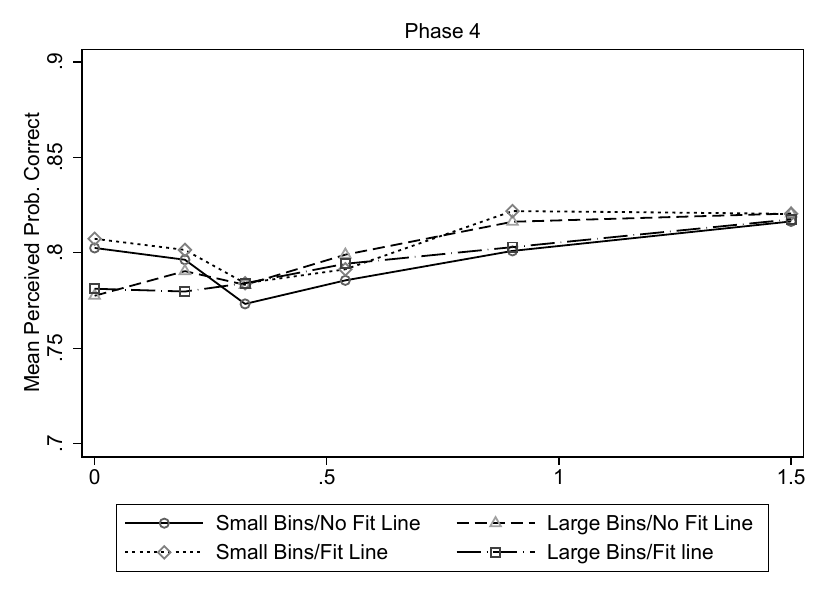}
        \begin{minipage}[t]{\textwidth}%
            \noindent {\scriptsize{Notes: This figure plots the average estimated subjective probabilities of a correct classification among non-expert participants (see Online Appendix \ref{sec:Risk} for details). \textit{Large bins} corresponds to the \citet{Calonicoetal2015} bin width selector that minimizes the integrated mean squared error of the bin-average estimators of the conditional expectation function; \textit{small bins} corresponds to the \citet{Calonicoetal2015} bin width selector that aims to approximate the variability of the underlying data; \textit{quantile spacing} indicates that bins were spaced by quantiles rather than evenly spaced; \textit{fit line} indicates the presence of parametric fit lines; \textit{vertical line} indicates the presence of a vertical line at the policy threshold; \textit{normal scale} corresponds to the default $y$-axis scaling using Stata 14; \textit{large scale} doubles that default range.}}%
        \end{minipage}{\scriptsize\par}
    \end{figure}
\end{landscape}

\begin{figure}[H]
    \caption{\label{fig:RDD-Power-Phase-VI-DGP9}Phase 4: Power Functions for DGP Shown to Experts (DGP 9)}
    \centering
    \includegraphics[width=5.5in]{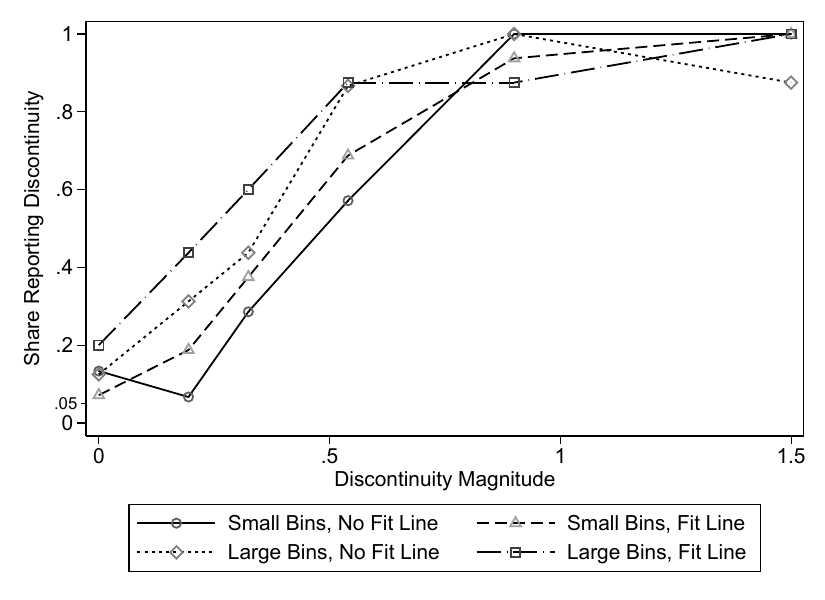}
    \begin{minipage}[t]{0.75\columnwidth}%
        {\scriptsize{}Notes: This figure shows the power functions for DGP 9, the DGP that was used to elicit expert preferences and beliefs across four considered treatments and three discontinuity levels in the second part of the expert study. The corresponding expert preferences and beliefs are shown in Figure \ref{fig:RDD-Experts-Pred-Pref}. \textit{Large bins} corresponds to the \citet{Calonicoetal2015} bin width selector that minimizes the integrated mean squared error of the bin-average estimators of the conditional expectation function; \textit{small bins} corresponds to the \citet{Calonicoetal2015} bin width selector that aims to approximate the variability of the underlying data; \textit{fit line} indicates the presence of parametric fit lines.}%
    \end{minipage}
\end{figure}

\begin{figure}[H]
    \caption{\label{fig:RDD-Experts-VS-CCT-CER}Expert Visual vs CCT Procedure with Inference-Optimal Bandwidth Power Functions}
    \includegraphics[width=3.25in]{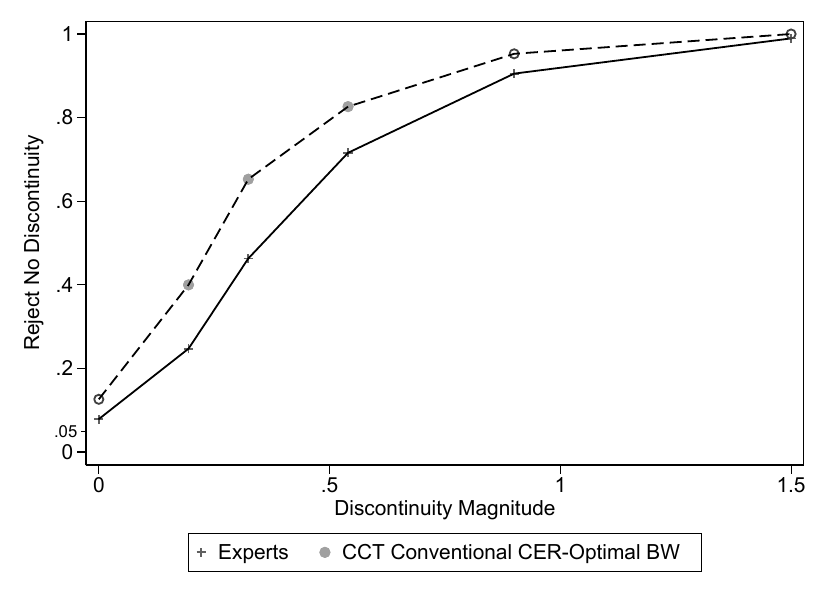}
    \includegraphics[width=3.25in]{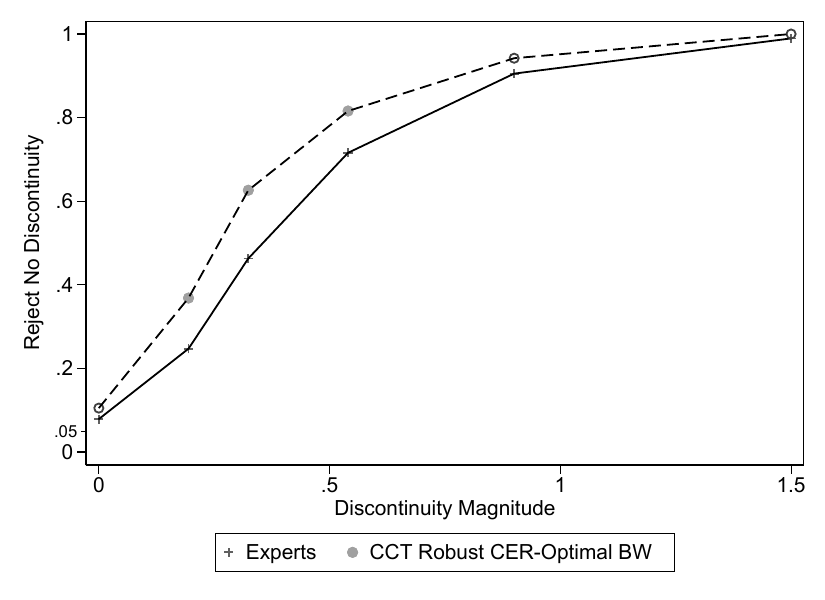}
    \begin{minipage}[t]{\textwidth}%
        \noindent {\scriptsize{}Notes: This figure plots the power functions of experts' visual inference and the CCT procedures with inference-optimal bandwidth (BW) per \citet{Calonicoetal2020optimal}. Markers in the figure are shown as solid, matching the legend, when the econometric inference procedure performs statistically significantly differently at the 5\% level from expert visual inference at the same discontinuity magnitude. Markers in the figure are shown as hollow instead whenever the econometric inference procedure does not perform statistically significantly differently at the 5\% level from expert visual inference at the same discontinuity magnitude.}%
    \end{minipage}{\scriptsize\par}
\end{figure}

\begin{figure}[H]
    \caption{\label{fig:RDD-Power-Experts-vs-Theoretical}Expert Visual vs Econometric Inference}
    \centering
    \includegraphics[width=5.5in]{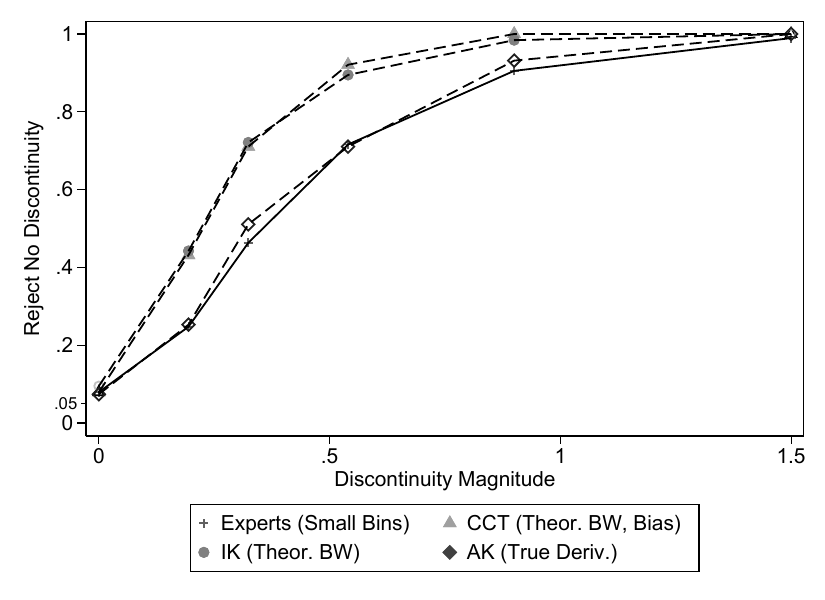}
    \centering{}%
    \begin{minipage}[t]{0.75\columnwidth}%
        {\scriptsize{}Notes: For this figure, we amend the econometric procedures by incorporating knowledge of the DGPs. We use the theoretical MSE-optimal bandwidths for IK and CCT, the theoretical asymptotic bias correction for CCT, and the true second derivative bounds for AK. Markers in the figure are shown as solid, matching the legend, when the econometric inference procedure performs statistically significantly differently at the 5\% level from expert visual inference at the same discontinuity magnitude. Markers in the figure are shown as hollow instead whenever the econometric inference procedure does not perform statistically significantly differently at the 5\% level from expert visual inference at the same discontinuity magnitude.}%
    \end{minipage}
\end{figure}

\newpage
\begin{landscape}
    \begin{figure}[H]
        \caption{\label{fig:RDD-Expert-Estimator-RMSE}\label{fig:RDD-Expert-Estimator-RMSE-Decomp}RMSE of Point Estimates and MSE Decomposition by DGP}
        \centering\includegraphics[width=4.25in]{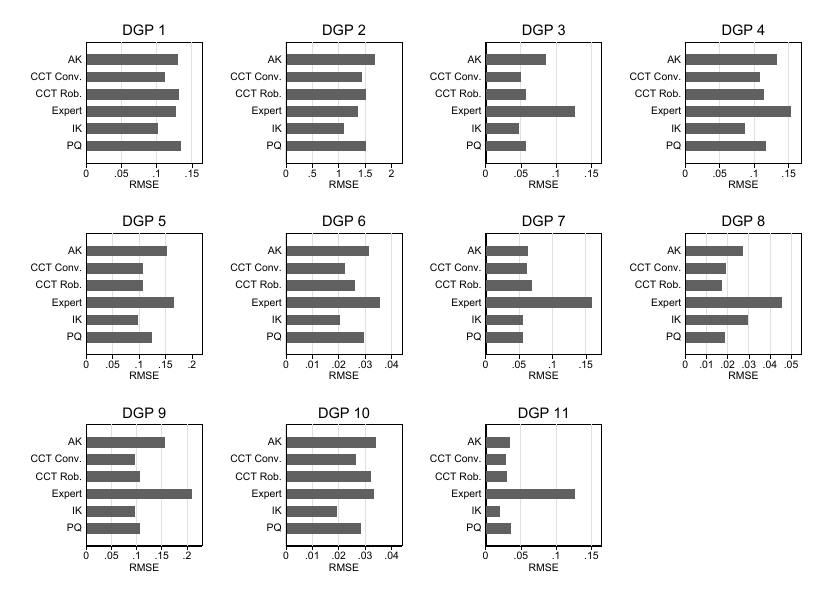}\includegraphics[width=4.25in]{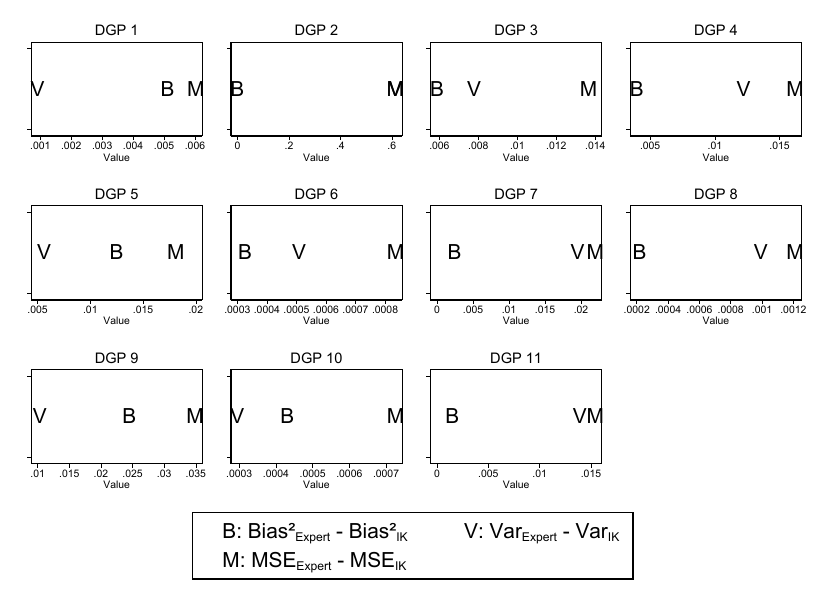}
        \begin{minipage}[t]{1.25\textwidth}%
            \noindent {\scriptsize{}Notes: The left panel compares for each DGP the root mean squared error (RMSE) in estimating the discontinuity magnitude in visual and econometric procedures. The right panel decomposes the difference in MSE between visual estimation and the IK procedure into bias and variance components. AK uses the }\texttt{\scriptsize{}RDHonest}{\scriptsize{} procedure with the rule-of-thumb bound on each DGP's second derivative \citep{ArmstrongandKolesar2017}. CCT Conventional is the conventional (not bias-corrected or robust) point estimates and standard errors with the MSE-optimal bandwidth as implemented in CCT's \texttt{rdrobust} \citep{Calonicoetal2014}. CCT Robust is the default RDD inference procedure from CCT's \texttt{rdrobust}. IK inference is based on a local linear estimator using the IK bandwidth \citep{ImbensKalyanaraman2012}. PQ uses a correctly specified regression model with global piecewise quintics above and below the treatment threshold and assuming homoskedasticity.}%
        \end{minipage}{\scriptsize\par}
    \end{figure}
\end{landscape}

\newpage{}

\begin{figure}[H]
    \caption{\label{fig:Lineup-Protocols-Fudged-Unfudged}Lineup Protocols for DGPs Specified Before and After Adding Noise to the Running Variable}
    \centering
    \includegraphics[width=3.25in]{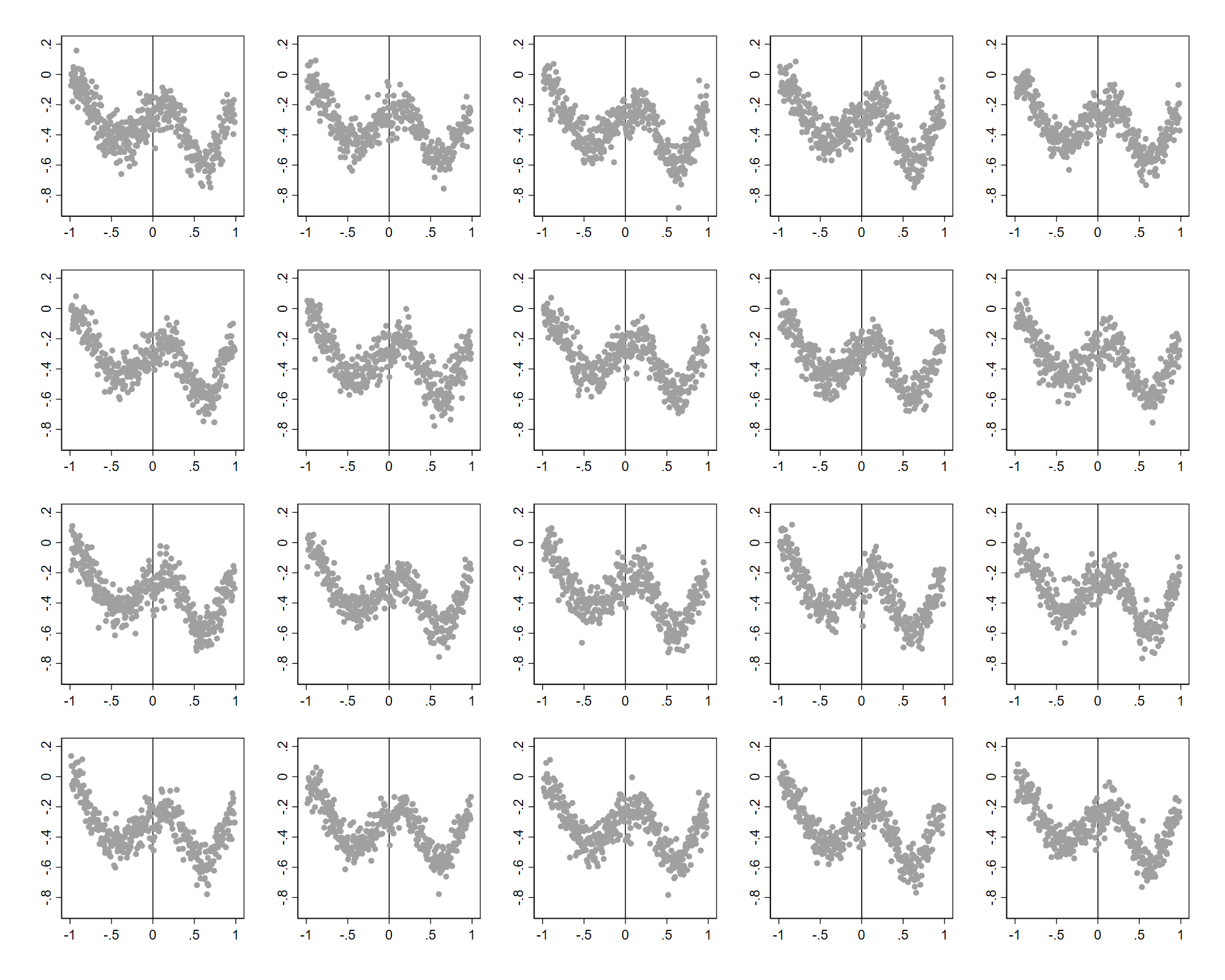}\includegraphics[width=3.25in]{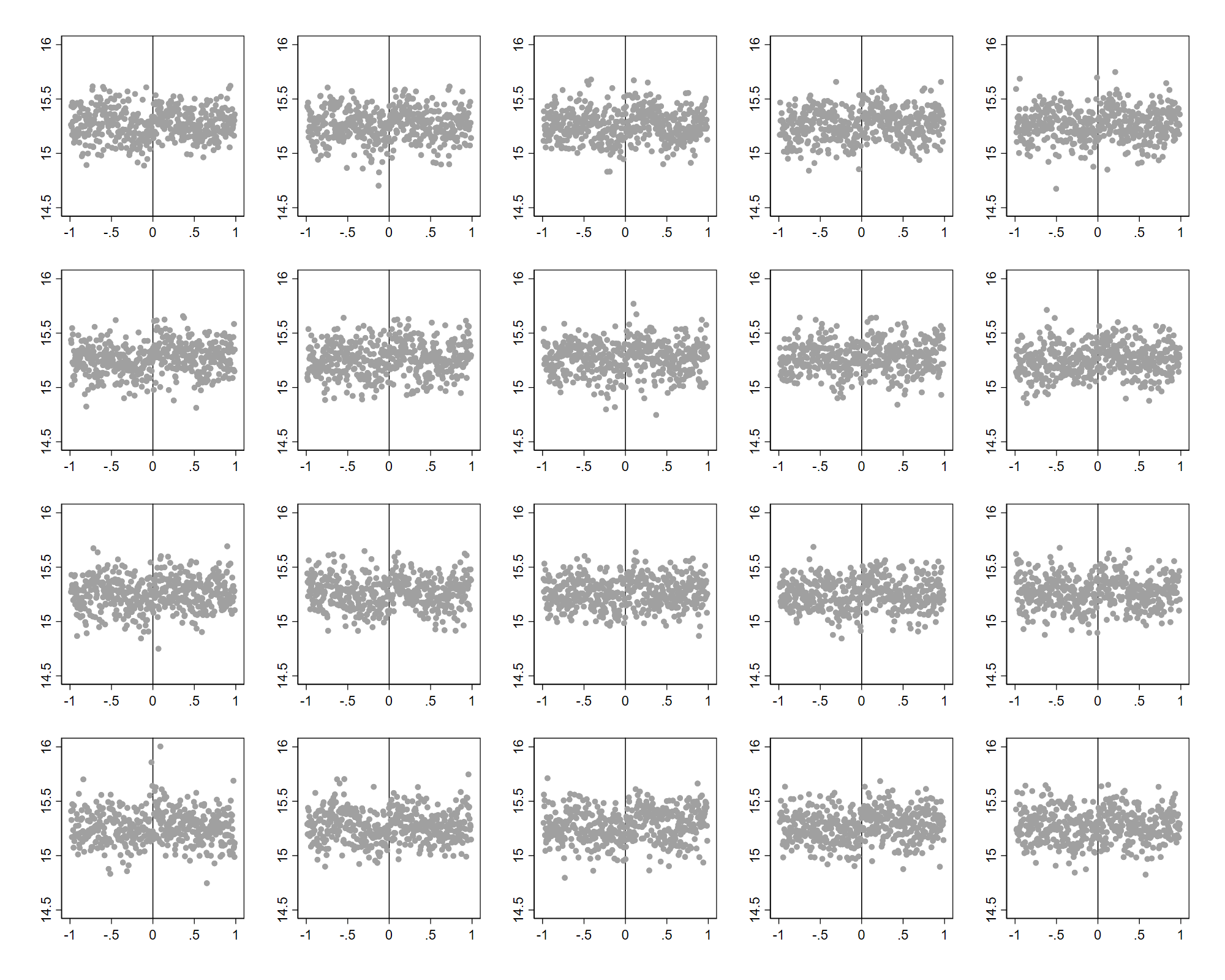}
    \centering{}%
    \begin{minipage}[t]{0.9\columnwidth}%
        {\scriptsize{}Notes: Two of our 11 DGPs feature semi-discrete running variables. We therefore add noise prior to fitting the piecewise quintic CEF to match the continuous running variable condition assumed in \citet{Calonicoetal2015}. To test the null hypothesis that the DGP resulting from fitting the piecewise quintic prior to adding noise is indistinguishable from the DGP where noise is added first, we randomly place one graph from the former distribution among 19 from the latter. The solution for the left lineup is row $5\cdot2-6$ and column $2^{2}-1$ (using conventional matrix index notation). The solution for the right lineup is row $-3\cdot2+10$ and column $9/3-2$.}%
    \end{minipage}
\end{figure}

\begin{figure}[H]
    \caption{\label{fig:CEFs-Fudged-Unfudged}Comparison between CEFs from Original Microdata and Adding Noise to Running Variable}
    \centering
    \includegraphics[width=3.25in]{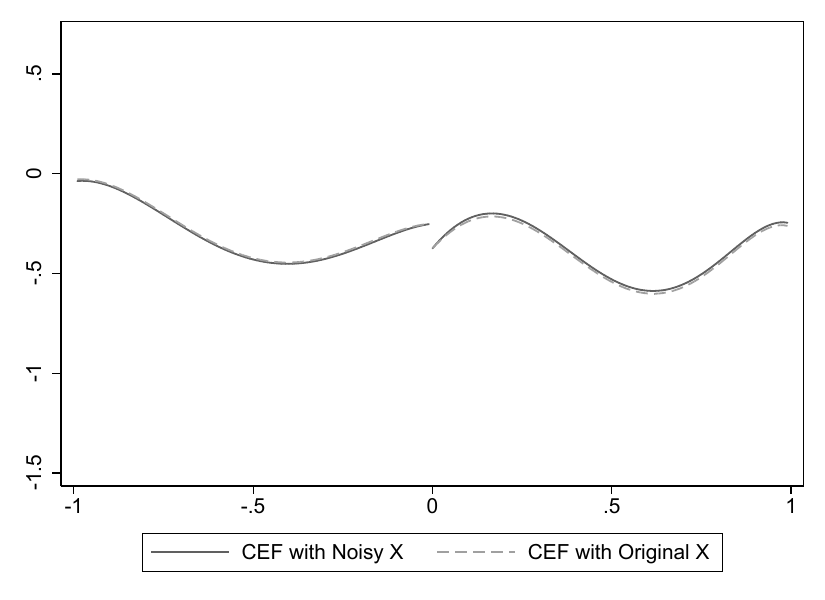}\includegraphics[width=3.25in]{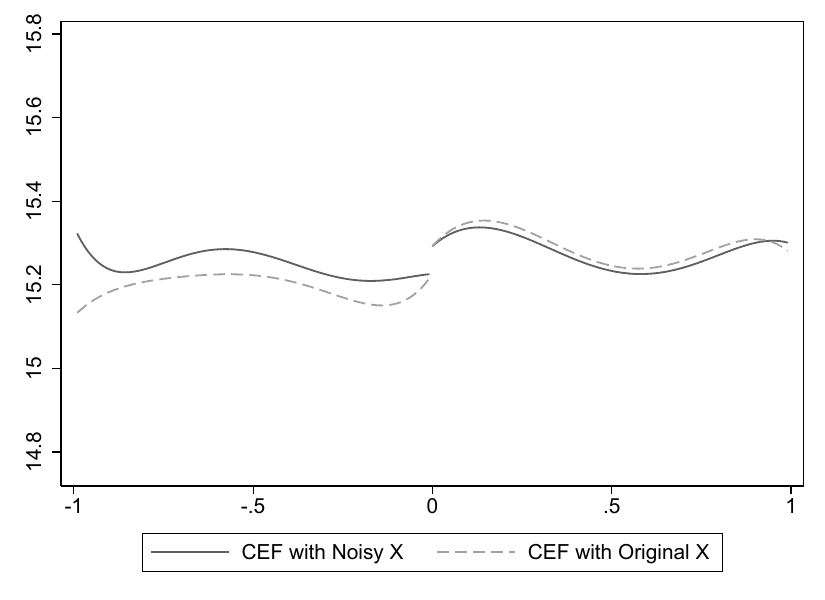}
    \centering{}%
    \begin{minipage}[t]{0.75\columnwidth}%
        {\scriptsize{}Notes: Two of our 11 DGPs feature semi-discrete running variables. We therefore add noise prior to fitting the piecewise quintic CEF to match the continuous running variable condition assumed in \citet{Calonicoetal2015}. The resulting CEFs based on the original data and based on adding noise to the original semi-discrete running variables are plotted.}%
    \end{minipage}
\end{figure}

\begin{figure}[H]
    \caption{\label{fig:RDD-CEF-local}Comparison between Local Linear (Top) and Cubic (Bottom) Conditional Expectation Functions}
    \centering
    \includegraphics[width=5.5in]{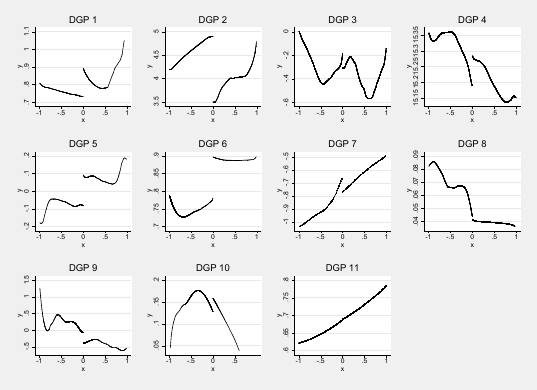}
    \centering
    \includegraphics[width=5.5in]{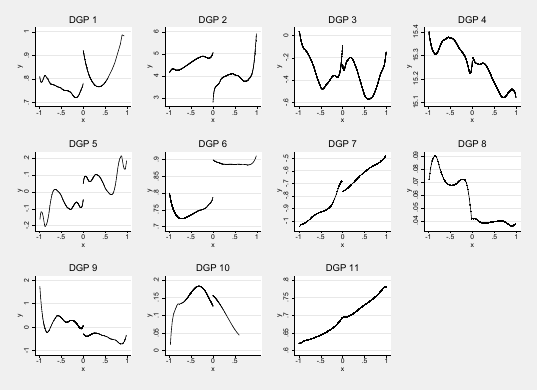}
    \centering{}%
    \begin{minipage}[t]{0.9\columnwidth}%
        {\scriptsize{}Notes: The figure above shows the 11 CEFs constructed using local linear and local cubic specifications. CEFs resulting from the global piecewise quintic specification are shown in Figure \ref{fig:RDD-CEF}.}%
    \end{minipage}
\end{figure}

\begin{landscape}
    \begin{figure}[H]
        \caption{\label{fig:RDD-local-examples}Example Graphs to Compare Different DGP Calibration Choices}
        \centering
        \includegraphics[width=3.25in]{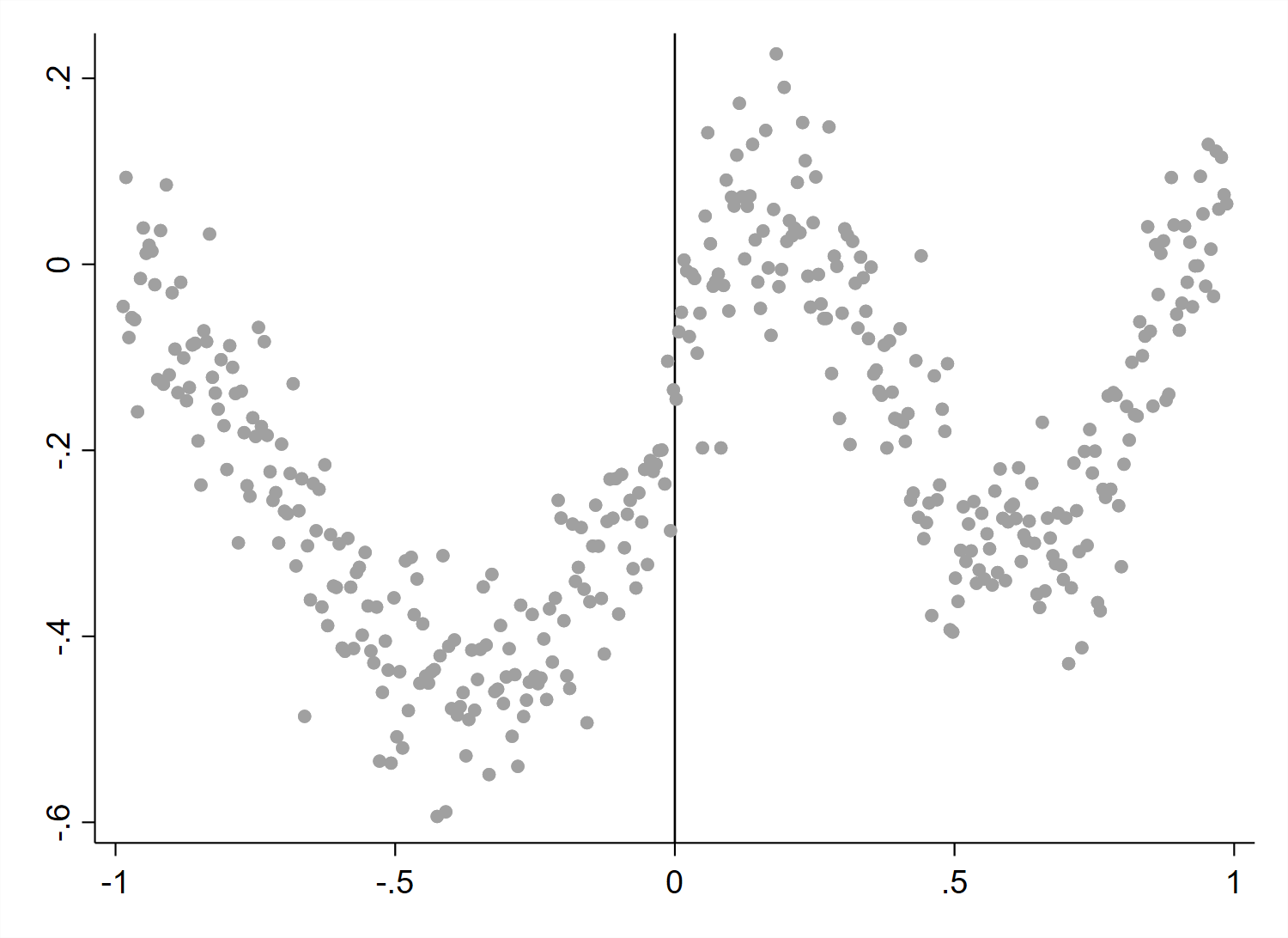}
        \includegraphics[width=3.25in]{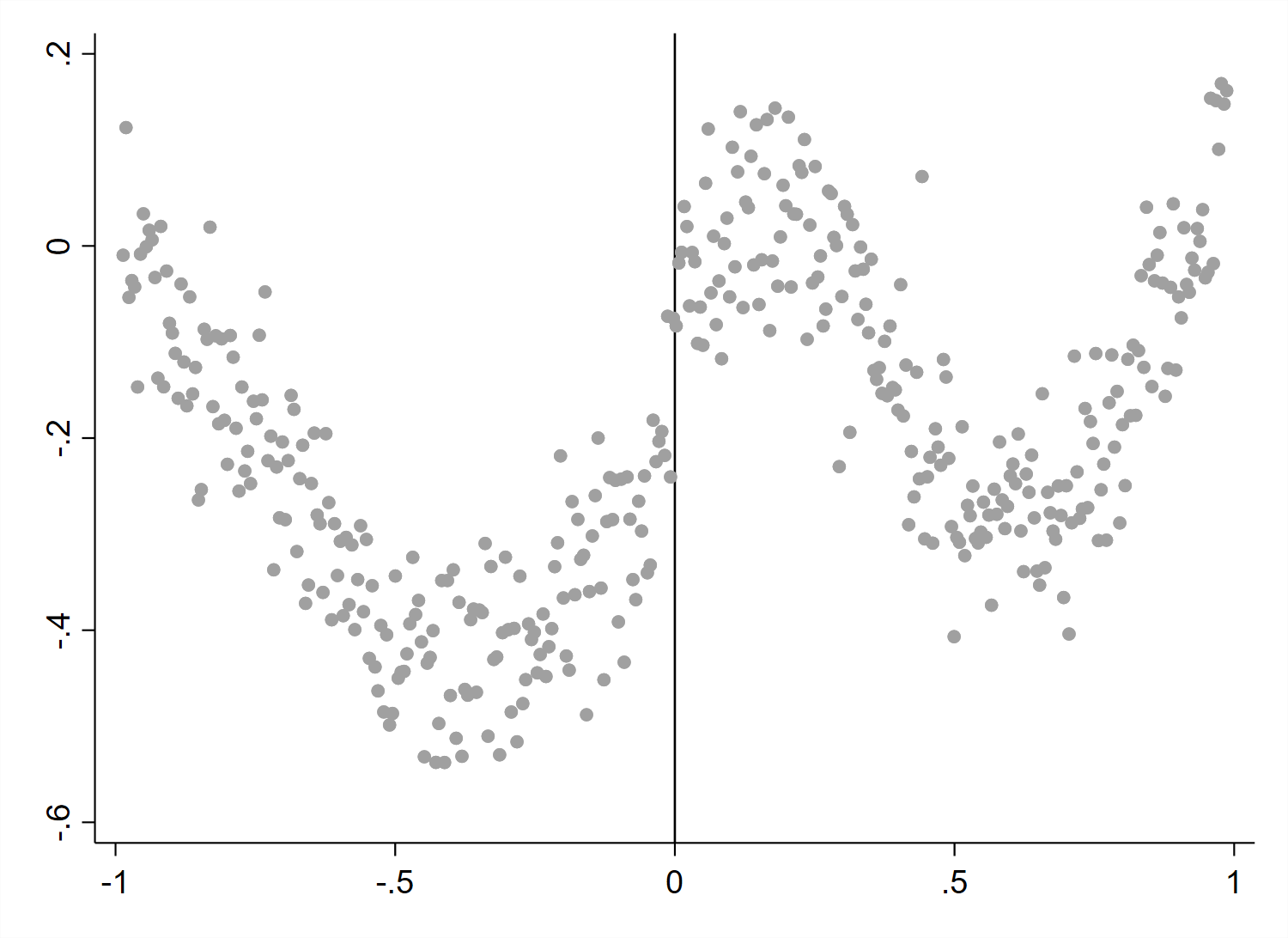}
        \centering
        \includegraphics[width=3.25in]{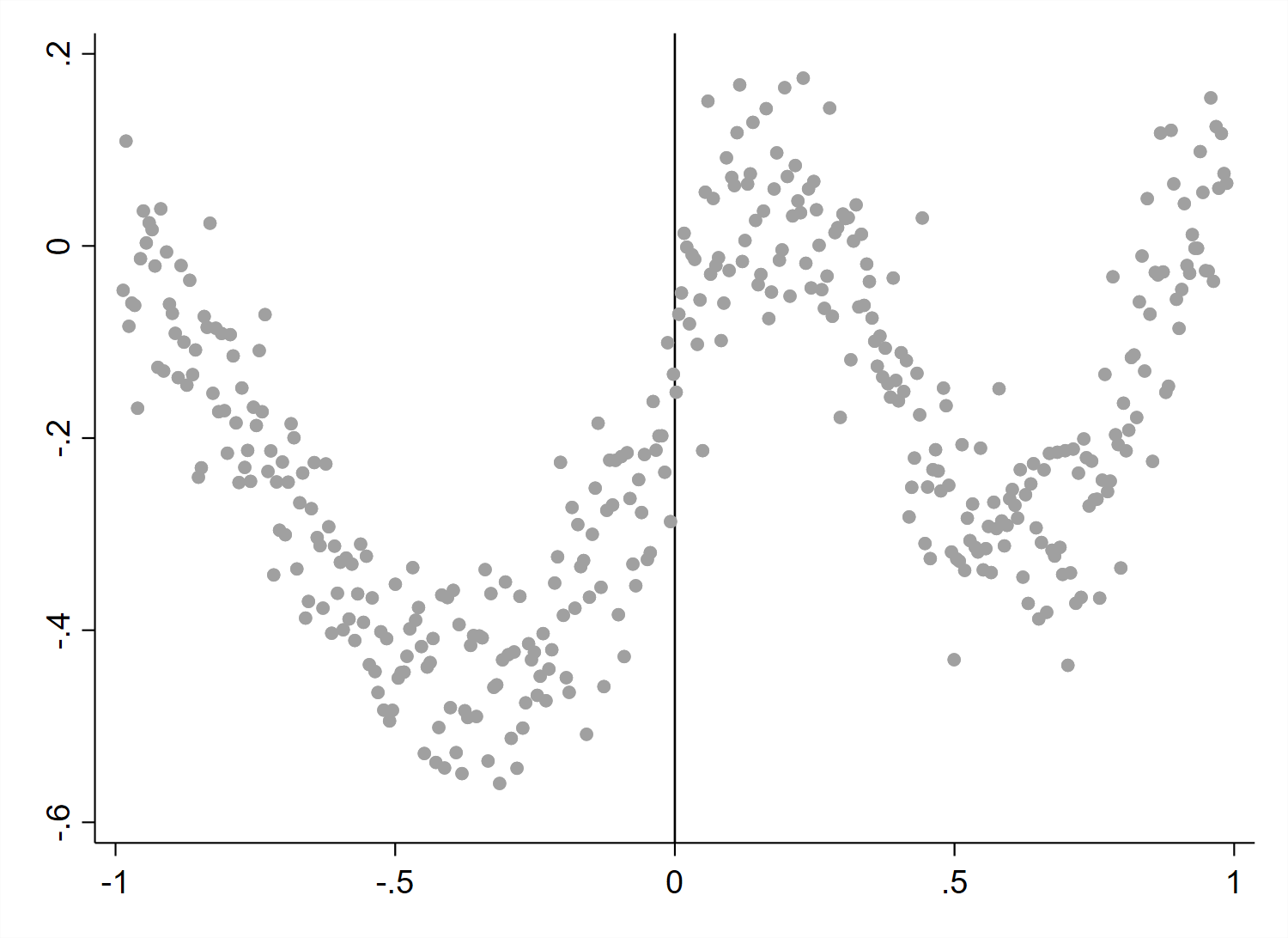}
        \includegraphics[width=3.25in]{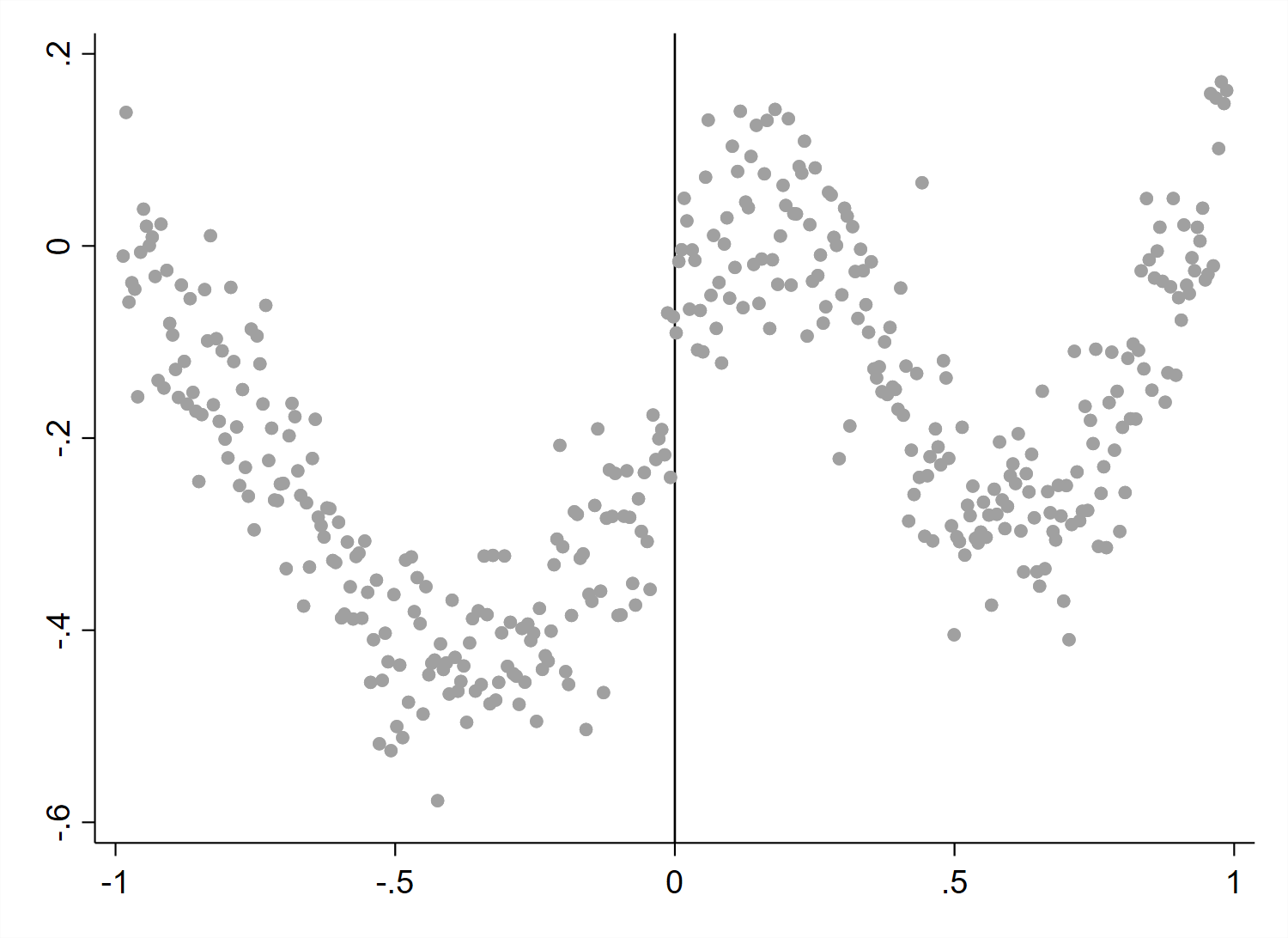}
        \begin{minipage}[t]{0.75\columnwidth}%
            \scriptsize{Notes: Plotted are experimental graphs from the supplemental exercise described in Appendix \ref{sec:dgp-robustness}. Graphs in the left column have CEFs fitted via piecewise global quintic regressions, while graphs in the right column have CEFs fitted via local linear regressions. Graphs in the top row have homoskedastic noise terms, while graphs in the bottom row allow for the variance of the noise term to vary with the running variable.}
        \end{minipage}
    \end{figure}
\end{landscape}

\begin{figure}[H]
    \caption{\label{fig:RDD-Power-RDD5}Supplemental Phase 5: Power Functions}
    \centering\includegraphics[width=5.5in]{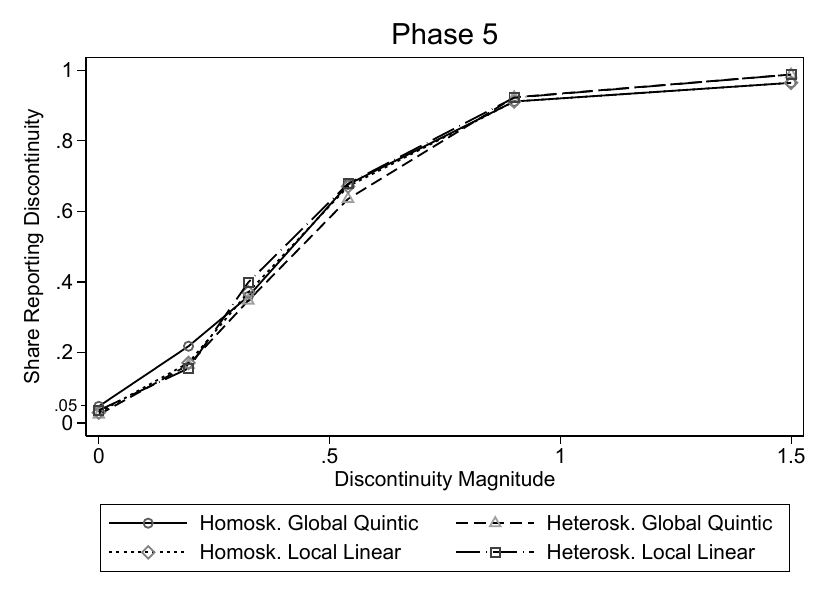}
    \begin{minipage}[t]{0.75\columnwidth}%
        {\scriptsize{}Notes: Plotted are power functions from the supplemental phase 5 to test the sensitivity of visual inference to alternative DGP specifications (while holding fixed the graphical method). We test all four combinations of global quintic/local linear CEFs and homoskedastic/heteroskedastic error terms. In all treatment arms, we use graphs with evenly spaced small bins with a vertical line at $x=0$, no fit lines, and Stata 14’s default spacing. The power functions are defined in Section \ref{sec:conceptual-framework}. The discontinuity magnitude on the $x$-axis is specified as a multiple of the error standard deviation. The $y$-axis represents the share of respondents classifying a graph as having a discontinuity at the policy threshold.}%
    \end{minipage}
\end{figure}

\begin{figure}[H]
    \caption{Video Tutorial Attention Check
    \label{fig:attention-check-bird}}
    \begin{minipage}[t][1\totalheight][c]{0.45\textwidth}%
        \includegraphics[scale=0.2]{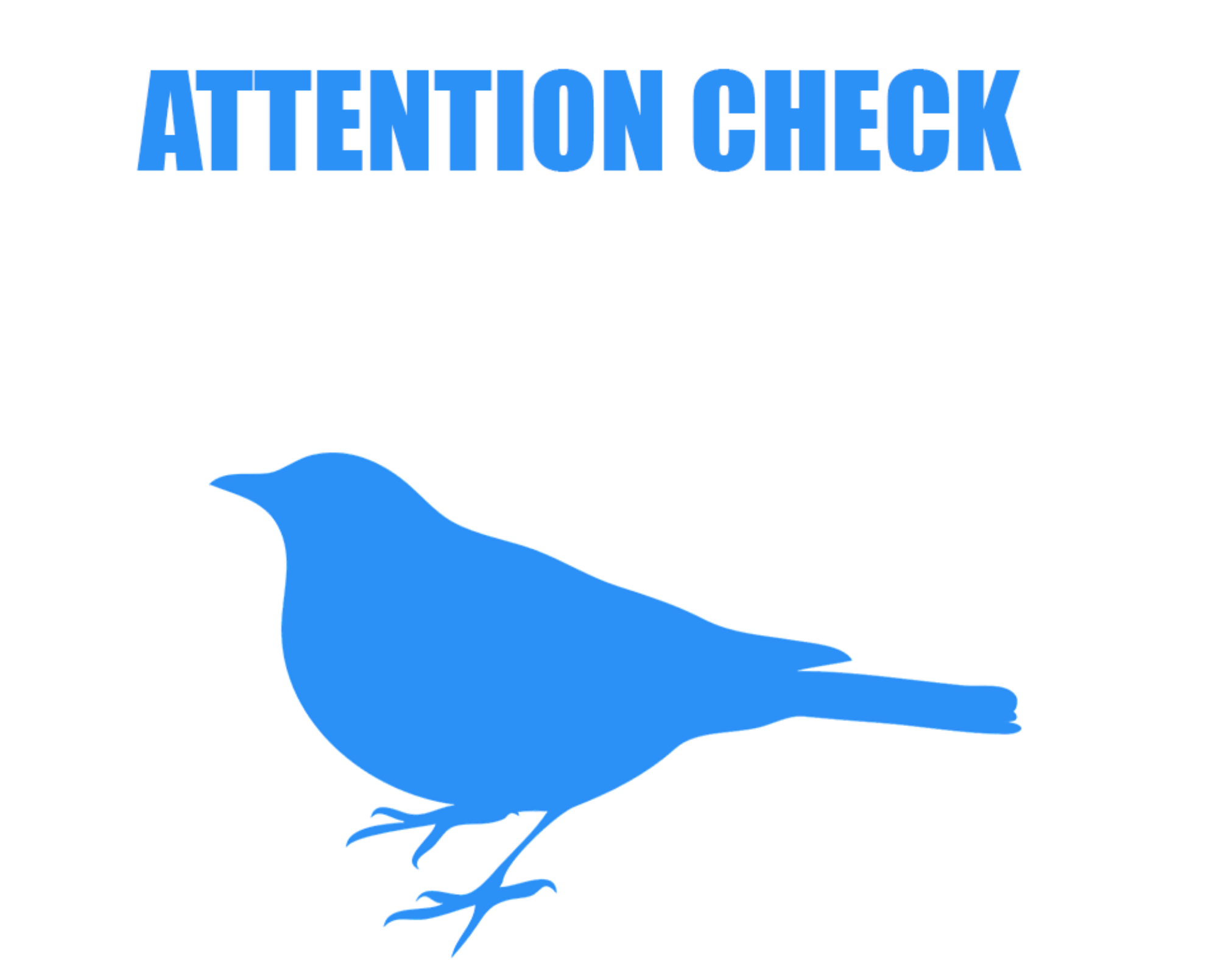}%
    \end{minipage}\hfill{}%
    \begin{minipage}[t][1\totalheight][c]{0.45\textwidth}%
        \includegraphics{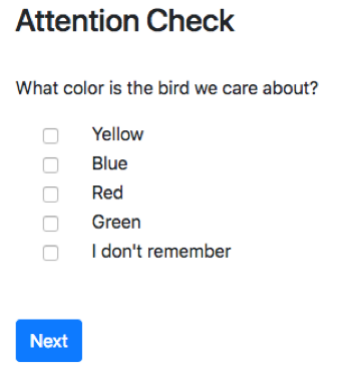}%
    \end{minipage}
\end{figure}

\begin{figure}[H]
    \caption{Example Tasks\label{fig:example-tasks}}
    \begin{minipage}[t]{0.49\textwidth}%
        \subfloat[Example 1]{
        \includegraphics[scale=0.95]{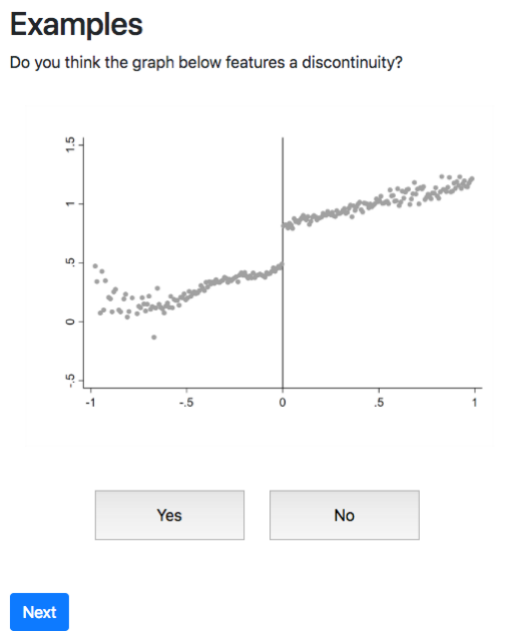}}%
    \end{minipage}\hfill{}%
    \begin{minipage}[t]{0.49\textwidth}%
        \subfloat[Example 1 - Feedback]{
        \includegraphics[scale=0.9]{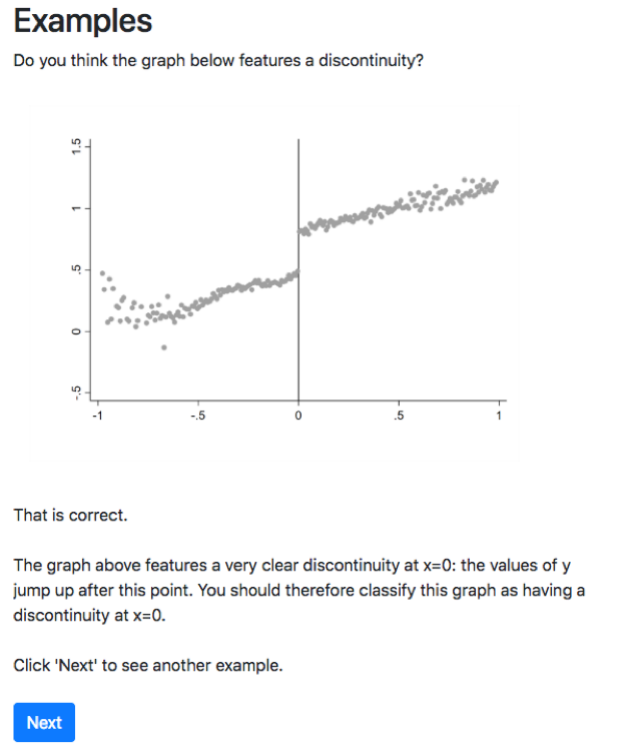}}%
    \end{minipage}
\end{figure}

\begin{figure}[H]
    \caption{Example 4 - Feedback and Navigation Buttons\label{fig:example-tasks-ctd}}
    \includegraphics{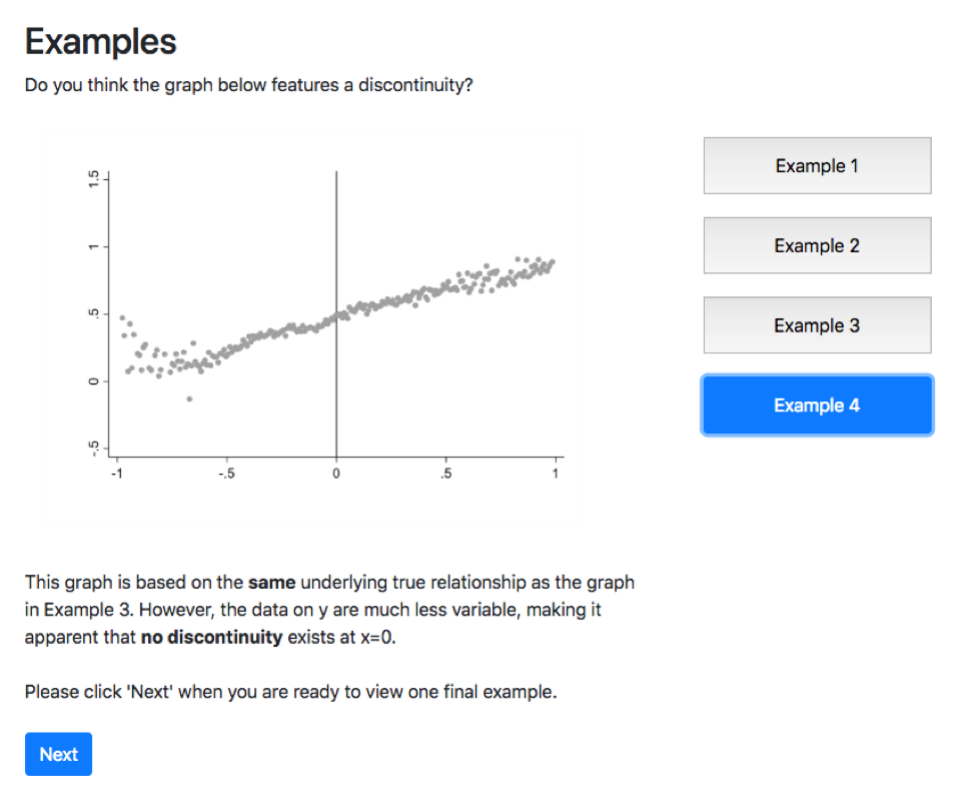}
\end{figure}

\begin{figure}[H]
    \caption{Classification Screen\label{fig:example-classification-screen}}
    \noindent\begin{minipage}[t]{1\columnwidth}%
        \subfloat[Incentives]{
        \includegraphics[scale=0.5]{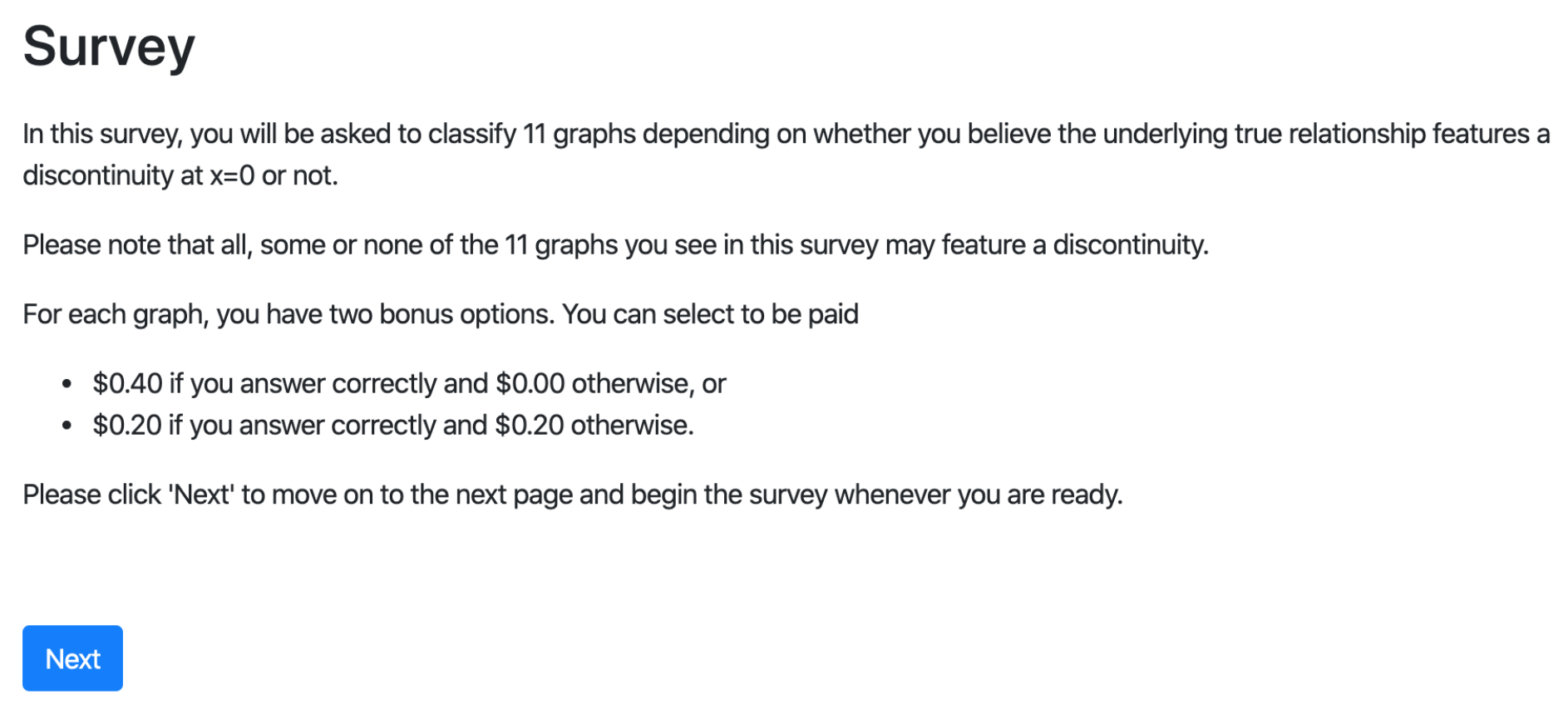}}%
    \end{minipage}
    \noindent\begin{minipage}[t]{1\columnwidth}%
        \subfloat[Classification Task]{
        \includegraphics[scale=0.45]{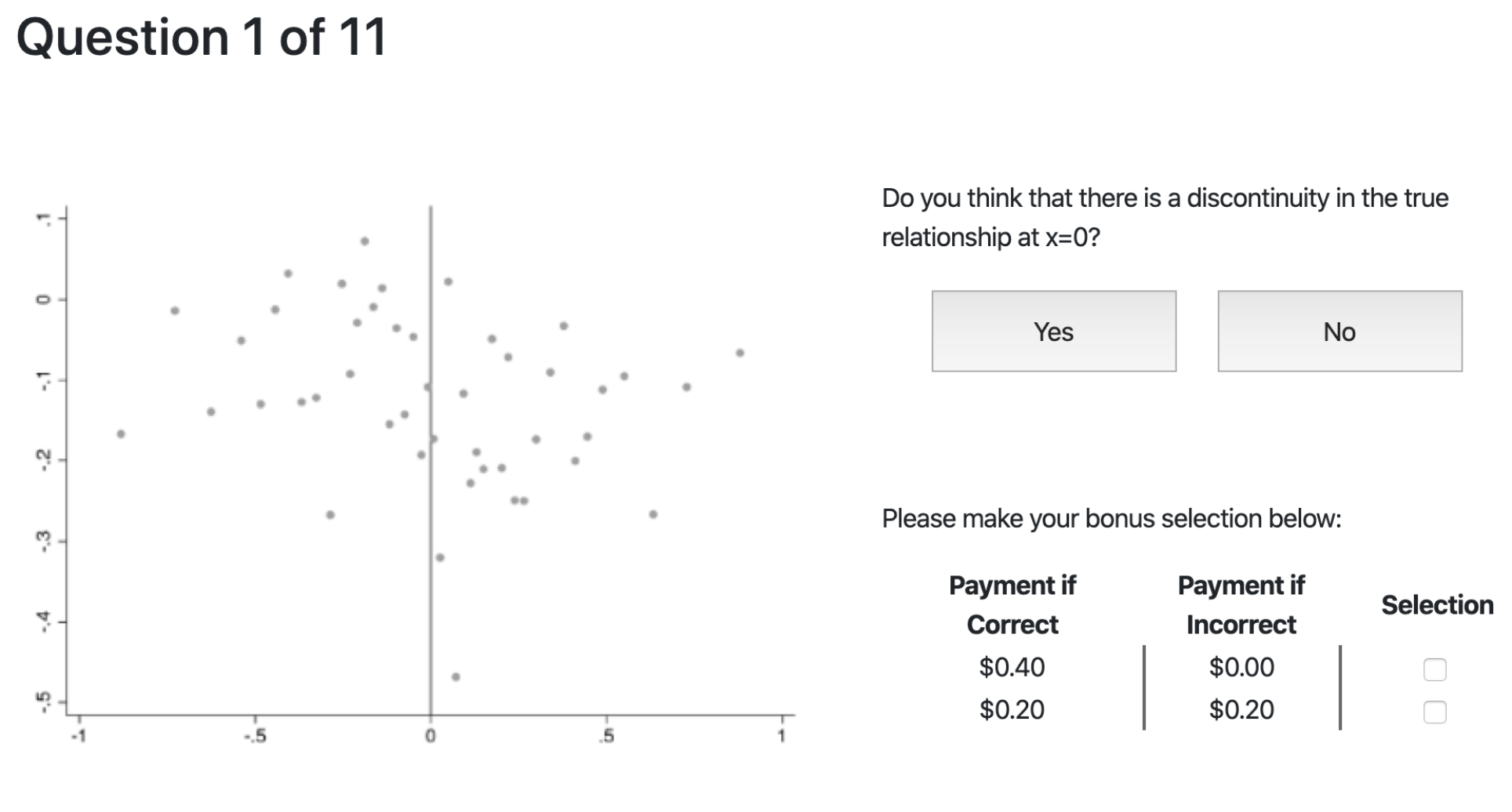}}%
    \end{minipage}
\end{figure}

\begin{figure}[H]
    \caption{Results Screen Example\label{fig:Result-Screen-Example}}
    \includegraphics[scale=0.45]{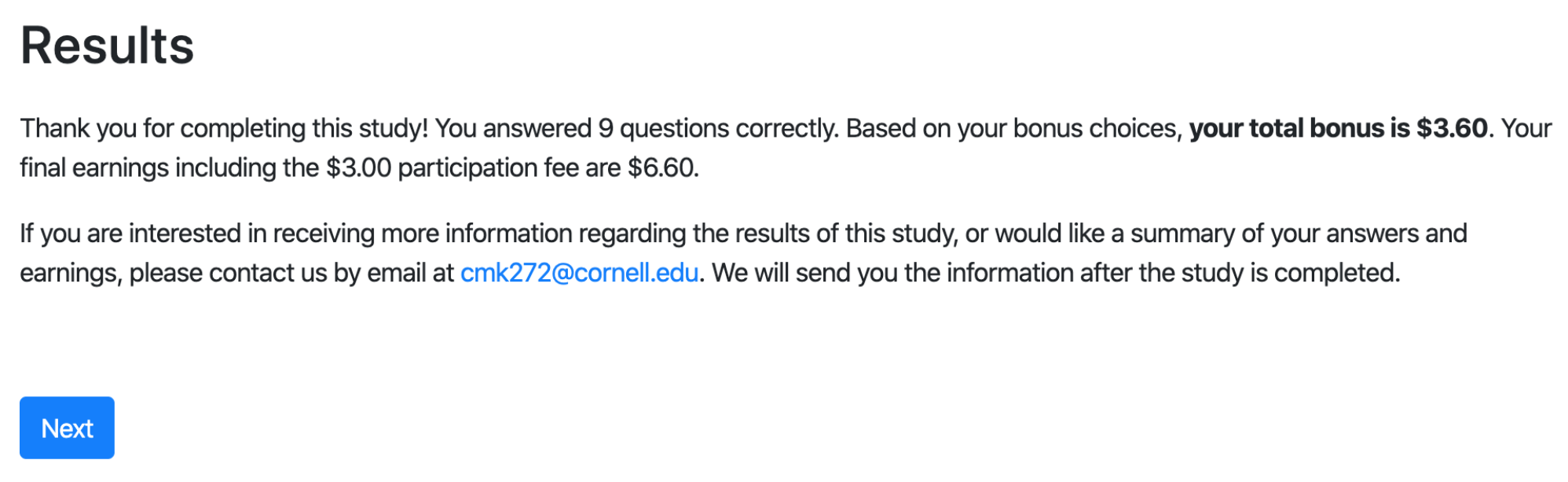}
\end{figure}

\begin{figure}[H]
    \caption{Sequence of Events Experiments\label{fig:Sequence-of-Events-RDD}}
    \centering
    \includegraphics[scale=0.9]{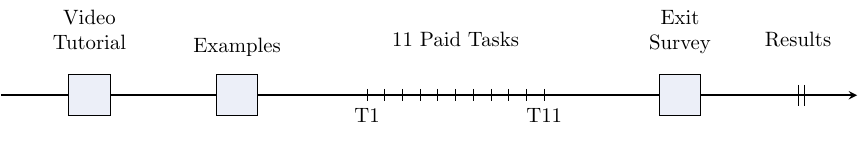}
\end{figure}

\begin{figure}[H]
    \caption{Sequence of Events - Expert Study\label{fig:Sequence-of-Events-Experts}}
    \includegraphics{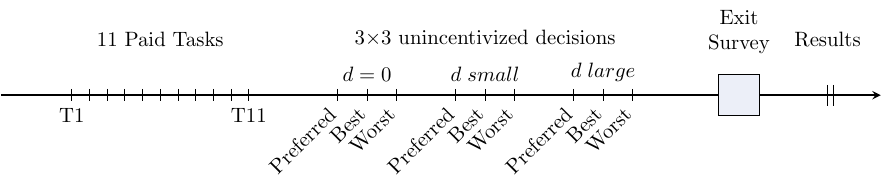}
\end{figure}

\begin{figure}[H]
    \caption{Expert Study - Part 1 Introduction\label{fig:Expert-Survey-Part1-Intro}}
    \includegraphics[scale=0.48]{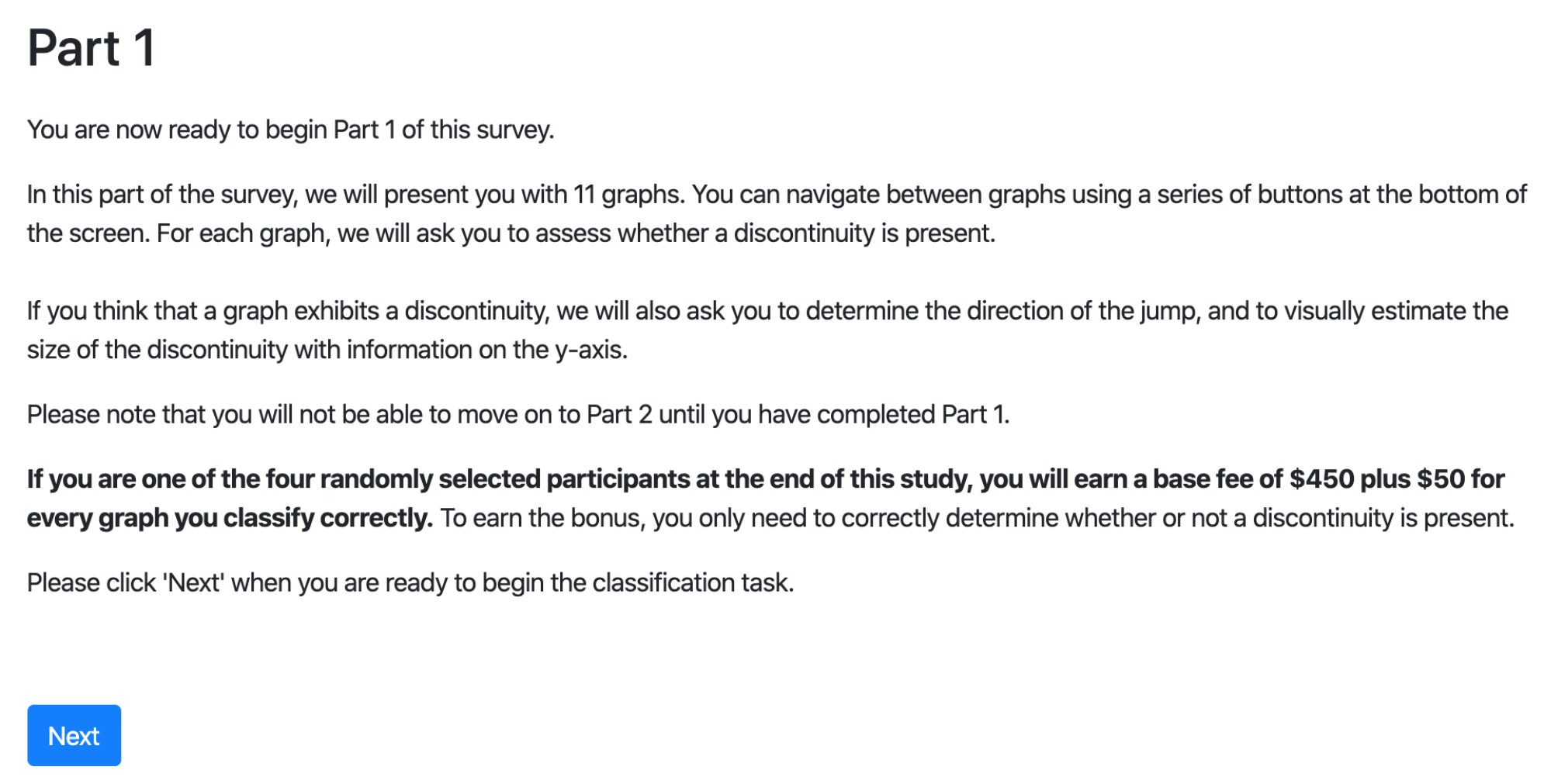}
\end{figure}

\begin{figure}[H]
    \caption{Expert Study - Part 1 Classification Task\label{fig:Expert-Survey-Part1}}
    \centering
    \noindent\begin{minipage}[t]{1\columnwidth}%
        {\includegraphics[scale=0.85]{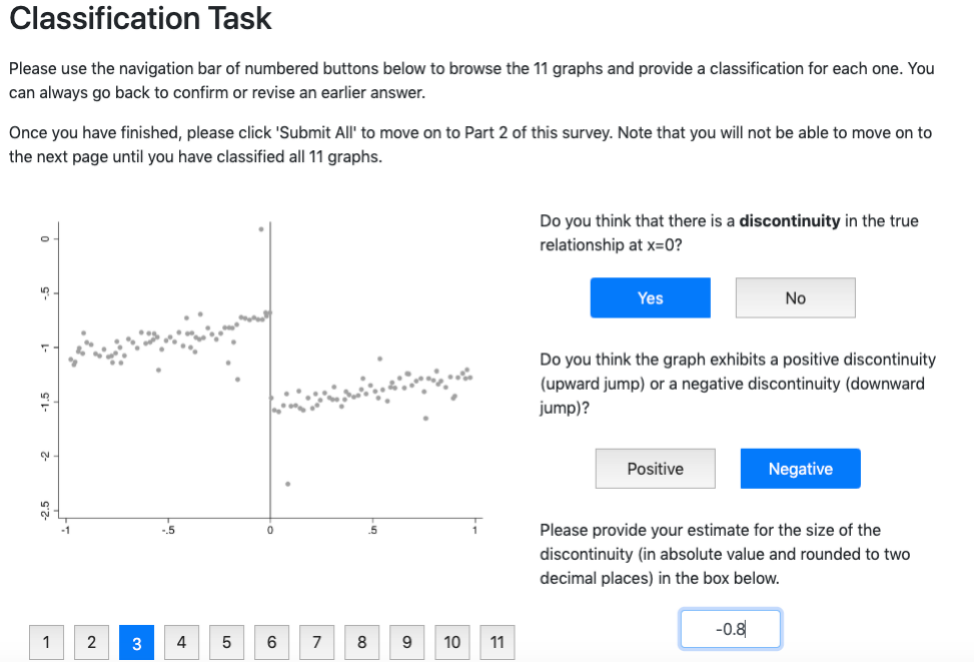}}%
    \end{minipage}
\end{figure}

\begin{figure}[H]
    \caption{Expert Study - Part 2 Instructions\label{fig:Expert-Survey-Part2-Intro}}
    \noindent\begin{minipage}[t]{1\columnwidth}%
        {\includegraphics[scale=0.95]{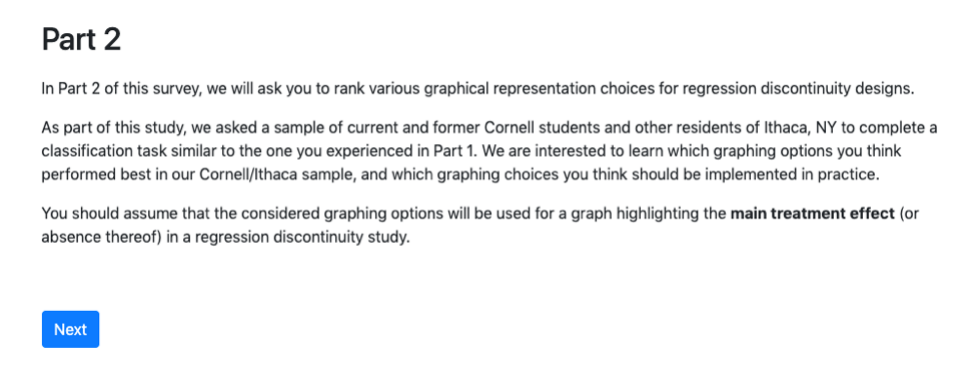}}%
    \end{minipage}
\end{figure}

\begin{figure}[H]
    \caption{Expert Study - Part 2 Decision Screen\label{fig:Expert-Survey-Part2-Decision-Screen}}
    \includegraphics[scale=0.95]{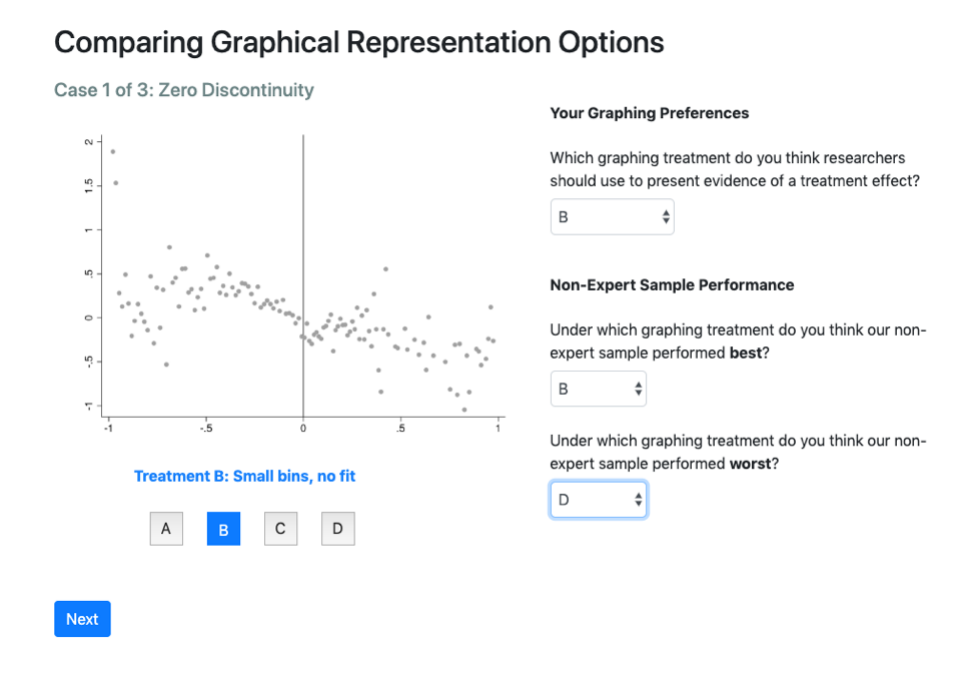}
\end{figure}

\begin{figure}[H]
    \caption{\label{fig:RDD-DGP-Histograms}DGP Histograms}
    \centering
    \includegraphics[width=5.5in]{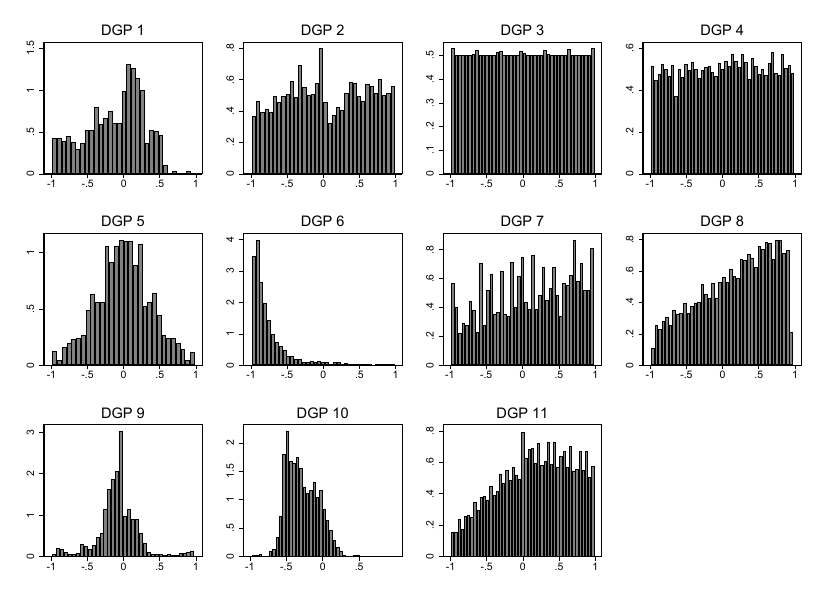}
    \noindent\begin{minipage}[t]{0.8\columnwidth}%
        \noindent {\scriptsize{}Notes: This figure plots the density of each DGP along its running variable. }%
    \end{minipage}{\scriptsize\par}
\end{figure}

\begin{figure}[H]
    \caption{\label{fig:ROC-Phase-6-Symmetric-Asymmetric}Comparison between Power Functions for Phase 4 by Symmetry of DGP Supports}
    \centering
    \includegraphics[width=3.25in]{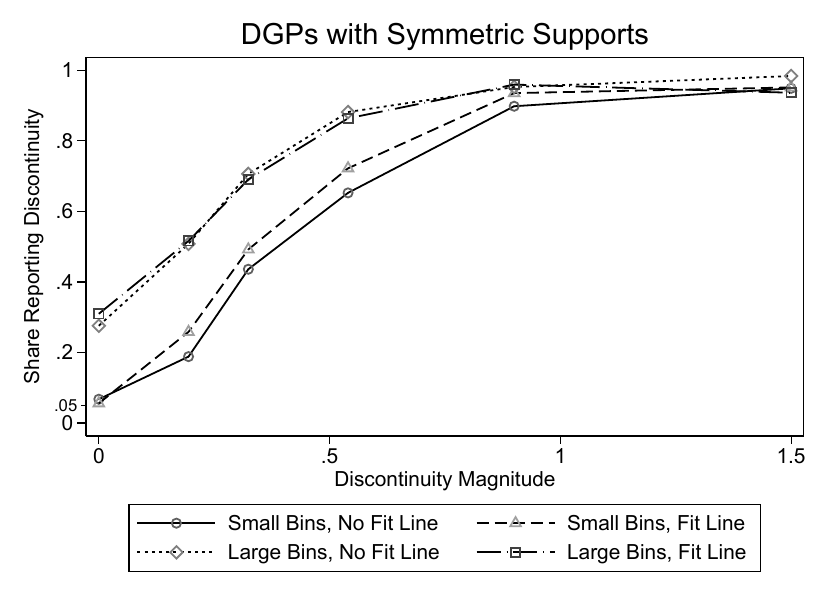}\includegraphics[width=3.25in]{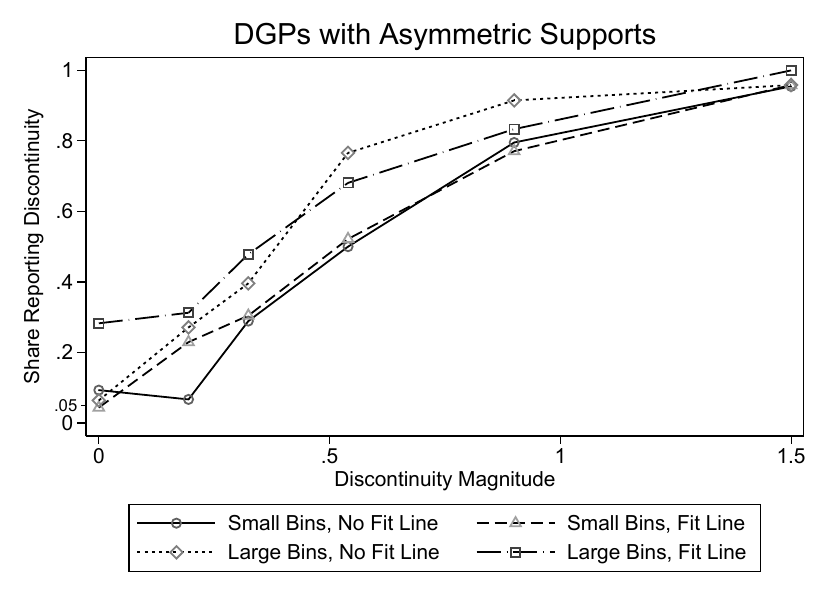}
    \begin{minipage}[t]{\textwidth}%
        \noindent {\scriptsize{}Notes: This figure plots the power functions from phase 4 broken down into DGPs with symmetric supports prior to normalization and those with asymmetric supports prior to normalization. \textit{Large bin} corresponds to the \citet{Calonicoetal2015} bin width selector that minimizes the integrated mean squared error of the bin-average estimators of the conditional expectation function; \textit{small bin} corresponds to the \citet{Calonicoetal2015} bin width selector that aims to approximate the variability of the underlying data; \textit{fit line} indicates the presence of parametric fit lines.}%
    \end{minipage}{\scriptsize\par}
\end{figure}

\begin{figure}[H]
    \caption{\label{fig:Gini-Spacing-Performance}DGP Gini Coefficients and Non-Expert Performance}
    \centering\includegraphics[width=3.25in]{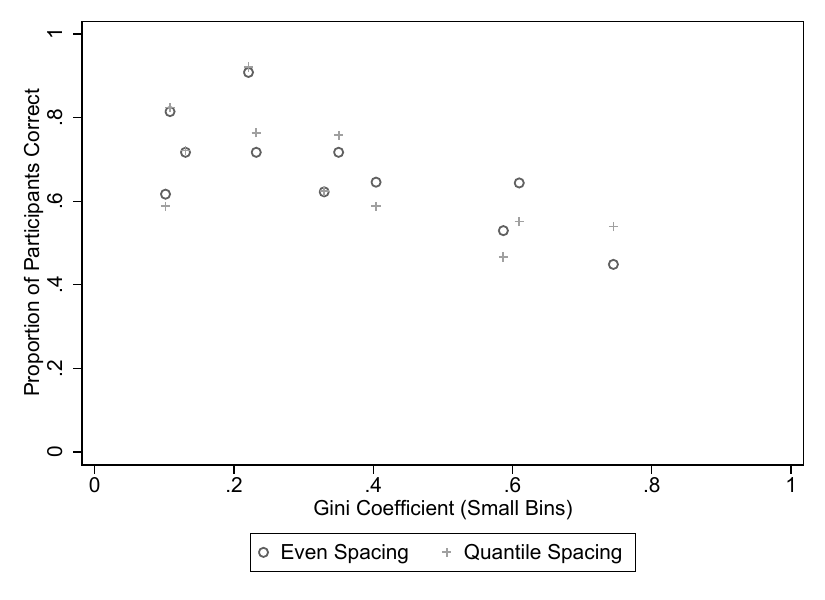}\includegraphics[width=3.25in]{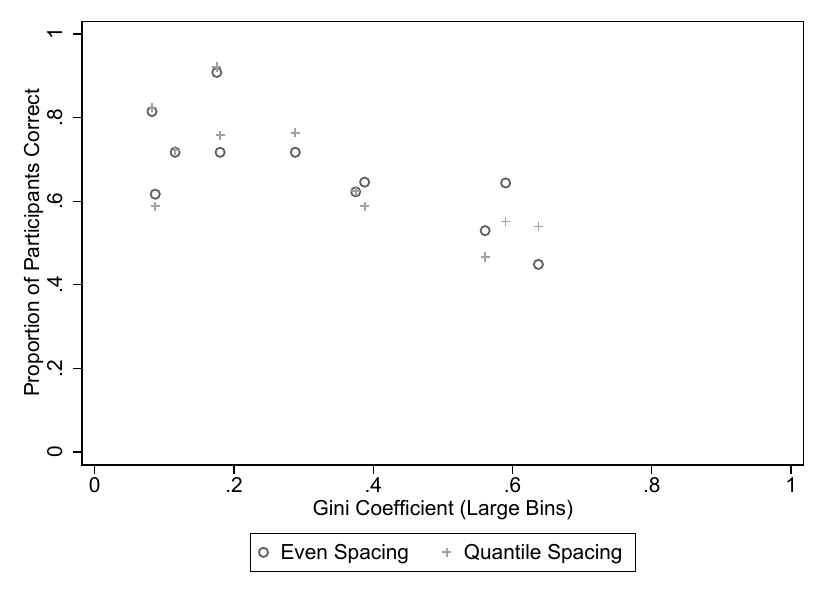}
    \begin{minipage}[t]{\textwidth}%
        \noindent {\scriptsize{}Notes: This figure compares non-expert accuracy in classifying RDD graphs by DGP based on the uniformity of the distribution of the running variable as measured by the Gini coefficient. The left panel contains results for graphs with small bins, and the right panel contains results for graphs with large bins.}%
    \end{minipage}{\scriptsize\par}
\end{figure}

\begin{figure}[H]
    \caption{\label{fig:Slope-Discontinuity-Direction} Discontinuity Easier to Detect When Both Slopes at Cutoff are in the Same Direction as the Discontinuity}
    \begin{minipage}[t]{0.49\textwidth}%
        \subfloat[Positive Slopes and Positive Discontinuity]{
        \includegraphics[width=3.25in]{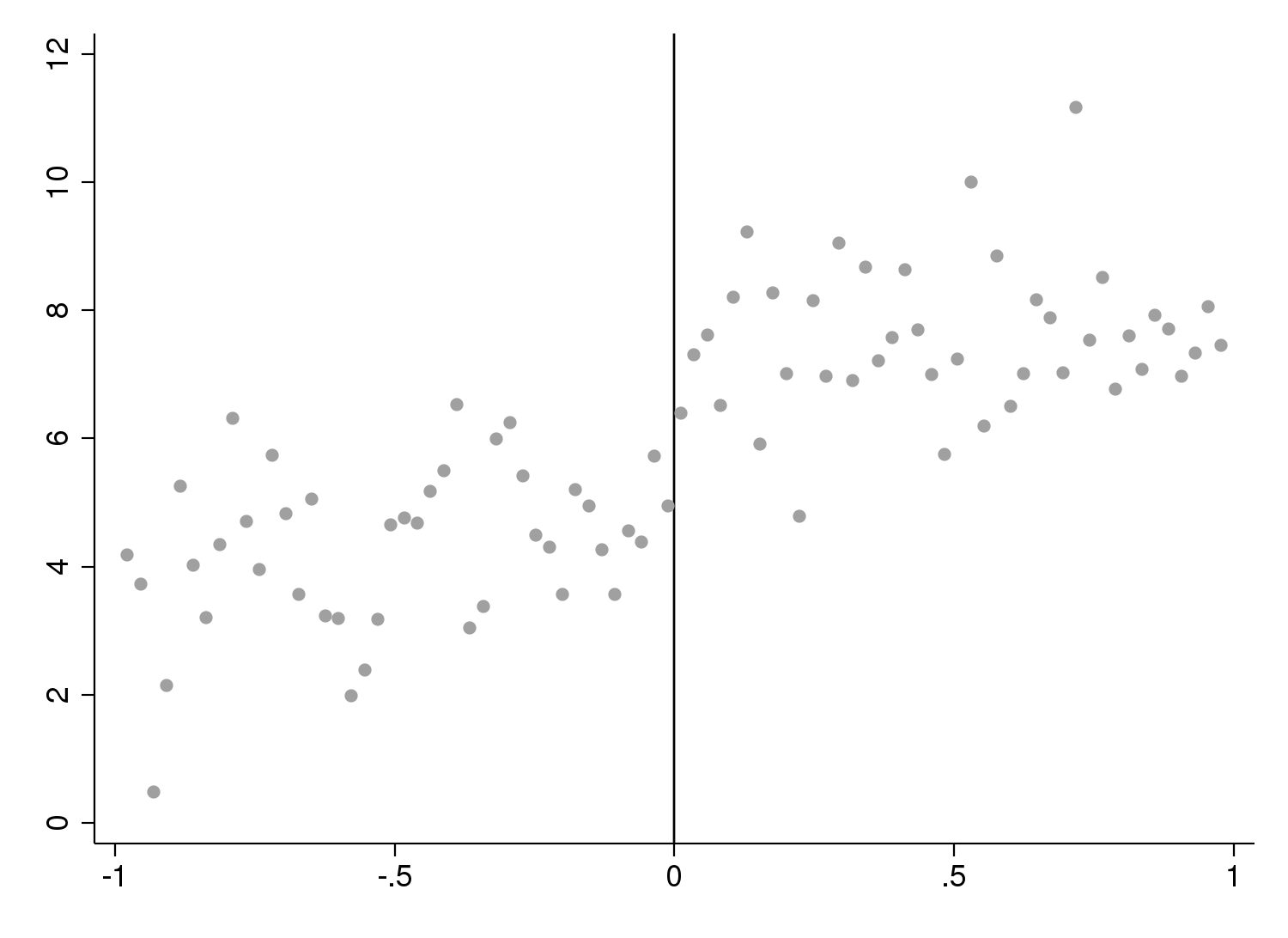}}%
    \end{minipage}\hfill{}%
    \begin{minipage}[t]{0.49\textwidth}%
        \subfloat[Positive Slopes and Negative Discontinuity]{
        \includegraphics[width=3.25in]{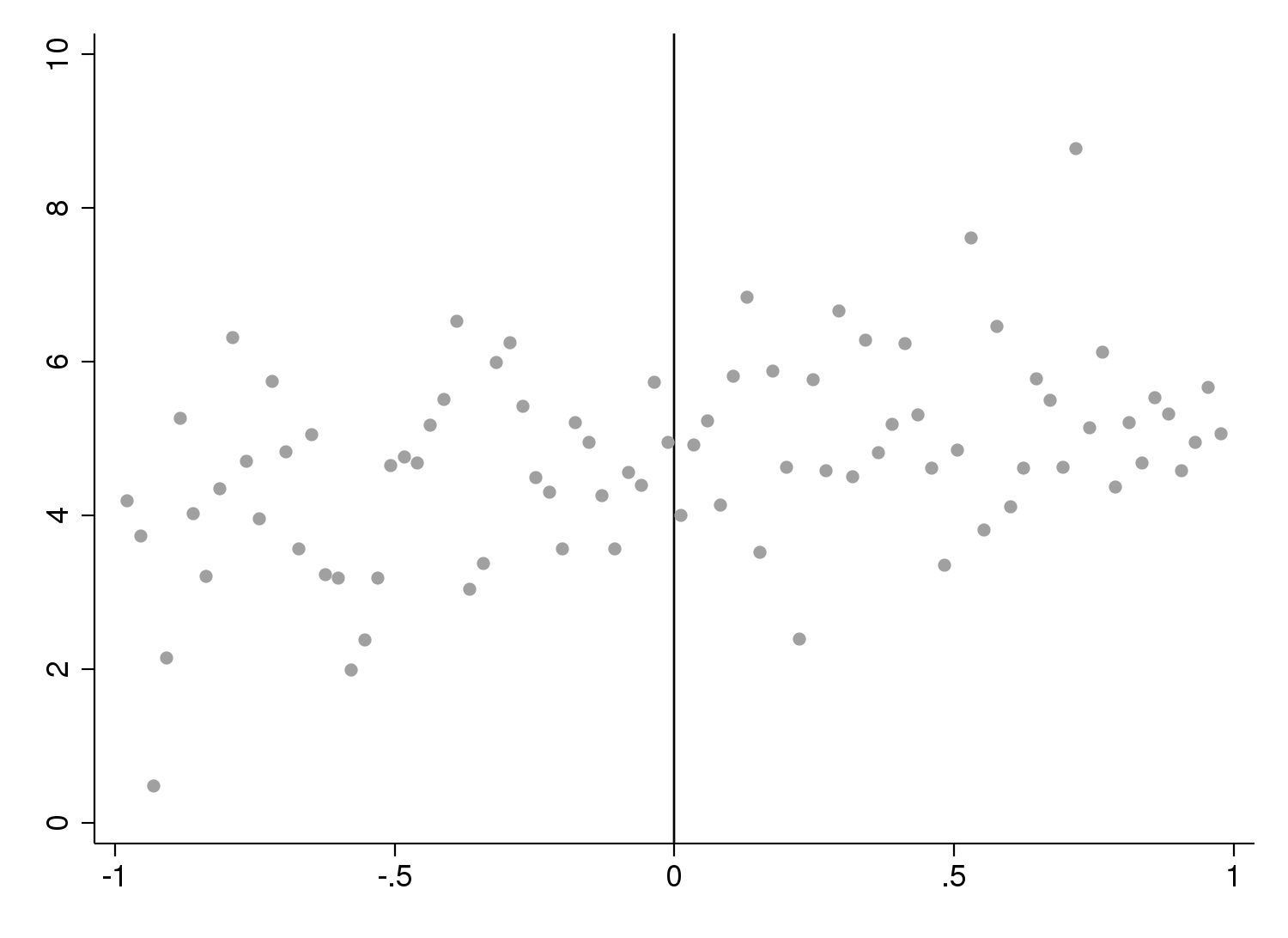}}%
    \end{minipage}
    \begin{minipage}[t]{\textwidth}%
        \noindent {\scriptsize{}Notes: This figure shows experimental graphs for a DGP for which the left and right first derivatives at the treatment threshold are positive when the discontinuity is also positive (left) and when the discontinuity is negative (right). Our regression results in Table \ref{tab:DGP-Direction-Prediction-Power} show that visual inference performs better when the discontinuity direction matches the signs of both the left and right derivatives.}%
    \end{minipage}{\scriptsize\par}
\end{figure}

\begin{landscape}
    \begin{figure}[H]
        \caption{Discontinuity Classifications over 11 Graphs}
        \label{fig:RDD-classifications-over-time}
        \includegraphics[width=3.15in]{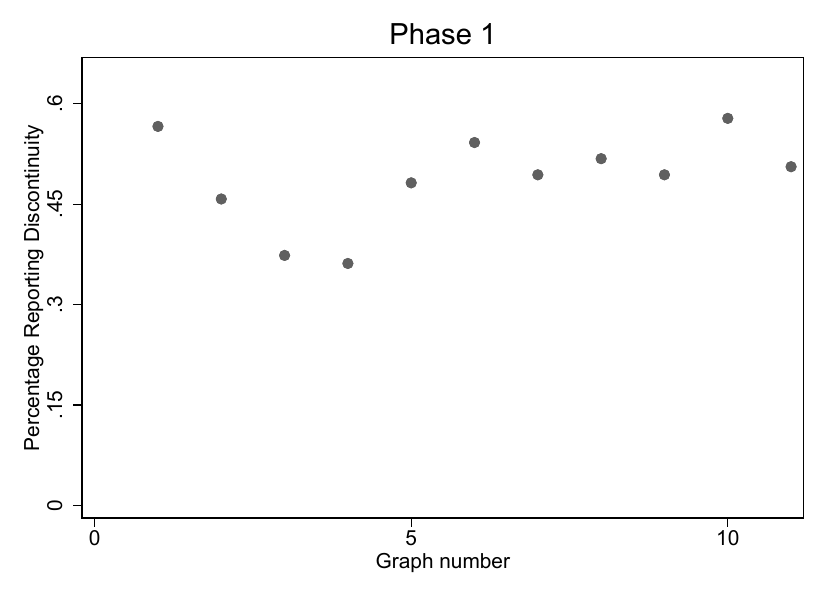}
        \includegraphics[width=3.15in]{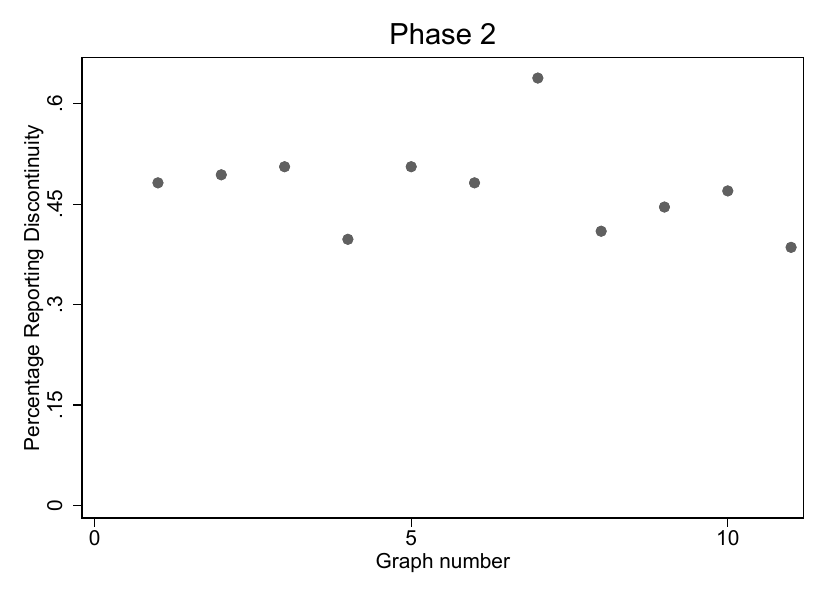}
        \includegraphics[width=3.15in]{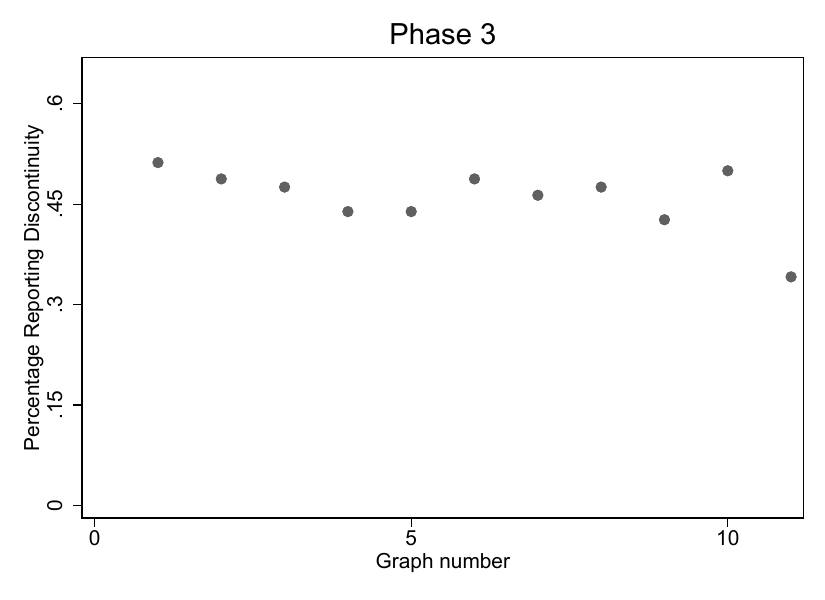}
        \includegraphics[width=3.15in]{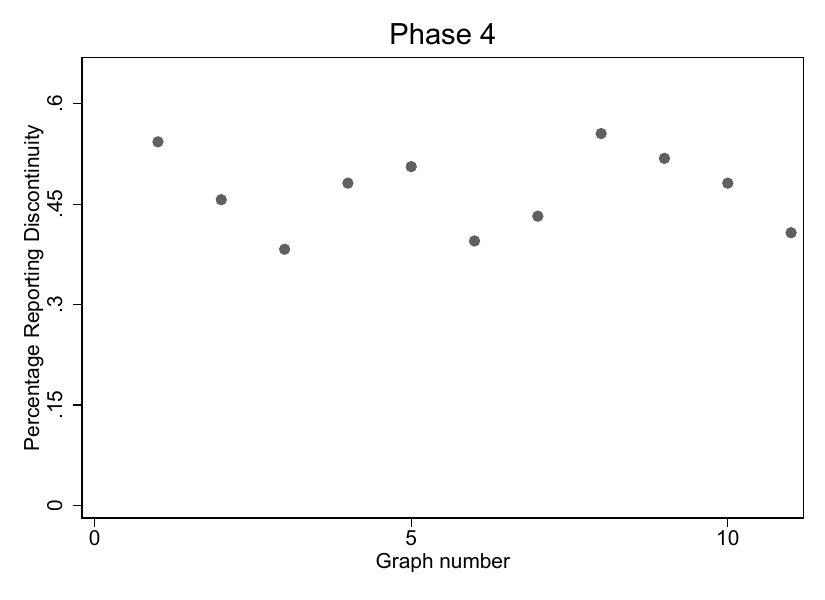}
        \includegraphics[width=3.15in]{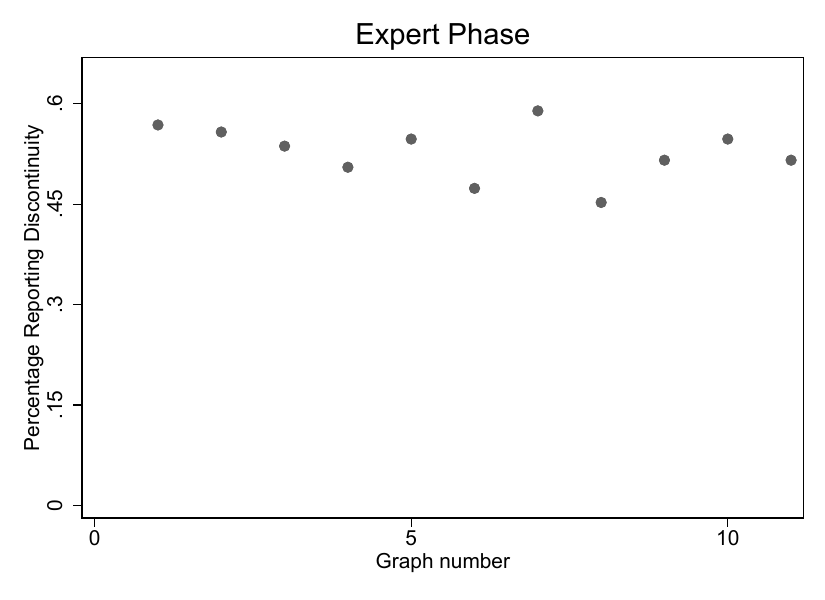}
    \end{figure}
    \noindent {\scriptsize{}Notes: This figure plots participants' likelihood of reporting a discontinuity over the course of the 11 graphs seen in the study by experimental phase.}
 
    \begin{figure}[H]
        \caption{\label{fig:RDD-Power-Rescaled}Power Functions against Asymptotic $t$-Statistics by Phase}
        \centering
        \includegraphics[width=3.5in]{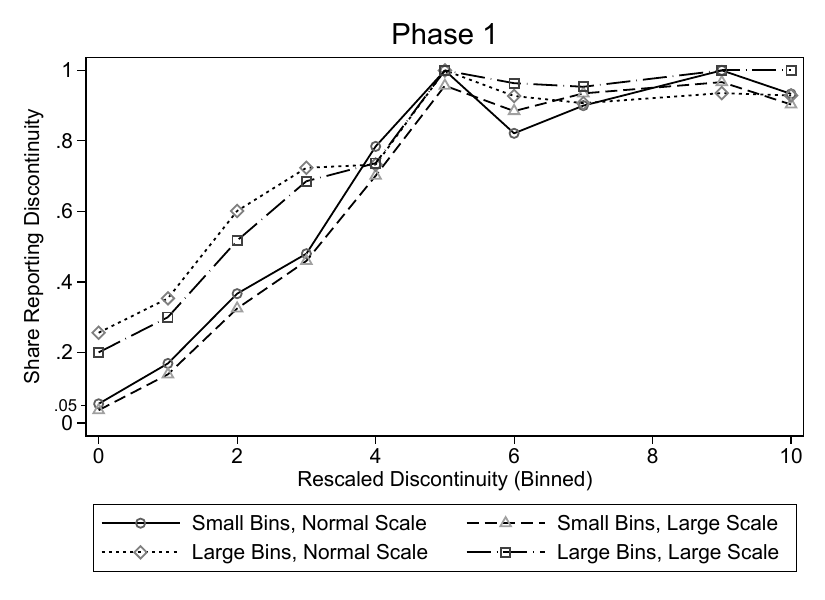}
        \includegraphics[width=3.5in]{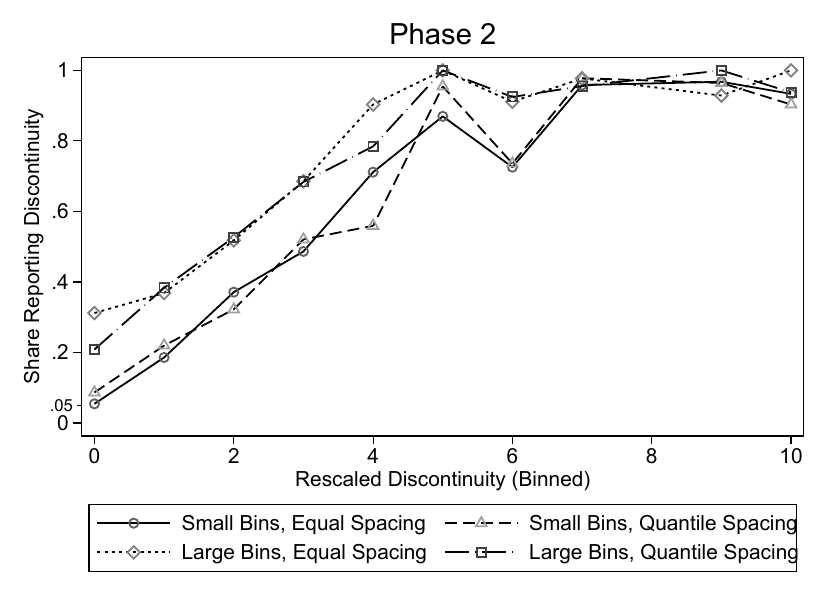}
        \includegraphics[width=3.5in]{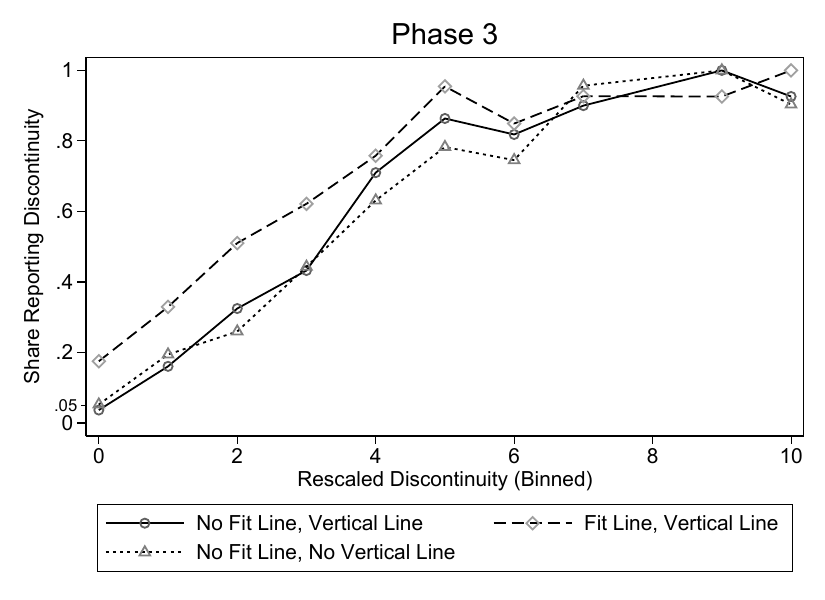}\includegraphics[width=3.5in]{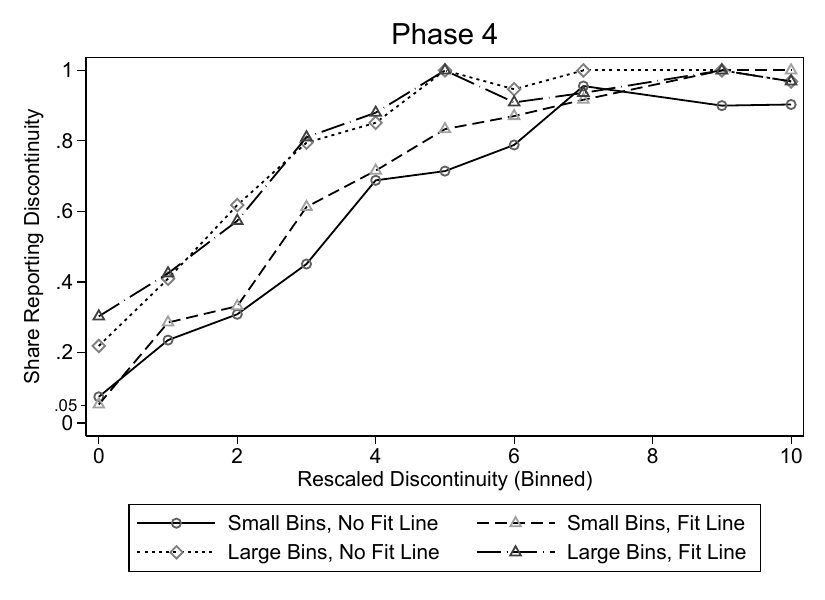}
        \begin{minipage}[t]{1\columnwidth}%
        \noindent {\scriptsize{}Notes: Plotted are power functions from the main four non-expert experiments broken down by phase. The power functions are defined in Section \ref{sec:conceptual-framework}. The $x$-axis represents binned values of the asymptotic $t$-statistic discussed in Appendix \ref{sec:t-stat}. \textit{Large bins} corresponds to the \citet{Calonicoetal2015} bin width selector that minimizes the integrated mean squared error of the bin-average estimators of the conditional expectation function; \textit{small bins} corresponds to the \citet{Calonicoetal2015} bin width selector that aims to approximate the variability of the underlying data; \textit{quantile spacing} indicates that bins were spaced by quantiles rather than evenly spaced; \textit{fit line} indicates the presence of parametric fit lines; \textit{vertical line} indicates the presence of a vertical line at the policy threshold; \textit{normal scale} corresponds to the default $y$-axis scaling using Stata 14; \textit{large scale} doubles that default range.}%
    \end{minipage}{\scriptsize\par}
    \end{figure}

    \begin{figure}[H]
        \caption{\label{fig:RDD-Power-By-DGP-Rescaled}Power Functions by DGP and Phase against Asymptotic $t$-Statistics}
        \centering
        \includegraphics[width=3.5in]{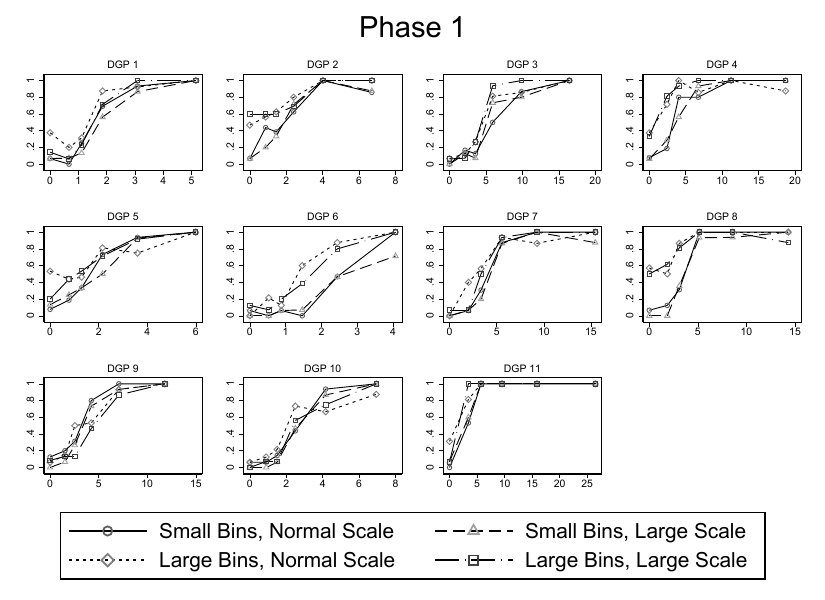}
        \includegraphics[width=3.5in]{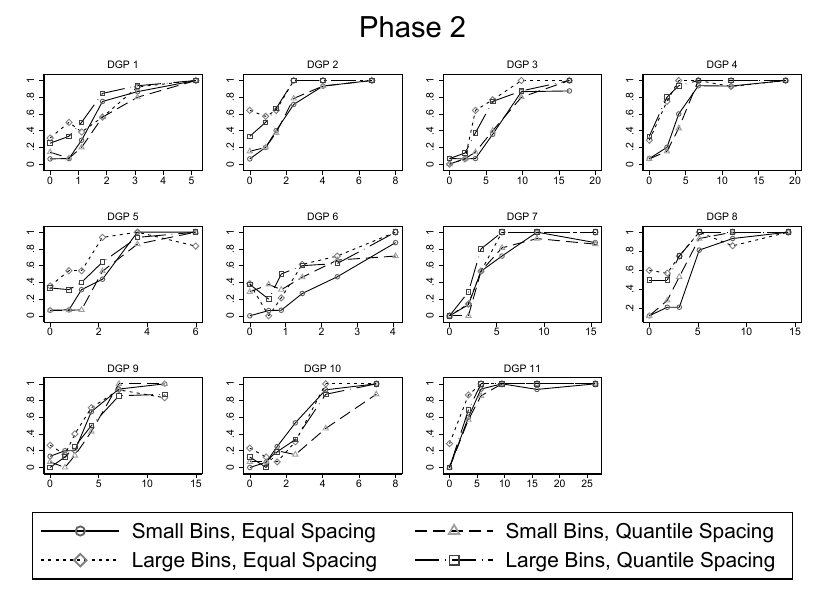}
        \includegraphics[width=3.5in]{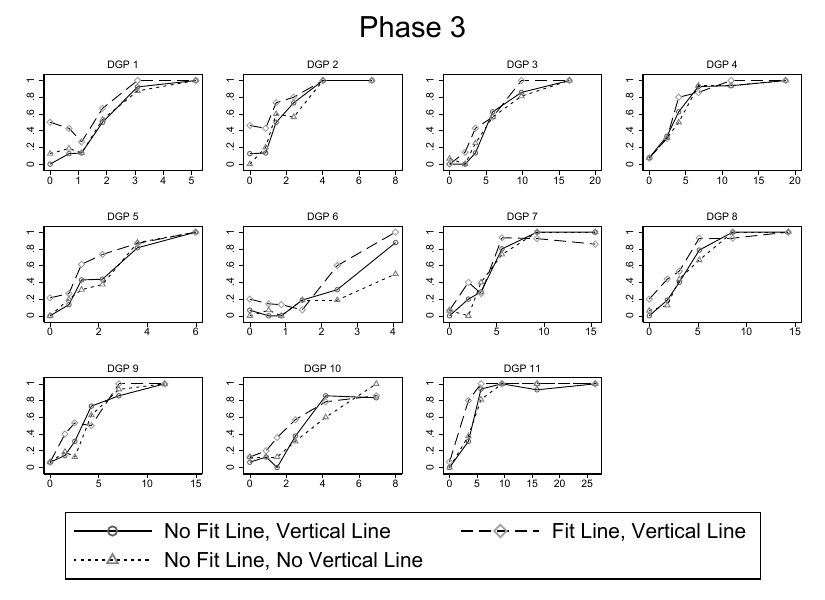}
        \includegraphics[width=3.5in]{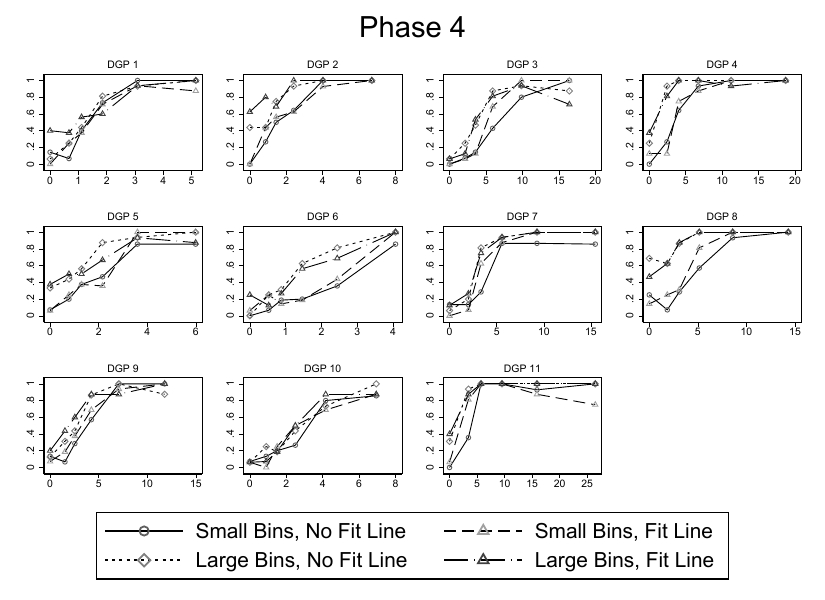}
        \noindent\begin{minipage}[t]{1\columnwidth}%
            \noindent {\scriptsize{}Notes: Plotted are power functions from the main four non-expert experiments broken down by DGP per phase. The power functions are defined in Section \ref{sec:conceptual-framework}. The $x$-axis represents binned values of the asymptotic $t$-statistic discussed in Appendix \ref{sec:t-stat}. The $y$-axis represents the share of respondents classifying a graph as having a discontinuity at the policy threshold. \textit{Large bins} corresponds to the \citet{Calonicoetal2015} bin width selector that minimizes the integrated mean squared error of the bin-average estimators of the conditional expectation function; \textit{small bins} corresponds to the \citet{Calonicoetal2015} bin width selector that aims to approximate the variability of the underlying data; \textit{quantile spacing} indicates that bins were spaced by quantiles rather than evenly spaced; \textit{fit line} indicates the presence of parametric fit lines; \textit{vertical line} indicates the presence of a vertical line at the policy threshold; \textit{normal scale} corresponds to the default $y$-axis scaling using Stata 14; \textit{large scale} doubles that default range.}
        \end{minipage}{\scriptsize\par}
    \end{figure}
\end{landscape}

%\section{Visual Inference Difference Graphs}\label{sec:difference-graphs}

%\setcounter{figure}{0}
%\renewcommand{\thefigure}{H\arabic{figure}} 

%\setcounter{table}{0}
%\renewcommand{\thetable}{H.\arabic{table}}

\begin{figure}[H]
    \caption{\label{fig:RDD-Experts-Nonexperts-Diff}Expert vs Non-Expert Performance Differences}
    \centering
    \includegraphics[width=3.25in]{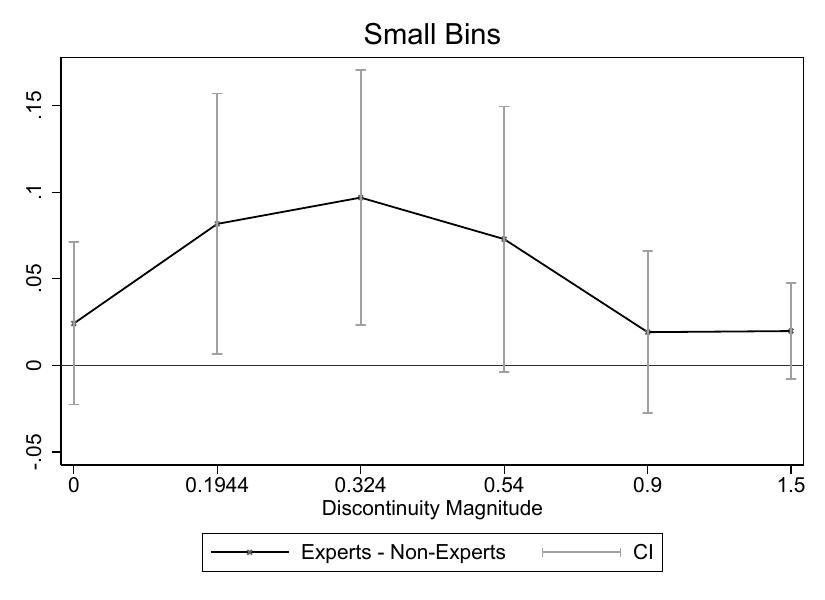}\includegraphics[width=3.25in]{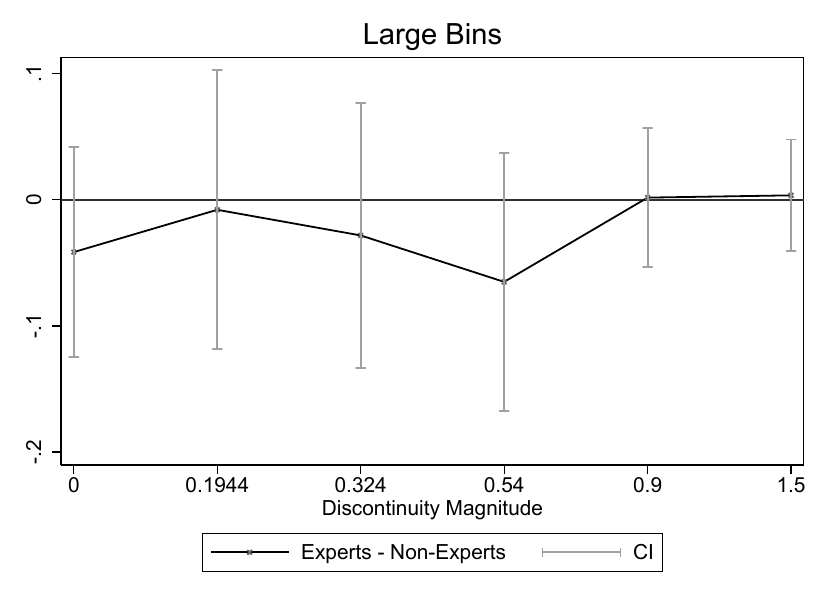}
    \begin{minipage}[t]{\textwidth}%
        \noindent {\scriptsize{}Notes: These figures plot the difference in the power functions between experts and non-experts from Figure \ref{fig:Experts-Nonexperts-ROC}. We compute 95\% confidence intervals using the large sample approximation described at the end of Appendix \ref{sec:Proofs} and by assuming independence between the experts and non-experts. \textit{Large bins} corresponds to the \citet{Calonicoetal2015} bin width selector that minimizes the integrated mean squared error of the bin-average estimators of the conditional expectation function; \textit{small bins} corresponds to the \citet{Calonicoetal2015} bin width selector that aims to approximate the variability of the underlying data.}%
    \end{minipage}{\scriptsize\par}
\end{figure}

\begin{landscape}
    \begin{figure}[H]
        \caption{\label{fig:RDD-Experts-Vs-All-Diff}Expert Visual vs Econometric Inference Power Function Differences}
        \centering
        \includegraphics[width=3.25in]{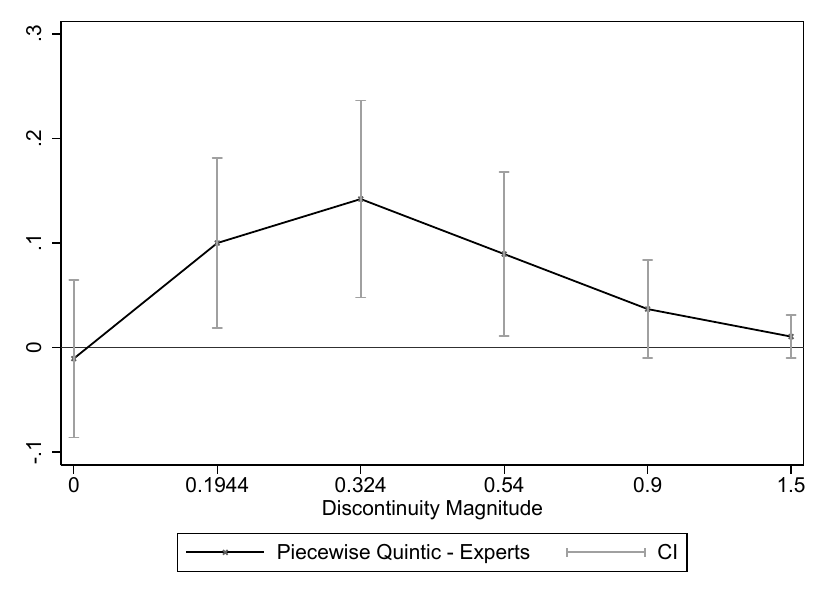}
        \includegraphics[width=3.25in]{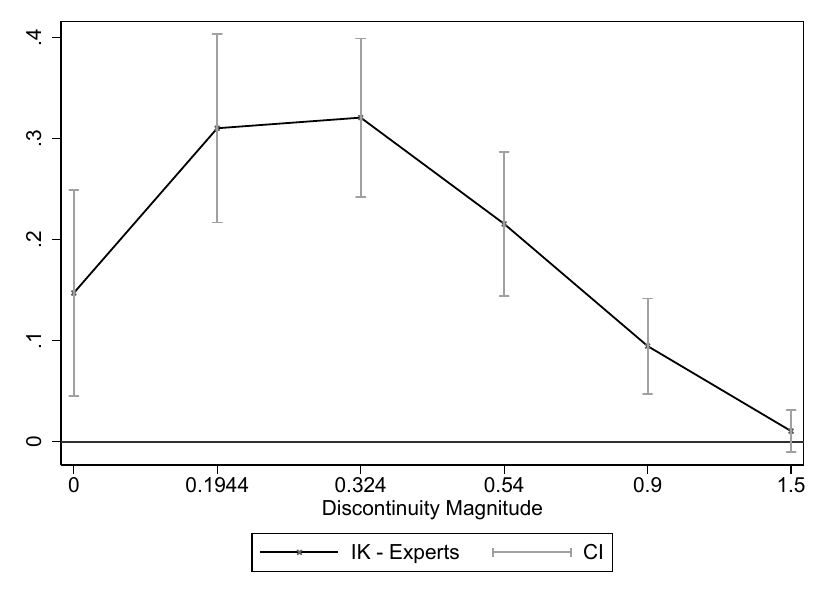}
        \includegraphics[width=3.25in]{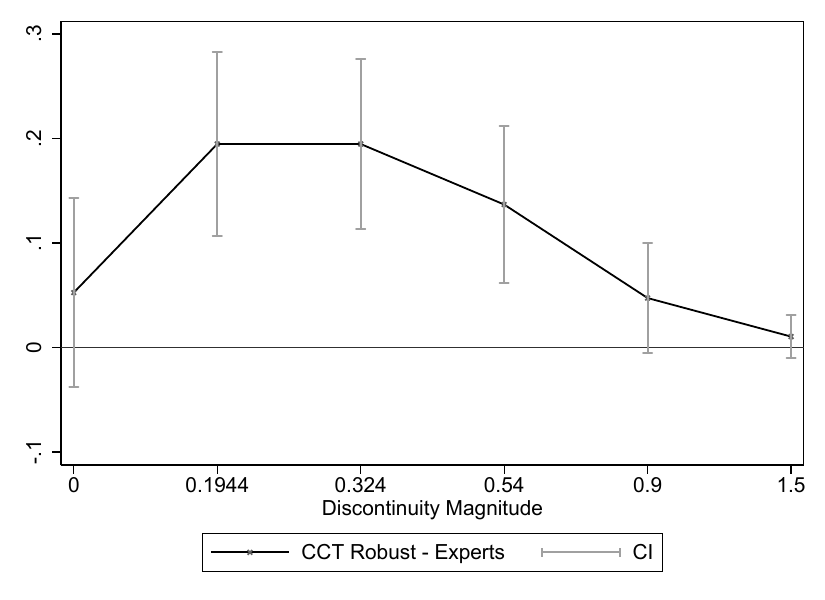}
        \includegraphics[width=3.25in]{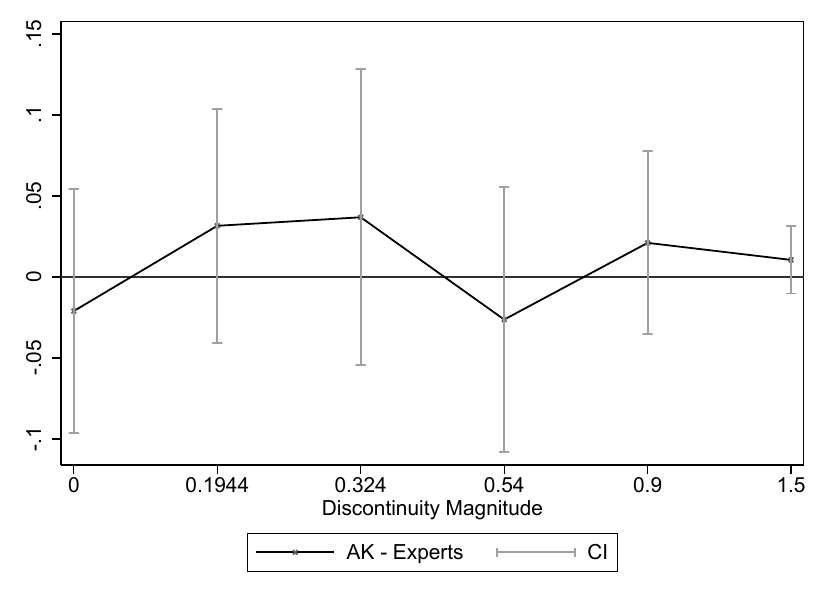}
        \begin{minipage}[t]{\textwidth}%
            \noindent {\scriptsize{}Notes: This figure plots the difference in the power functions between experts' visual inference and four econometric inference procedures from the left panel of Figure \ref{fig:RDD-Power-Experts-vs-All}. We compute two-way cluster-robust confidence intervals for these differences via a stacked regression where we account for the potential correlation between visual and econometric inferences at the dataset level (there are 88 datasets in total) and in visual inferences for the same individual across graphs---see Appendix \ref{subsec:two-way-clustering} for details. Piecewise Quintic uses a correctly specified regression model with global piecewise quintics above and below the treatment threshold and assuming homoskedasticity. IK is based on a local linear estimator using the IK bandwidth. CCT Robust is the default RDD inference procedure from CCT's }\texttt{\scriptsize{}rdrobust}{\scriptsize{}. AK uses the }\texttt{\scriptsize{}RDHonest}{\scriptsize{} procedure with the rule-of-thumb bound on each DGP's second derivative..}%
        \end{minipage}{\scriptsize\par}
    \end{figure}

    \begin{figure}[H]
        \caption{\label{fig:RDD-Experts-Vs-All-Size-Adjusted-Diff}Expert Visual vs Type I Error-Adjusted Econometric Inference Power Function Differences}
        \centering
        \includegraphics[width=3.25in]{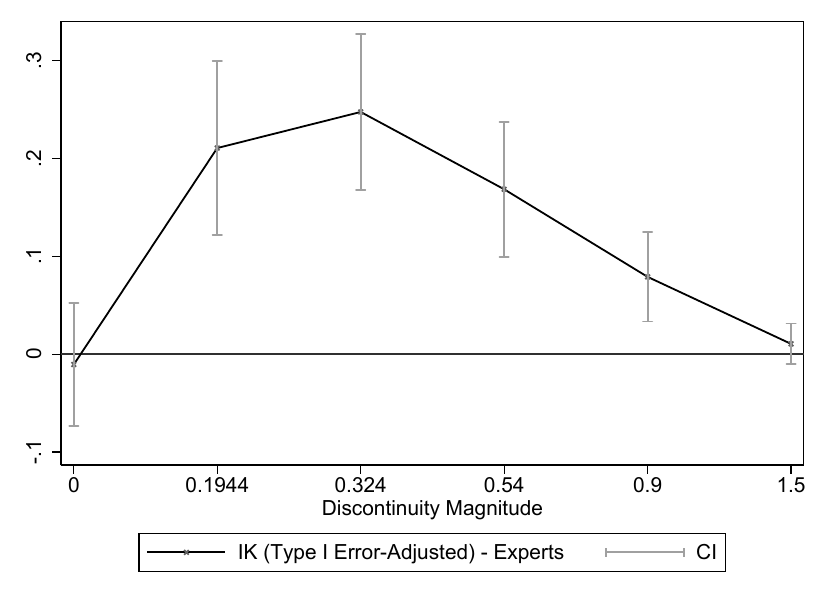}
        \includegraphics[width=3.25in]{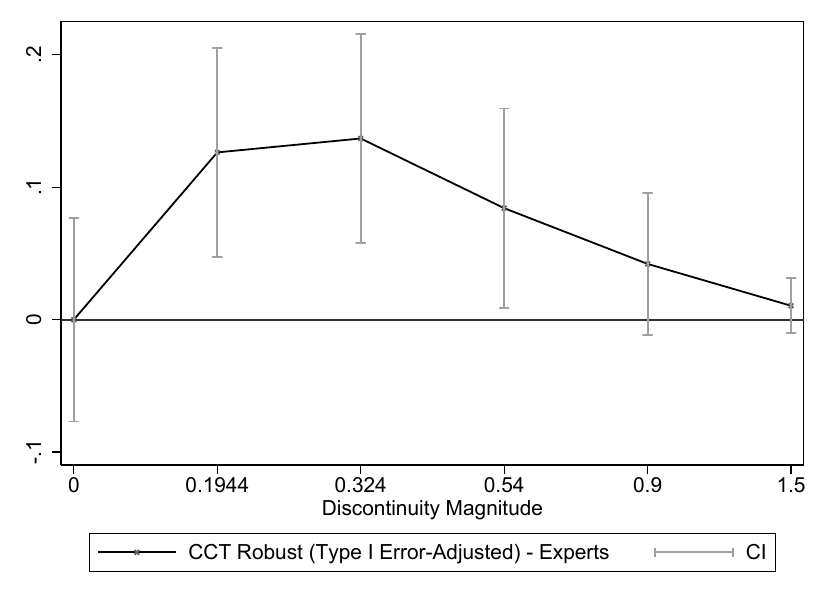}
        \begin{minipage}[t]{\textwidth}%
            \noindent {\scriptsize{}Notes: This figure plots the difference in the power functions between experts' visual inference and the type I error rate-adjusted IK and CCT procedures from the right panel of Figure \ref{fig:RDD-Power-Experts-vs-All}. IK is based on a local linear estimator using the IK bandwidth. CCT Robust is the default RDD inference procedure from CCT's }\texttt{\scriptsize{}rdrobust}{\scriptsize{}. We compute
            two-way cluster-robust confidence intervals for these differences via a stacked regression where we account for the potential correlation between visual and econometric inferences at the dataset level (there are 88 datasets in total) and in visual inferences for the same individual across graphs---see Appendix \ref{subsec:two-way-clustering} for details.}%
        \end{minipage}{\scriptsize\par}
    \end{figure}
\end{landscape}

\end{document}